\pdfoutput=1
\documentclass[10pt]{report}
\usepackage[utf8]{inputenc}

\usepackage{amsmath}
\DeclareMathOperator\arctanh{arctanh}

\usepackage{tikz}

\usepackage{amsfonts}

\usepackage{amsthm}

\usepackage{enumitem}

\usepackage{cite}

\theoremstyle{definition}
\newtheorem{definition}{Definition}[section]
\newtheorem{theorem}{Theorem}[chapter]
\newtheorem{corollary}{Corollary}[theorem]


\title{Analytical Advances about the Apex Field Enhancement Factor of a Hemisphere on a Post Model}
\author{Adson Soares}

\date{\today}

\begin{document}

\begin{titlepage}
\begin{center}
\vspace*{1cm}
\large
\textbf{Universidade Federal da Bahia} \\
\textbf{Instituto de Física} \\
\textbf{Programa de Pós-graduação em Física}\\

\vspace{1.5cm}
\textbf{Adson Soares de Souza}

\vspace{1.5cm}
\LARGE
\textbf{Analytical Advances about the Apex Field Enhancement Factor of a Hemisphere on a Post Model}

\normalsize
\vfill
\vspace{0.8cm}
\vspace{1.5cm}
Salvador,\\
2019
\end{center}
\end{titlepage}

\begin{titlepage}
\begin{center}
\vspace*{1cm}
\large
\textbf{Universidade Federal da Bahia} \\
\textbf{Instituto de Física} \\
\textbf{Programa de Pós-graduação em Física}\\

\vspace{1.5cm}
\textbf{Adson Soares de Souza}

\vspace{1.5cm}
\LARGE
\textbf{Analytical Advances about the apex Field Enhancement Factor of a Hemisphere on a Post Model}

\normalsize
\vfill
Dissertation submitted to Programa de Pós-graduação em Física, Instituto de Física, Universidade Federal da Bahia, as a partial requirement for obtaining a Master's Degree in Physics.

\vspace{0.8cm}
Supervisor: Prof. Dr. Thiago Albuquerque de Assis.

\vspace{1.5cm}
Salvador,\\
2019
\end{center}
\end{titlepage}

\newpage






\chapter*{Abstract}
In this dissertation, we analytically study the apex field enhancement factor (FEF), $\gamma_a$, by constructing a method which consists in minimizing an error function defined as to measure the deviation of the potential at the boundary, yielding approximate axial multipole coefficients of a general axial-symmetric conducting emitter shape, on which the apex FEF can be calculated by summing its respective Legendre series. Such method is analytically applied for a conducting hemisphere on a flat plate, confirming the known result of $\gamma_a = 3$. Also, it is applied on a hemi-ellipsoid on a plate where the values of the apex FEF are compared with the ones extracted from the analytical expression. Then, the method is applied for the hemisphere on a cylindrical post (HCP) model. In this case, to analytically estimate the apex FEF from first principles is a problem of considerable complexity.

Despite the slow convergence of the apex FEF, useful analytical conclusions are drawn and explored, such as, it is confirmed that all even multipole contributions of the HCP model are zero, which in turn leads to restrictions on the charge density distribution: it will be shown the surface charge density must be an odd function with respect to height in an equivalent system. Also, expressions found for the apex FEF depend explicitly on the aspect ratio, that is, the ratio of height by base radius. Using the dominant multipole contribution, the dipole, at sufficient large distances, it is shown that, for two interacting emitters, as their separation distance increases, the fractional change in apex FEF, $\delta$, decreases following a power law with exponent $-3$. The result is extended for conducting emitters having an arbitrary axially-symmetric shape, where it is also shown $\delta$ has a pre-factor depending on geometry, confirming the tendency observed in recent analytical and numerical results.

\chapter*{Resumo}
Nessa dissertação, estudamos analiticamente o fator de amplificação por efeito de campo eletrostático (FEF), $\gamma_a$, por construir um método que consiste em minimizar uma função de erro definida para medir o desvio do potencial na superficie, produzindo coeficientes multipolos axiais aproximados de uma superfície emissora condutora axialmente simetrica, no qual o FEF no ápice pode ser obtido somando a respectiva série de Legendre. Tal método é aplicado analiticamente para uma semi-esfera condutora sobre uma placa plana, confirmando o resultado conhecido $\gamma_a=3$. Ademais, o método foi aplicado a um semi-elipsóide sobre uma placa, onde os valores do FEF no ápice são comparados com os valores extraídos a partir da expressão analítica. Em seguida, o método é aplicado para o modelo de uma semi-esfera sobre um poste cilíndrico (modelo HCP). Neste caso, estimar analiticamente o FEF no ápice a partir de primeiros princípios é um problema de considerável complexidade.

Apesar da lenta convergência do FEF no ápice, conclusões analíticas úteis são tiradas e exploradas, como por exemplo, é confirmado que todas as contribuições de multipolos pares do modelo HCP são nulas, o que por sua vez leva a restrições na distribuição de densidade de carga: será mostrado que a densidade de carga superficial deve ser uma função ímpar em relação à altura em um sistema equivalente. Além disso, expressões encontradas para o FEF no ápice dependem explicitamente da razão de aspecto, ou seja, a razão da altura com relação ao raio da base. Utilizando a contribuição de multipolo dominante, o dipolo, a dist\^{a}ncias suficientemente grandes, mostra-se que, para dois emissores interagentes, à medida que sua distância de separação aumenta, a variação fracionária do FEF no ápice, $\delta$, diminui seguindo uma lei de potência com expoente $-3$. O resultado é estendido para emissores condutores com geometria arbitrária eixo-simétricos, onde também é mostrado que $\delta$ tem um pré-fator dependendo da geometria, confirmando a tendência observada em resultados recentes, analíticos e numéricos.

\chapter*{Acknowledgements}
I'd like to thank a couple people, which, in a way or another, helped me arrive where I am right now.

Jehovah, the God of the Bible, for everything He has done for me.

Professor Thierry Jacques Lemaire, whose lectures (besides really awesome) were the primary reason why I decided to choose the physics course when I was an undergraduate student, still in engineering.

Professors Mario Cezar Ferreira Gomes Bertin, Newton Barros de Oliveira, Gildemar Carneiro dos Santos, Thiago Albuquerque de Assis, and Diego Catalano Ferraioli, for the great lectures when I was an undergraduate student. These were really enjoyable classes.

Professor Diego Catalano Ferraioli for encouraging me and giving me the initial kick, in a matter of speaking, for me to start the master program in the first place.

Professors Julien Chopin and Denis Gilbert Francis David for the awesome lectures while I was in the graduate program, full of really interesting math, calculations, and interesting topics outside the main books.

My supervisor, Thiago Albuquerque de Assis, for exposing me to scientific literature and some open problems, for showing me how things work in the scientific world, for showing me some aspects about what it takes to be a good scientist, for the (almost always) weekly meetings with insightful talks, and especially for always being motivating.

My dad, for his support and encouragement to write and finish this dissertation and the research.

Also, I'd like to thank CNPq for finantial support.


\newpage
\chapter*{List of abbreviations and symbols}
\newlist{abbrv}{itemize}{1}
\setlist[abbrv,1]{label=,labelwidth=1in,align=parleft,itemsep=0.1\baselineskip,leftmargin=!}

\begin{abbrv}
\item[FEF] Field Enhancement Factor
\item[HCP] Hemisphere on a cylindrical post
\end{abbrv}

\begin{abbrv}
\item[$\mathbb{N}$] The set of natural numbers. That is, $\mathbb{N} = \{1, 2, 3, \dots\}$.
\item[$\mathbb{Z}$] The set of integer numbers. That is, $\mathbb{Z} = \{\dots, -3, -2, -1, 0, 1, 2, 3, \dots\}$.
\item[$h$] Height of an emitter.
\item[$R$] Base radius of an emitter.
\item[$\nu$] Aspect ratio of an emitter, defined as $\nu = h/R$.
\item[$\ell$] Defined as $\ell = h - R$.
\item[$V$] Electrostatic potential. See (\ref{potential}).
\item[$\theta$] Azimuthal angle in spherical coordinates.
\item[$r$] Radial distance in spherical coordinates.
\item[$J$] Spherical element of area, defined as $dS = Jd\theta d\phi$.
\item[$J_A$] Component of element of area, defined as $J = J_A r^2\sin\theta$.
\item[$P_l(x)$] Legendre Polynomials.
\item[$E_0$] Magnitude of applied electrostatic field in positive z-direction.
\item[$F_M$] Magnitude of applied electrostatic field in negative z-direction (that is, $F_M = -E_0$).
\item[$\Sigma$] Error function measuring deviation from real potential, defined in (\ref{error_function}).
\item[$A_l$] Legendre coefficients of the potential, defined in (\ref{potential}).
\item[$A_l^{(n)}$] Approximate value of $A_l$, when calculated from linear system (\ref{linear_system}), of dimension $(n+1)\times (n+1)$. Note: by definition, for some fixed $n$, it is true that $A_{l+n}^{(n)} = 0$ for all $l\in\mathbb{N}$.
\item[$\tilde A_l$] Defined as $\tilde A_l = A_l / E_0$.
\item[$\tilde A_l^{(n)}$] Defined as $\tilde A_l^{(n)} = A_l^{(n)} / E_0$.
\item[$\gamma_a$] Field Enhancement Factor at the apex of an emitter.
\item[$\gamma_a^{(n)}$] Approximate value of $\gamma_a$, by plugging $\tilde A_l^{(n)}$ at (\ref{FEF}).
\item[$G_l$] G-Integrals, as defined in (\ref{IG_integrals}).
\item[$I_{ij}$] I-Integrals, as defined in (\ref{IG_integrals}).
\item[$Q_l$] The $l$-th axial multipole moment. See (\ref{Ql_Al}) and (\ref{multipole_moment}).
\item[$M_l$] The $l$-th multipole moment of a line of charge. See (\ref{multipole_moment_line_of_charge}).
\item[$c$] The distance between the symmetry-axial lines of two emitters.
\item[$\delta$] Fractional reduction in apex FEF. 
\item[$\Phi(\mathbf r)$] Newtonian Kernel, defined at (\ref{newtonian_kernel}).
\item[$h(\mathbf r)$] Resolvent Kernel, defined implicitly by (\ref{fredholm_integral_equation}). See also (\ref{liouville_neumann_series}).
\item[$W$] Interacting first order contributions for the resolvent kernel.
\end{abbrv}


\newpage
\tableofcontents
\newpage

\chapter{Introduction}
In the context of field emission at low temperatures, often called cold field electron emission (CFE), protrusions in the micro or nano scale over a flat conducting plate, experiences an amplification of the external applied electrostatic field at some parts of its surface \cite{richard}. A common characterization parameter of such phenomena is the local field enhancement factor (FEF) at the apex, denoted by $\gamma_a = F_a/F_M$, which is the ratio between the field at the apex of the surface $F_a$, and the external macroscopic field $F_M$. For an ideal emitter, $F_M$ is taken to be $F_M = V/d$, where $V$ is the voltage difference between the two plates, and $d$ its separation distance. If the distance $d$ is large enough, the apex FEF is known to depend only on the geometry of the protrusion \cite{richard}.

The most common method of measuring emitter characteristics, is by means of a current-voltage (IV) plot: experimental control over the plate voltage $V$ is assumed, and the emitted current $I$ can be measured, for instance under high ultra-vacuum conditions. Field emission can be modeled by a Fowler-Nordheim (FN)-type equation: $J = AF_M^2\exp(-B/F_M)$ \cite{FN1928,Forbes2007,jordanpaper}, where $A$ and $B$ are parameters which may depend on the macroscopic field $F_M$, $J$ is a current density, which in some contexts is taken as a macroscopic current density $J_M$, some characteristic current density at a characteristic point $J_{kc}$, or the local current density $J_{l}$, depending on what is wanted to be analyzed, or measured.

The emission is called orthodox if some conditions are met \cite{richardemission}, such as: (1) uniform voltages on the electrodes, (2) the measured current is only due to CFE phenomena, (3) deep tunneling, that is, the Fermi level is ``much" less than the top of potential barrier in which they tunnel, (4) the work function of the emitter is constant across its surface.

The parameters $A$ and $B$ are relevant. For instance, one can model the local current density by using a one dimensional exact triangular (ET) potential barrier. It will depend on the probability of an electron to tunnel the barrier, and, then, $J$ is given by an FN-type equation, where $B$ depends on $\gamma_a$, and $A$ can be assumed to be a constant dependent on fundamental physical and mathematical constants. To extract the coefficients $A$ and $B$, one can plot $\ln(I/V^2)$ vs $1/V$, which is known as a FN plot. If the emission is orthodox, then, a FN-plot will be a nearly straight line, where the slope will depend on $B$. This procedure allows the FEF to be measured experimentally, by looking at the slope of a nearly linear FN type plot \cite{richardemission, jordanpaper}.

As for a technological example: a protrusion might be engineered into a flat surface, with the aim of causing field emission at lower external fields. For instance, post-like carbon nanotubes (CNTs) may be used as a single or a large area emitter. That indeed has several applications, such as, the manufacturing of microwave amplifiers, electron microscopy, field emission displays and other applications \cite{application1,application2,application3,application4,techcnt}. There has been several works that use a classical model to estimate the apex FEF of arrays of CNTs \cite{cntfef1,cntfef2,cntfef3}, as well as experimental measurements \cite{cntfefexp1,cntfefexp2,cntfefexp3}.

In an attempt to model an isolated post-like protrusion, like a CNT, several models have been made, treating the protrusion as a classical conducting entity, with a specific geometry \cite{richard}. Some of them will be listed as follows.

The \textit{hemisphere on a conducting plate} model assumes a hemisphere of radius $R$ above the flat plate, and attempts to calculate the apex FEF. This case can be analytically solved from first principles \cite{jackson}, yielding a known $\gamma_a = 3$. Using it to model the FEF of a general emitter creates some problems: the model fails to capture the height of the emitter.

The \textit{hemi-ellipsoid on a conducting plate} model, as shown in figure (1.1a),  is a conducting revolution hemi-ellipsoid of base radius $R$ and height $h$, placed over a flat conducting plate. It can also be solved analytically using spheroidal coordinate system \cite{smyth}, thus $\gamma_a$ is known. Such model is more realistic than the former, because it can have any height, however, the greater the height, the greater the Gaussian curvature at the apex, and thus, the greater the charge density, which in turn, leads to large (and possibly unrealistic) apex FEFs.

The \textit{hemisphere on a cylindrical post} (HCP) model, as shown in figure (1.1b), is a conducting hemisphere of radius $R$, over a cylindrical shape of height $\ell$, over a plate. The overall height of the structure is often called $h = R + \ell$. Such model, in turn, has a smooth curvature at the apex, and can have any height. On the other hand, this model lacks an analytical solution in the literature, though accurate numerical computations have been done \cite{thiago}.

In the case of a CNT, there's no reason to believe that classical models would be effective, as quantum effects might be relevant. However, a recent paper \cite{classicalcnt} have numerically explored the possibility for determining a characteristic FEF of a CNT using density functional theory (DFT) \cite{DFT}, a quantum mechanical method, and compared with the the local FEFs of the HCP model. The authors found that, depending on how the FEF is defined in the quantum case, there exists a relation between the FEF calculated from DFT, and the FEF calculated from the classical HCP model. This highlights the relevance of classical models, which will be the focus of this dissertation.

It has been focus of several works to find an expression for the FEF in the context of the HCP model. Some of them includes:
\begin{itemize}
\item $\gamma_a \approx 2 + \nu$  (Eq. (16) of \cite{richard})
\item $\gamma_a \approx \nu$ if $\nu\gg 1$ (Eq. (17) of \cite{richard})
\item $\gamma_a \approx 1.2(2.15 + \nu)^{0.90}$ (Eq. (6) of \cite{Edgcombe2001b}) (Eq. (20) of \cite{richard})
\item $\gamma_a \approx 1.0782(4.7 + \nu)^{0.9152}$ (Eq. (2) of \cite{cntfef2})
\item $\gamma_a \approx 5.93 + 0.73\nu - 0.0001\nu^2$ (Eq. (22) of \cite{richard})
\end{itemize}

In this work, a mathematical method is constructed in order to calculate the potential over an emitter shape with axial symmetry from first principles, and the method is applied for the case of a hemisphere on a plate, a hemiellipsoid on a plate, and a HCP on a plate. The method allows analytical conclusions to be drawn, and these will be explored.

In addition, interacting emitters, either in the form of finite arrays, or simply two of them separated at a distance $c$ from each other, experiences a reduction in the FEF: It falls from $\gamma_a$ to $\gamma_a'$, where the fractional change in apex FEF, $\delta = (\gamma_a' - \gamma_a)/\gamma_a$, was attempted to be estimated by several works in different contexts.

\begin{itemize}
\item $-\delta = \exp\left(-1.1586c\frac{h_0}{h}\right)$, in ref \cite{Bonard2001}, where $h_0 = 2\mu m$. This formula is dimensionally incorrect, as the argument inside the exponencial is not adimensional.
\item $-\delta = \exp\left(-2.3172\frac{c}{h}\right)$, in ref \cite{Jo2003}. This fixes the dimensional error from the previous formula. However, ref \cite{Harris2015} says such formula doesn't provide a good fit for arrays of various geometries.
\item $-\delta = \exp\left(a\left[\frac{c}{h}\right]^k\right)$, in ref \cite{Harris2015}. Parameter $k$ was introduced in order to better fit arrays. Furthermore, \cite{Harris2015} claims there's agreement with line of charge (LCM) model.
\item $-\delta \sim 2(r/\ell)\left(\frac{\ell}{c}\right)^3$, from \cite{Richard2016}. This was obtained by a more analytical approach using the floating sphere at emitter plane potential (FSEPP) model, as opposed to curve fitting (as was done with all the others above it). The striking contrast is, instead of an exponential behavior, a power law was predicted, with exponent $-3$.
\item Ref \cite{Assis2018} numerically estimated $\delta$ for pairs of 6 different emitter geometries, and found the $c^{-3}$ law.
\item Ref \cite{DallAgnol2018} used method of images to find $\delta\sim c^{-3}$.
\item Ref \cite{Biswas2018} used LCM models to find $\delta\sim c^{-3}$.
\item Ref \cite{Forbes2018} argues about universality of $\delta\sim c^{-3}$.
\end{itemize}

This work will show the $c^{-3}$ law is valid for any axially-symmetric emitter, under an independent approximation (charge density doesn't change much as $c$ varies), and under approximation $c/h\gg 1$. Using these approximations, it will be shown that $-\delta = K\cdot (h/c)^{3}$, where $h$ is the height of the emitter. Furthermore, it will be shown the pre-factor $K$ is not 1, and in fact, depends on the aspect ratio, that is, $K = K(\nu)$, where the functional form of $K$ will be given.

This dissertation is organized as follows: Chapter 02 explains the method, its connection with the multipole coefficients, and proves how multipole moments scales linearly with the external macroscopic field. In chapter 03, the method is applied to model shapes in which the apex FEF is known: the hemisphere on a plate model, and the hemi-ellipsoid on a plate model. In chapter 04, the method is applied to the HCP model. It will be shown that even multipoles are zero for the system HCP emitter on a plate, and some consequences of it. Additionally, an approximate first-order expression for the FEF will be calculated for the range $h\ll R$. Chapter 05 will prove some theorems for a general emitter, as long as some conditions are met. Chapter 06 will conclude the dissertation, with a review of the main results obtained, and perspectives of future work. For the sake of transparency, there will be several appendixes, in which some technical derivations are shown in detail.

\begin{figure} \label{hcp_spheroid_figure}
\includegraphics[scale=0.20]{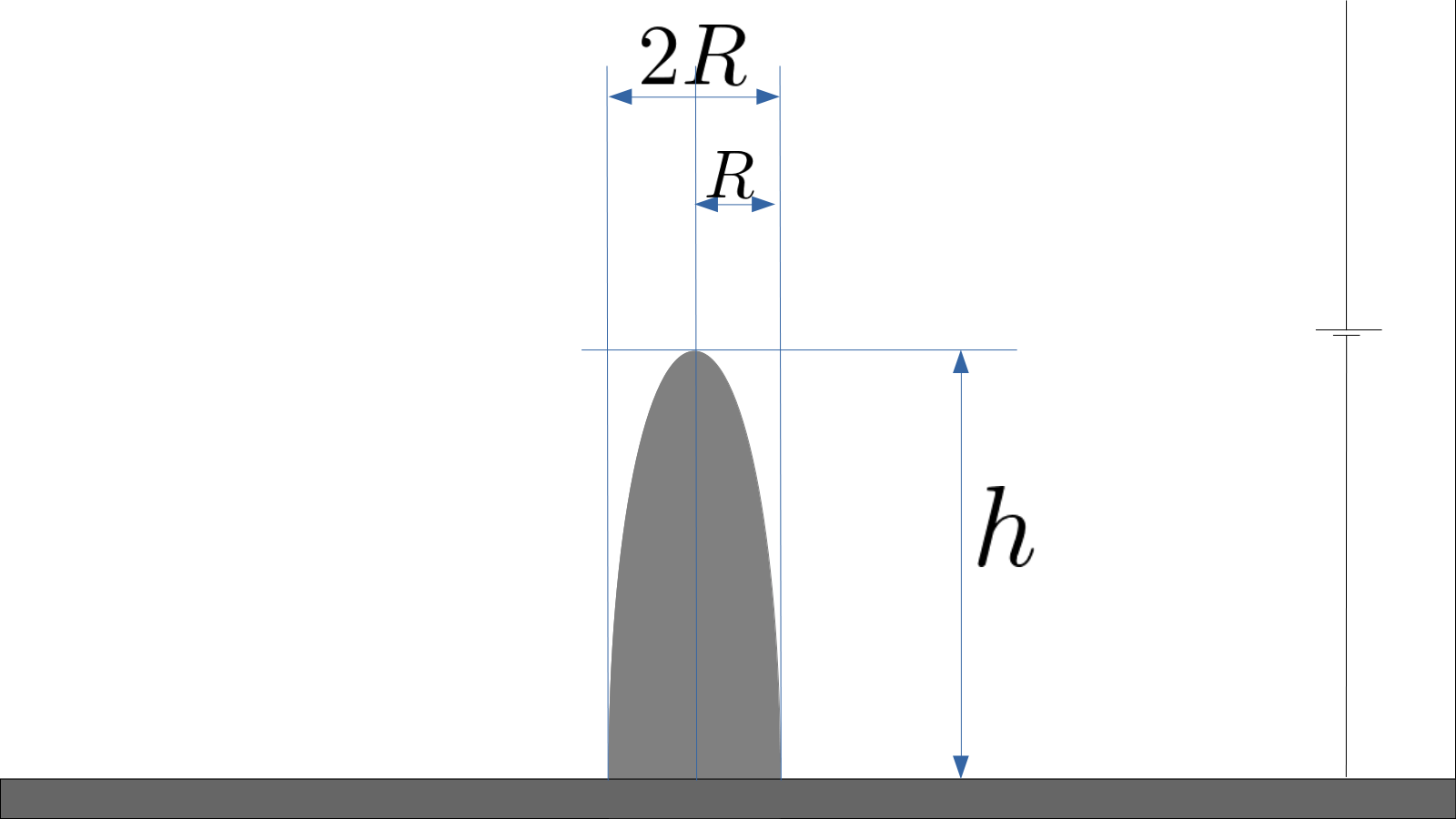}
\includegraphics[scale=0.20]{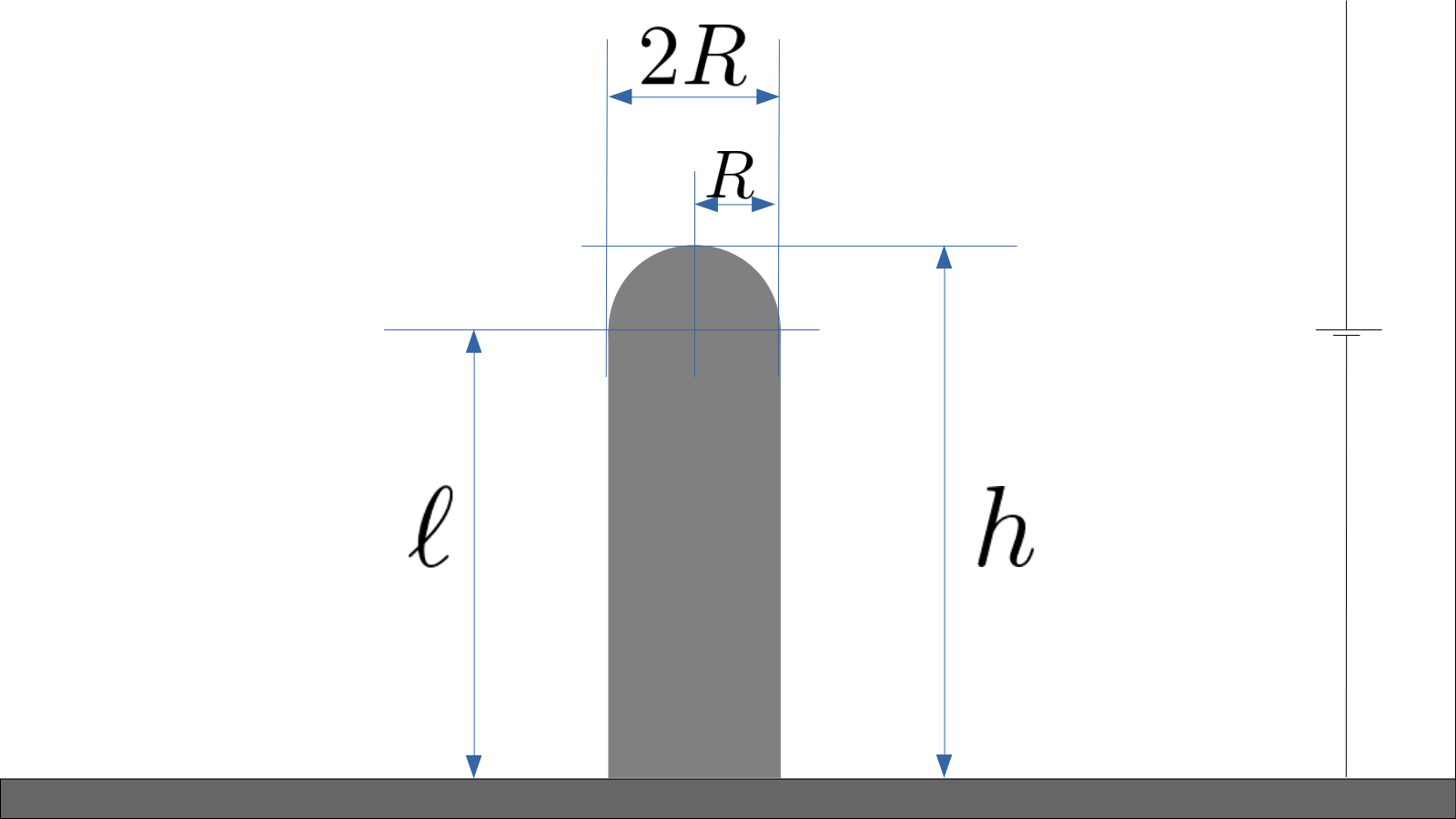}
\caption{(a) Hemiellipsoid over a plate model. (b) Hemisphere on a post (HCP) model.}
\end{figure}

\newpage
\chapter{Method}
In general, it is possible to expand the electrostatic potential by means of spherical harmonics. However, if a system has axial symmetry (such as, the HCP model), the potential can be written \cite{jackson} by means of Legendre polynomials, in the following way:
\begin{equation} \label{potential}
V(r, \theta) = -E_0 r \cos\theta + \sum_{l=0}^\infty A_l r^{-(l+1)} P_l(\cos\theta).
\end{equation}

In (\ref{potential}), $E_0$ is the magnitude of the external field, oriented towards the positive z-axis. In the context of field emission, the applied electrostatic field is often considered to be $F_M$ (called macroscopic field), oriented towards the negative z-axis (that is, $F_M = -E_0$) (this also implies $\mathbf F = -\mathbf E$).

The problem of interest is to find $V(r, \theta)$, where, the emitter coincides with an equipotential surface. In order to find the proper coefficients $A_l$, it is imposed the following boundary condition: $V = 0$ at the surface $S$. Based on this, an error function, to measure how far the solution deviates from the correct one, is defined as the following:
\begin{equation} \label{error_function}
\Sigma(A_l) \overset{\underset{\mathrm{def}}{}}{=} \int_S V(r, \theta)^2 dS.
\end{equation}

Notice that, if $V$ is smooth, $\Sigma(A_l) = 0$ if and only if boundary conditions are met. Otherwise, $\Sigma(A_l) > 0$.

Inserting (\ref{potential}) at (\ref{error_function}), then:
\begin{equation} \label{error_function_inserted}
\Sigma(A_l) = \int_S E_0^2 r^2 \cos^2\theta dS
- \sum_{i=0}^\infty 2E_0 A_i G_i
+ \sum_{i=0}^\infty \sum_{j=0}^\infty A_i A_j I_{ij},
\end{equation}
where the quantities $G_i$ and $I_{ij}$, which will be called $G$-integrals and the $I$-integrals from now on, are defined as:
\begin{equation} \label{IG_integrals}
G_i \overset{\underset{\mathrm{def}}{}}{=} \int_S r^{-i} P_i(\cos\theta)\cos\theta dS,\quad\quad
I_{ij} \overset{\underset{\mathrm{def}}{}}{=} \int_S r^{-i-j-2} P_i(\cos\theta) P_j(\cos\theta) dS
\end{equation}

As for a property visible from the definition, the $I$ integral is symmetric, that is, $I_{ij} = I_{ji}$.

The goal is to find all $A_l$ that minimizes $\Sigma(A_l)$. A necessary condition is:
\begin{equation} \label{error_condition}
\frac{\partial\Sigma}{\partial A_l} = 0, \quad\text{for all $l\in\{0, 1, 2, ...\}$.}
\end{equation}

Imposing (\ref{error_condition}) at (\ref{error_function_inserted}), then:
$$
\frac{\partial\Sigma}{\partial A_l} = 0
\quad\implies\quad
\sum_{k=0}^\infty I_{lk} A_k = E_0 G_l.
$$

The above equation can be written in matrix form:
\begin{equation} \label{linear_system}
\begin{bmatrix}
I_{00} & I_{01} & I_{02} & I_{03} & \cdots \\
I_{10} & I_{11} & I_{12} & I_{13} & \cdots \\
I_{20} & I_{21} & I_{22} & I_{23} & \cdots \\
I_{30} & I_{31} & I_{32} & I_{33} & \cdots \\
\vdots & \vdots & \vdots & \vdots & \ddots
\end{bmatrix}
\begin{bmatrix}
\tilde A_0 \\
\tilde A_1 \\
\tilde A_2 \\
\tilde A_3 \\
\vdots
\end{bmatrix}
=
\begin{bmatrix}
G_0 \\
G_1 \\
G_2 \\
G_3 \\
\vdots
\end{bmatrix}
\end{equation}

Where $\tilde A_l = A_l / E_0$. If the linear system can be solved, then the values of $A_l$ are obtained, and one can find the electrostatic potential (\ref{potential}). However, it is reasonable to assume truncating the system yields approximate solutions for the values of $A_l$. Truncating the $I$-Matrix into a $n\times n$ matrix, and the $G,A$-Vectors into a $n$-dimensional vector, then the now finite linear system can be solved for the $A_l$ values.

\begin{definition}
\textit{I-Matrix, G-Vector, A-Vector, normalized A-Vector} are the truncated matrix filled with the integrals $I_{ij}$, the truncated vector filled with the values $G_l$, $A_l$ and $\tilde A_l$, respectively.
\end{definition}

\section{The method: Truncating the system}
Even if the values $I_{ij}$ and $G_l$ doesn't depend on where the system is truncated (that is, on the value of $n$), however, $A_l$ does. Thus, if the system is truncated having a matrix of dimensions $(n+1)\times (n+1)$, the A-Values will be called as $A_{l}^{(n)}$. As an example, a good exercise is to solve for $n=0$ and $n=1$.

\subsubsection{Truncating at $n=0$}
On that, the matrix is of dimension $1\times 1$:
$$
I_{00} \tilde A_0^{(0)} = G_0 \quad\implies\quad
\tilde A_0^{(0)} = \frac{G_0}{I_{00}}
$$

\subsubsection{Truncating at $n=1$}
On that, the matrix is of dimension $2\times 2$:
$$
\begin{bmatrix}
I_{00} & I_{01} \\
I_{10} & I_{11} \\
\end{bmatrix}
\begin{bmatrix}
\tilde A_0^{(1)} \\
\tilde A_1^{(1)} \\
\end{bmatrix}
=
\begin{bmatrix}
G_0 \\
G_1
\end{bmatrix}
\quad\implies\quad
\begin{bmatrix}
\tilde A_0^{(1)} \\
\tilde A_1^{(1)} \\
\end{bmatrix}
=
\frac{1}{\det I_{2\times 2}}
\begin{bmatrix}
I_{11} & -I_{01} \\
-I_{10} & I_{00} \\
\end{bmatrix}
\begin{bmatrix}
G_0 \\
G_1
\end{bmatrix}
$$

Henceforth:
\begin{equation}
\tilde A_0^{(1)} = \frac{I_{00}G_0 - I_{01} G_1}{I_{00}I_{11} - I_{10}I_{01}},\quad\quad
\tilde A_1^{(1)} = \frac{I_{10}G_0 - I_{00} G_1}{I_{00}I_{11} - I_{10}I_{01}}
\end{equation}

An important comment is that, $A_0^{(0)} \neq A_0^{(1)}$, but, it is possible to write $A_0^{(1)}$ as a function of $A_0^{(0)}$. For instance, using the fact $I_{01} = I_{10}$, then:
$$
A_0^{(1)} = A_0^{(0)}\frac{I_{11} - I_{01}\frac{G_1}{G_0}}{I_{11} - \frac{I_{01}^2}{I_{11}}}
$$

The only difficulty in this method, is solving all $I$-integrals and $G$-integrals. That can be done analytically depending on the shape to be integrated, but also, by means of numerical integration procedures. In increasing the dimension of the linear system by truncating at a high value of $n$, it is expected to get closer to the real potential. Furthermore, this procedure is valid for any shape $S$, as long as it has axial symmetry.

\section{Local field enhancement factor (FEF) at the apex}
Once the $A_l$ values are known, one can use it and build the entire potential as in (\ref{potential}). Having the potential, the local FEF can be calculated.

The Field Enhancement Factor at the apex $\gamma_a$, is defined as the magnitude of the local electric field at the apex of the surface $E_a$, divided by the external field $E_0$.
$$
E_a =
-\frac{\partial V}{\partial r}\Bigr|_{\substack{\theta=0}} =
E_0 + \sum_{l=0}^\infty (l+1) \frac{A_l}{r^{l+2}}
$$

Therefore, the FEF:
\begin{equation} \label{FEF}
\gamma_a =
\frac{E_a}{E_0} =
1 + \sum_{l=0}^\infty \tilde A_l\frac{l+1}{r^{l+2}}
\end{equation}

In equation (\ref{FEF}) it was assumed two things: (1) the surface's tangent plane at the apex coincides with the $xy$-plane, hence why the field only has a z-direction. (2) The apex of the surface is located at the $z$-axis. If both conditions are not met, the full gradient (with radial, polar, and azimuthal terms), must be taken, as opposed to an ordinary derivative in $r$. Thus, the apex field $E_a$ depends on the radial variable $r$, which, in this case, should be how far the apex of the surface is from the origin. In summary, in both formulas above, the apex is located at cartesian coordinates $(0, 0, r)$. For an sphere of radius $R$, then, $r = R$. For a revolution ellipse with radius $R$ and height $h$, then, $r = h$. In case of a HCP surface, $r = \ell+R = h$.

\section{Physical interpretation}
The $A_l$ terms of the Legendre expansion are associated with the $2^l$-multipole moments. Thus, what is really happening, is a multipole expansion. The linear system (\ref{linear_system}) is choosing the multipole coefficients such that it better meets the boundary conditions. As for an example on the multipole $l=0$, the charge monopole can be calculated by means of Gauss Law taken in a sphere $P$ in which $S$ is completely inside the sphere. All one has to do is to derive (\ref{potential}) with respect to $r$, and then:
$$
\frac{Q}{\epsilon_0} =
\int_P \mathbf E\cdot dS =
-\int_0^\pi \int_0^{2\pi} \frac{\partial V}{\partial r} r^2 \sin\theta dr d\theta =
4\pi A_0
$$

Thus, the $A_0$ term is associated with the zeroth-multipole coefficient (a simple image charge), located at the center of the coordinate system. The term $A_1$ is associated with a dipole moment, also located at the center. $A_2$ with a quadrupole moment, and so on.

In fact, an axially symmetric multipole expansion (written below) can be compared with the electrostatic potential (\ref{potential}), and thus, a relation with $A_l$ and $Q_l$ is possible:
\begin{equation} \label{Ql_Al}
V(r, \theta) = \frac{1}{4\pi\epsilon_0}\sum_{l=0}^\infty\frac{Q_l}{r^{l+1}} P_l(\cos\theta)
\quad\implies\quad
Q_l = 4\pi\epsilon_0 A_l
\end{equation}

In which, for an axially symmetric shape:
\begin{equation} \label{multipole_moment}
Q_l = \int_S \sigma(\mathbf r) r^l P_l(\cos\theta) dS(\mathbf r)
\end{equation}

Where $\sigma$ is the charge density on the surface, which, is induced by the presence of the external field $E_0$.

Furthermore, because (\ref{linear_system}), the values $\tilde A_l$ are independent on the external field $E_0$. It was defined that $\tilde A_l = A_l / E_0$, and, the values of $A_l$ are the ones that actually appear in the electrostatic potential (\ref{potential}). Because $A_l$ related to the multipoles $Q_l$ by (\ref{Ql_Al}), then:
\begin{equation} \label{Ql_tilde_Al}
Q_l = 4\pi\epsilon_0 E_0 \tilde A_l
\end{equation}

Where $\tilde A_l$ depends only on the geometry of the system. In other words, the multipole moments scale linearly with $E_0$. If one doubles the external field, all multipole moments will double.

\newpage
\chapter{Application of the method}
\section{Hemisphere over plane}
As the method was just explained, one can test it in a simple shape, such as, a hemisphere with radius $R$ centered at the origin, on top of the conducting plane $z=0$, and to calculate the apex FEF. By symmetry, this is equivalent of considering a sphere, and no plane, as the potential in the whole space won't be changed. Therefore, instead of integrating the hemisphere and the plane (which is infinite in size), the integration will be made on a whole sphere, and no plane, because the area of a sphere is finite, as opposed to a plane. The solution will be equivalent to both problems, of course. In addition, about the HCP model, setting $h=R$, that is, $\nu = h/R = 1$, is equivalent to the sphere problem. Furthermore, a prolate spheroid with $\nu = h/R = 1$ is also equivalent to the sphere problem. Thus, solving the sphere also gives the results for a HCP $\nu=1$ and a spheroid of $\nu = 1$.

\subsection{Spherical G-Integrals}
To solve the problem, the $A_l$ values are required, and, to calculate those, the $G_l$ and $I_{ij}$ values are required. First, the $G$-integrals:
\begin{equation}
G_l = \int_S r^{-l} P_l(\cos\theta)\cos\theta dS =
\int_0^\pi \int_0^{2\pi} R^{-l} P_l(\cos\theta)\cos\theta R^2\sin\theta d\theta d\phi
\end{equation}

If $x=\cos\theta$, and using the fact that $x = P_1(x)$
\begin{equation}
G_l = 2\pi R^{-l+2}\int_{-1}^1 x P_l(x) dx =
2\pi R^{-l+2}\int_{-1}^1 P_1(x) P_l(x) dx
\end{equation}

By orthogonality of the Legendre polynomials, one can finally get the value of the $G$-integrals:
\begin{equation} \label{spherical_Gl_exact}
G_l = \frac{4}{3}\pi R \delta_{l,1}
\end{equation}

Where $\delta_{i,j} = 0$ if $i\neq j$, and $\delta_{i,j} = 1$ if $i=j$.

\subsection{Spherical I-Integrals}
Now, doing the same with the I-Integrals:
\begin{equation}
I_{ij} = \int_S r^{i-j-2} P_i(\cos\theta) P_j(\cos\theta) =
\int_0^\pi \int_0^{2\pi} R^{-i-j-2} P_i(\cos\theta) P_j(\cos\theta) R^2\sin\theta d\theta d\phi
\end{equation}

Again, doing substitution $x=\cos\theta$, and using orthogonality of Legendre polynomials:
\begin{equation} \label{spherical_Iij_exact}
I_{ij} = 2\pi R^{-i-j}\int_{-1}^1 P_i(x) P_j(x) dx = \frac{2\pi}{R^{i+j}}\frac{2}{2j+1}\delta_{i,j}
\end{equation}

\subsection{A-Vector}
The infinite $I$-matrix is diagonal, more, among the $G$-Vector, only $G_1$ is nonzero. Thus, all relevant values are:
\begin{equation} \label{sphere_1_values}
I_{11} = \frac{4\pi}{3 R^2},\quad\quad
G_1 = \frac{4}{3}\pi R
\end{equation}

Therefore, solving the system (\ref{linear_system}), one gets:
\begin{equation}
\sum_{n=0}^\infty I_{ln} \tilde A_n = G_n
\quad\implies\quad
I_{nn}\tilde A_n = G_n
\quad\implies\quad
I_{11} \tilde A_1 = G_1
\end{equation}

Therefore:
\begin{equation}
\tilde A_1 = \frac{G_1}{I_{11}} = R^3, \quad\quad
\tilde A_{n\neq 1} = 0
\end{equation}

\subsection{Solution}
Writing the expression for the potential at (\ref{potential}) considering $A_{n\neq 1} = 0$, then:
\begin{equation} \label{potential_spherical_case}
V(r, \theta) = -E_0 r\cos\theta + \frac{A_1}{r^2}\cos\theta
= E_0\left(r - \frac{R^3}{r^2}\right)\cos\theta
\end{equation}

Which coincides with the well-known analytical result for the sphere in an external field \cite{jackson}. The FEF can be calculated by expression (\ref{FEF}), that is:
\begin{equation} \label{FEF_spherical_case}
\gamma = 1 + \sum_{l=0}^\infty \tilde A_l\frac{l+1}{r^{l+2}}
= 1 + \tilde A_1\frac{1+1}{R^{1+2}}
= 1 + (1+1)\frac{R^3}{R^3}
= 1 + 1 + 1
= 3
\end{equation}

\newpage
\section{Half prolate spheroid over plane}
The method worked with a hemisphere over a plane: next step would be to consider a spheroid. A spheroid is by definition an ellipsoid of revolution. With respect the $z$-axis, the spheroid can be elongated or flattened. An elongated spheroid is called a prolate spheroid, while, a flattened spheroid is called an oblate spheroid.

The intersection of the spheroid across the plane $z=0$ forms a circle of radius $R$. The intersection of the spheroid across the plane $y=0$ (or $x=0$) forms an ellipse of semi-major axis $h$. The height of the spheroid is thus $h$. Given the shape of interest is in the case $h\ge R$, by definition, such shape is a prolate spheroid. The aspect ratio of a prolate spheroid is defined as $\nu = h/R$.

Again, just like it was done with the sphere, the half prolate spheroid over a conducting plane is equivalent to a full prolate spheroid, with no plane. The advantage of the later is clear from the fact that, a plane has infinite surface area, while, a spheroid has not, thus, all integrations will be finite. As before, one should calculate the $G$-vector, the $I$-matrix, then $A$-vector, and then, the local FEF.

\subsection{Spheroidal G-Integrals}
The $G$-integral:
\begin{equation} \label{general_Gl_sph}
G_l = \int_S r^{-l} P_l(\cos\theta)\cos\theta dS
\end{equation}

The $G$ integral was calculated for a sphere, and thus, to avoid confusion, a different notation is employed: from now on, $\dot G_l$ (with the dot on top), will be denoted to exclusively speak about the $G$ integral on a spheroid. In fact, every dotted quantity will be speaking about a spheroid.

To solve the integrals, two quantities are required: $r$ and $dS$, that is, the values of $r(\theta, \phi)$ which coincides with a prolate spheroid, and, the area element $J$, such that $dS = Jd\theta d\phi$. A detailed derivation of both can be found in Appendix A, where the final results are in equations (\ref{spheroidal_r}) and (\ref{spheroidal_area_element}), written below:

\begin{equation} \label{spheroidal_rJ_data}
r = \frac{R}{\sqrt{1 - \epsilon^2\cos^2\theta}},
\quad\quad
\epsilon^2 = 1 - \frac{R^2}{h^2},
\quad\quad
J = r^2\sin\theta\sqrt{1 + \frac{r^4}{R^4}\epsilon^4\cos^2\theta\sin^2\theta}
\end{equation}

One can identify $\epsilon$ as the first eccentricity of the ellipse which generated the spheroid. Therefore $0 \le\epsilon < 1$, and, if $\epsilon = 0$, then, the spheroid collapses into a sphere, as $R = h$.

Inserting equation (\ref{spheroidal_rJ_data}) into equation (\ref{general_Gl_sph}), one can get (\ref{spheroidal_Gl_thetaphi}) [located in Appendix B], written below:
\begin{equation} \label{spheroidal_Gl_thetaphi}
\dot G_l = \int_0^{2\pi}\int_0^\pi r^{-l} P_l(\cos\theta) \cos\theta\cdot r^2\sin\theta \cdot \sqrt{1 + \frac{r^4}{R^4}\epsilon^4\cos^2\theta\sin^2\theta} d\theta d\phi
\end{equation}

Therefore, integrating at $\phi$:
\begin{equation}
\dot G_l = 2\pi \int_0^\pi r^{-l+2} P_l(\cos\theta) \cos\theta\sin\theta\sqrt{1 + \frac{r^4}{R^4}\epsilon^4\cos^2\theta\sin^2\theta} d\theta
\end{equation}

Like before, substitution $x=\cos\theta$ is done, thus arriving at equation (\ref{spheroidal_G}) [located in Appendix B], written below:
\begin{equation}
\dot G_{2l+1} = \frac{2\pi}{R^{2l-1}}\int_{-1}^1 x P_{2l+1}(x)\left[1 - \epsilon^2 x^2\right]^{l-1/2}\cdot\sqrt{1 + \frac{\epsilon^4 x^2 (1-x^2)}{(1 - \epsilon^2 x^2)^2}} dx
\end{equation}

Where $G_{2l} = 0$. Only terms of the form $2l+1$ contributes with the $G$-Vector. All others are zero. Integral above can be solved exactly, and it is done in Appendix B, as shown in equation (\ref{exact_spheroidal_Gl}), written below for convenience:
\begin{equation}
\dot G_{2l+1} = \frac{2\pi}{R^{2l-1}}\sum_{n=0}^\infty \sum_{k=0}^{l} (-1)^n a_{2l+1,2k+1}\binom{l-\frac{1}{2}}{n}\epsilon^{2n} A_{2(k+n+1)}
\end{equation}

Where $A_{2n}$ (not to be confused with the coefficients $A_l$ in the Legendre expansion of the potential) is the $2n$-th moment of the square root part of the integrand, that is:
\begin{equation} \label{A2n_moment_integrals}
A_{2n} = \int_{-1}^1 x^{2n}\sqrt{1 + \frac{\epsilon^4 x^2 (1-x^2)}{(1 - \epsilon^2 x^2)^2}} dx
\end{equation}

The moments $A_{2n}$ can be solved exactly using hypergeometric functions. Also, $a_{l,k}$ are the coefficients of the Legendre polynomials, that is, $P_l(x) = \sum_k a_{l,k} x^k$. And $\binom{\cdot}{\cdot}$ is the generalized binomial coefficient as defined in (\ref{generalized_binomial}) [located at Appendix B]. The level of sophistication is unnecessary, as such moments can be approximated by Taylor expansion. The calculation is done in Appendix B, and the final result can be found in (\ref{approx_spheroidal_Gl}).

Finally, thus, the expression of the $G$ integrals for a prolate spheroid, are:
\begin{equation}
\dot G_{2l+1} \approx \frac{2\pi}{R^{2l-1}}\sum_{n=0}^\infty (-1)^n \binom{l-\frac{1}{2}}{n}\epsilon^{2n}
\sum_{k=0}^{l} a_{2l+1,2k+1}\left[\frac{1}{n+k+1+\frac{1}{2}} + \frac{\epsilon^4}{2}\frac{1}{n+k+1+\frac{3}{2}}\right]
\end{equation}

From it, inserting $l=0$, one can get $G_1$, by using $a_{1,1} = 1$ because $P_1(x) = x$.
\begin{equation}
\dot G_1 = 2\pi R\sum_{n=0}^\infty (-1)^n \binom{l-\frac{1}{2}}{n}\epsilon^{2n}
a_{1,1}\left[\frac{1}{n+1+1+\frac{1}{2}} + \frac{\epsilon^4}{2}\frac{1}{n+1+1+\frac{3}{2}}\right]
\end{equation}

Furthermore, one can further approximate $G_l$, for instance, if the prolate spheroid is not elongated much, that is, if the $\epsilon\approx 0$, then, higher powers of $\epsilon$ becomes closer to zero, allowing the possibility of truncation of the summation as a tool for approximation. For powers of $\epsilon^0$, that is, $n=0$ truncation, yields:
\begin{equation}
\dot G_{2l+1} \approx \frac{2\pi}{R^{2l-1}}\sum_{k=0}^{l} a_{2l+1,2k+1}\left(l-\frac{1}{2}\right)\frac{1}{k+1+\frac{1}{2}}
\end{equation}

Truncating at powers of $\epsilon^2$:
\begin{equation}
\dot G_{2l+1} \approx \frac{2\pi}{R^{2l-1}}\sum_{k=0}^{l} a_{2l+1,2k+1}
\left[
\frac{1}{k+1+\frac{1}{2}} - \epsilon^2\left(l - \frac{1}{2}\right)
\frac{1}{k+2+\frac{1}{2}}
\right]
\end{equation}

An example:
\begin{equation}
\dot G_1 = 2\pi R a_{1,1}
\left[
\frac{1}{0+1+\frac{1}{2}} - \epsilon^2\left(0 - \frac{1}{2}\right)
\frac{1}{0+2+\frac{1}{2}}
\right]
=
2\pi R\left[\frac{2}{3} + \frac{\epsilon^2}{5}\right]
\end{equation}

Which can be re-written:
\begin{equation} \label{spheroidal_G1_value}
\dot G_1 = \frac{4}{3}\pi R\left[1 + \frac{3}{10}\epsilon^2 + O(\epsilon^4)\right]
\end{equation}

Comparing with (\ref{sphere_1_values}), the $G_1$ found is exactly the sphere's $G_1$, corrected by an eccentricity. Such expression is suitable for small $\epsilon$.

\subsection{I-Integrals}
The same way that was done with the $G$-integrals, it will be done with the $I$-integrals. Inserting the $r(\theta)$ and $J$, and integrating on $\phi$, equation (\ref{spheroidal_Iij_theta}) [located at Appendix B] is obtained, written below:
\begin{equation}
\dot I_{ij} = 2\pi \int_0^\pi r^{-i-j} P_i(\cos\theta) P_j(\cos\theta)\sin\theta\sqrt{1 + \frac{r^4}{R^4}\epsilon^4\cos^2\theta\sin^2\theta} \cdot d\theta
\end{equation}

And then doing substitution $x=\cos\theta$, one arrives at expression (\ref{spheroidal_Iij_x}), written below:
\begin{equation}
\dot I_{ij} = \frac{2\pi}{R^{i+j}} \int_{-1}^1 P_i(x) P_j(x) \left[1 - \epsilon^2 x^2\right]^{\frac{i+j}{2}} \sqrt{1 + \frac{\epsilon^4 x^2(1-x^2)}{(1 - \epsilon^2 x^2)^2}} dx
\end{equation}

As with the $G$-integrals, there are parity considerations to make: $I_{ij} = 0$ iff $i+j$ is odd. And, of course, $I_{ij}\neq 0$ iff $i+j$ is even. The integral above has an analytical solution as shown by equation (\ref{exact_spheroidal_Iij}), shown below:
\begin{equation}
\dot I_{ij} = \frac{2\pi}{R^{i+j}}
\sum_{u=0}^i \sum_{v=0}^j \sum_{n=0}^{\frac{i+j}{2}} (-1)^n a_{i,u} a_{j,v} \binom{\frac{i+j}{2}}{n}\epsilon^{2n}
A_{u+v+2n}
\end{equation}

Using (\ref{spheroidal_A2n_approx}) for the purpose of approximations, one gets (\ref{approx_spheroidal_Iij}), shown below:
\begin{equation}
\dot I_{ij} = \frac{2\pi}{R^{i+j}}
\sum_{u=0}^i \sum_{v=0}^j \sum_{n=0}^{\frac{i+j}{2}} (-1)^n a_{i,u} a_{j,v} \binom{\frac{i+j}{2}}{n}\epsilon^{2n}
\left[
\frac{1}{n + \frac{u+v}{2} + \frac{1}{2}} +
\frac{\epsilon^4}{2}\frac{1}{n + \frac{i+v}{2} + \frac{3}{2}}
\right]
\end{equation}

With above expression, the value of $\dot I_{11}$:
\begin{equation}
\dot I_{11} = \frac{2\pi}{R^2}\sum_{n=0}^1 \binom{1}{n}\epsilon^{2n}
\left[
\frac{1}{n + \frac{1+1}{2} + \frac{1}{2}} +
\frac{\epsilon^4}{2}\frac{1}{n + \frac{1+1}{2} + \frac{3}{2}}
\right]
\end{equation}

Re-writing:
\begin{equation}
\dot I_{11} = \frac{2\pi}{R^2}
\left[
\frac{2}{3} + \frac{2}{5} \epsilon^2 + \frac{1}{5}\epsilon^4 + \frac{1}{7}\epsilon^6
\right]
\end{equation}

Which, neglecting some terms:
\begin{equation} \label{spheroidal_I11_value}
\dot I_{11} = \frac{4\pi}{3R^2}
\left[
1 + \frac{3}{5} \epsilon^2 + O(\epsilon^4)
\right]
\end{equation}

Which, again, as comparing with (\ref{sphere_1_values}), gives exactly the same for a sphere, except, a correction for the eccentricity.

\subsection{A-Vector}
Having expressions for $G_l$ and $I_{ij}$, the values of $A_l$ can be found by means of (\ref{linear_system}). Again, $G_l = 0$ iff $l$ is even, and $I_{ij} = 0$ iff $i+j$ is odd. Therefore, the linear system (\ref{linear_system}) becomes:
\begin{equation} \label{spheroidal_system}
\begin{bmatrix}
I_{00} &   0    & I_{02} &   0    & I_{04} &   0    & I_{06} &    0   & \cdots \\
0    & I_{11} &   0    & I_{13} &    0   & I_{15} &    0   & I_{17} & \cdots \\
I_{20} &   0    & I_{22} &   0    & I_{24} &   0    & I_{26} &    0   & \cdots \\
0    & I_{31} &   0    & I_{33} &    0   & I_{35} &    0   & I_{37} & \cdots \\
I_{40} &   0    & I_{42} &   0    & I_{44} &   0    & I_{46} &    0   & \cdots \\
0    & I_{51} &   0    & I_{53} &    0   & I_{55} &    0   & I_{57} & \cdots \\
I_{60} &   0    & I_{62} &   0    & I_{64} &   0    & I_{66} &    0   & \cdots \\
0    & I_{71} &   0    & I_{73} &    0   & I_{75} &    0   & I_{77} & \cdots \\
\vdots & \vdots & \vdots & \vdots & \vdots & \vdots & \vdots & \vdots & \ddots
\end{bmatrix}
\begin{bmatrix}
\tilde A_0 \\
\tilde A_1 \\
\tilde A_2 \\
\tilde A_3 \\
\tilde A_4 \\
\tilde A_5 \\
\tilde A_6 \\
\tilde A_7 \\
\vdots
\end{bmatrix}
=
\begin{bmatrix}
0 \\
G_1 \\
0 \\
G_3 \\
0 \\
G_5 \\
0 \\
G_7 \\
\vdots
\end{bmatrix}
\end{equation}

The main linear system above is separated into two different linear systems, completely equivalent to the first: a system for odd components, and another for even components, as written below:
\begin{equation} \label{spheroidal_linear_system_zero}
\begin{bmatrix}
I_{00} & I_{02} & I_{04} & I_{06} & \cdots \\
I_{20} & I_{22} & I_{24} & I_{26} & \cdots \\
I_{40} & I_{42} & I_{44} & I_{46} & \cdots \\
I_{60} & I_{62} & I_{64} & I_{66} & \cdots \\
\vdots & \vdots & \vdots & \vdots & \ddots
\end{bmatrix}
\begin{bmatrix}
\tilde A_0 \\
\tilde A_2 \\
\tilde A_4 \\
\tilde A_6 \\
\vdots
\end{bmatrix}
=
\begin{bmatrix}
0 \\
0 \\
0 \\
0 \\
\vdots
\end{bmatrix} \\
\end{equation}

\begin{equation} \label{spheroidal_linear_system_nonzero}
\begin{bmatrix}
I_{11} & I_{13} & I_{15} & I_{17} & \cdots \\
I_{31} & I_{33} & I_{35} & I_{37} & \cdots \\
I_{51} & I_{53} & I_{55} & I_{57} & \cdots \\
I_{71} & I_{73} & I_{75} & I_{77} & \cdots \\
\vdots & \vdots & \vdots & \vdots & \ddots
\end{bmatrix}
\begin{bmatrix}
\tilde A_1 \\
\tilde A_3 \\
\tilde A_5 \\
\tilde A_7 \\
\vdots
\end{bmatrix}
=
\begin{bmatrix}
G_1 \\
G_3 \\
G_5 \\
G_7 \\
\vdots
\end{bmatrix} \\
\end{equation}

Notice the system (\ref{spheroidal_linear_system_zero}) is a linear system of the form $IA = 0$. If $I$ is non invertible (there is, if $\det I = 0$), then there exists non-null solution $A$. Furthermore, if $A$ is a solution, then $cA$ is also a solution, for any scalar $c\in\mathbb{R}$. Thus, there exists infinite solutions, all belonging to the kernel of $I$. In such situation, even valued $A$ wouldn't be unique, violating the uniqueness theorem of the Laplace Equation. Therefore, such possibility is discarded, even if a formal proof is not offered about if matrix $I$ is indeed singular. Therefore, $A_{2l} = 0, l\in\{0, 1, 2, 3, 4, 5, ...\}$.

Becaused $A_{2l} = 0$, only $A_{2l+1}$ will be relevant, and they will be given by the system on (\ref{spheroidal_linear_system_nonzero}). This means, only odd valued multipoles will contribute to the potential field: dipole, octopole, etc. Even valued multipoles will give a zero contribution (that is, image charge, quadrupole moments, etc).

\subsection{Local FEF}
All that is missing is to find the values of the $A_l$. Already knowing $A_0 = 0$, and truncating (\ref{spheroidal_linear_system_nonzero}) for a one dimensional system. Therefore:
\begin{equation}
\tilde A_1^{(1)} = \frac{G_1}{I_{11}}
\end{equation}

Substituting (\ref{spheroidal_G1_value}) and (\ref{spheroidal_I11_value}) at above equation, one gets:
\begin{equation} \label{spheroidal_A1_value}
\tilde A_1^{(1)} = R^3\frac{1 + \frac{3}{10}\epsilon^2}{1 + \frac{3}{5}\epsilon^2}
\end{equation}

Inserting $A_1^{(1)}$ at (\ref{FEF}), and because $A_l^{(1)} = 0$ for $l > 1$, then:
\begin{equation} \label{spheroidal_gamma_a_1_value}
\gamma_a^{(1)} =
1 + \sum_{l=0}^\infty \tilde A_l^{(1)}\frac{l+1}{r^{l+2}}
= 1 + \tilde A_1^{(1)}\frac{1+1}{r^{1+2}}
= 1 + \tilde A_1^{(1)}\frac{2}{r^3}
\end{equation}

Knowing the value of $A_1^{(1)}$, only $r$ is required, which is the radial position at the apex of the shape, that is, $r=h$. Therefore:
\begin{equation} \label{FEF_prolate_spheroidal_case}
\gamma_a^{(1)} = 1 + 2\frac{R^3}{h^3}\frac{1 + \frac{3}{10}\epsilon^2}{1 + \frac{3}{5}\epsilon^2}
\end{equation}

One can see, above expression only depends on the aspect ratio $\nu$, because $\epsilon$ can be written in function of $\nu$.

It is important to realize at this point all approximations used: An infinite linear system was truncated to one dimension, such that, only dipole contributions to the local FEF are being computed. More, the eccentricity terms on numerator and denominator were truncated until $\epsilon^2$ order, and, the values of $A_{2n}$ where approximated by a second order Taylor expansion. All approximations used are suitable if $\epsilon\ll 1$, which is equivalent to $R\approx h$, which is equivalent to $\nu\approx 1$.

\subsection{Numerical calculations}
The method was used to numerically calculate $\gamma_a^{(n)}$ (see Fig. (3.1a)) for a prolate spheroid of height and radius $h$ and $R$, respectively. It was found the error from the analytical apex FEF $\gamma_a$ was able to be curved-fitted into an exponential function (see Fig (3.1b)):
$$
\frac{\gamma_a - \gamma_a^{(n)}}{\gamma_a} = A\exp\left(-b n\right), \quad b>0
$$

Here, $n$ is the truncation order, and $\gamma_a^{(n)}$ is the local FEF calculated at order $n$. Parameter $A$ should not be confused with the coefficients $A_l$. The $\gamma_a$ used here, is the exact theoretical value, which can be found in equation (26) of \cite{richard}, written below:
\begin{equation}
\nu = \frac{h}{R},\quad\quad
\xi = \sqrt{\nu^2 - 1},\quad\quad
\gamma_a = \frac{\xi^3}{\nu\ln\left(\nu + \xi\right) - \xi}
\end{equation}

At the table below, it is shown for various aspect ratios the fitting parameters $A$ and $b$. For instance, when $\nu=2$ then $A\approx 0.723$ and $b = 0.05086$. Using $1/b \approx 19.658$, it is concluded that, by increasing the order by roughly $20$, the error decreases by $1/e$. Assuming such behavior continues forward, one can calculate the order required to numerically calculate the FEF given some precision.

\begin{tabular}{c|ccc}
Shape & $A$ & $b$ & $1/b$ \\
\hline
$h=1.1$ & $0.332$ & $0.821$ & $1.218$ \\
$h=1.3$ & $0.499$ & $0.355$ & $2.8169$ \\
$h=1.5$ & $0.598$ & $0.1896$ & $5.2742$ \\
$h=1.7$ & $0.656$ & $0.1080$ & $9.2592$ \\
$h=2$ & $0.723$ & $0.05086$ & $19.652$ \\
$h=3$ & $0.851$ & $0.006964$ & $143.584$ \\
$h=5$ & $0.937$ & $0.0006493$ & $1539.93$ \\
\end{tabular}

\begin{figure}
\includegraphics[scale=0.32]{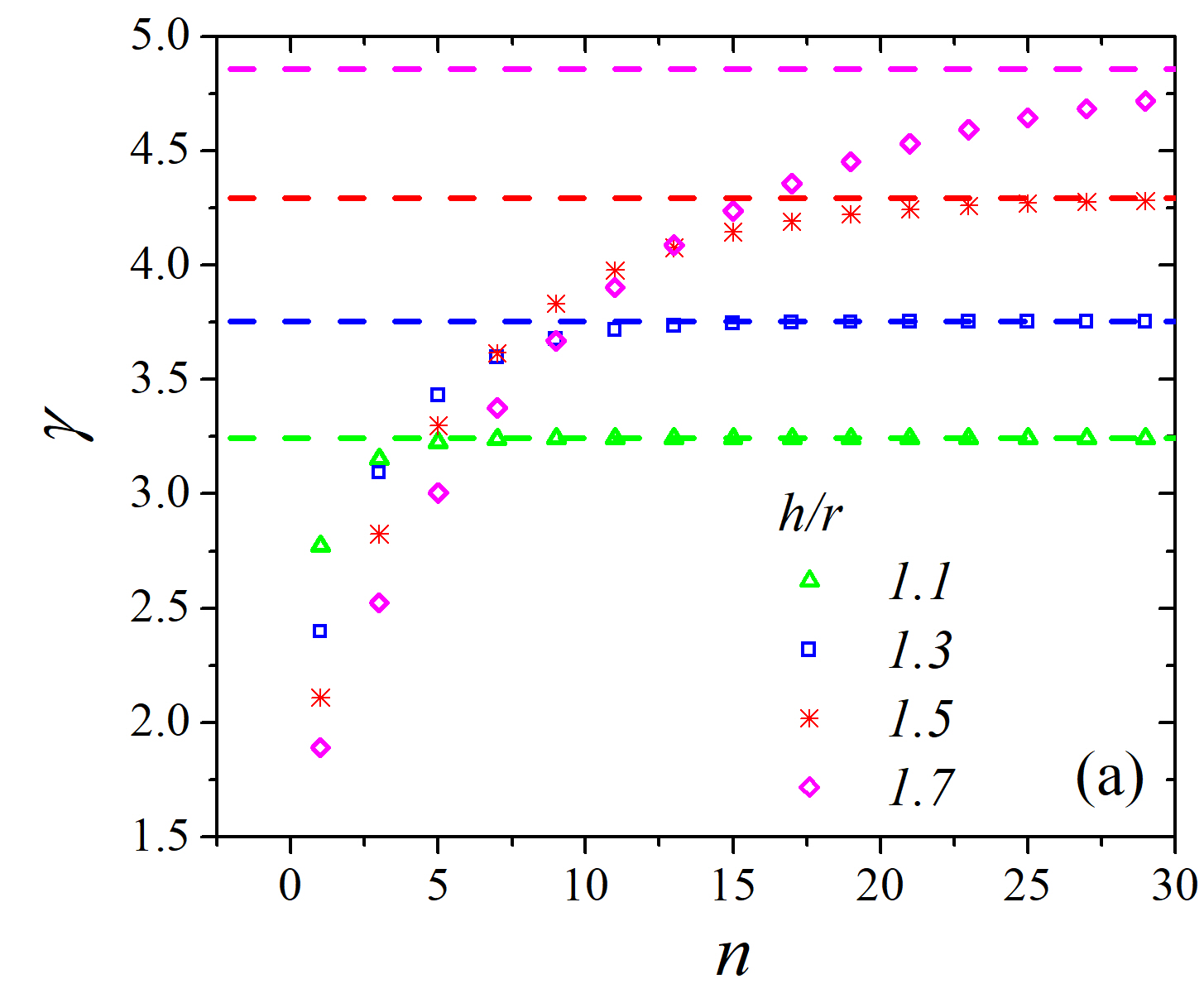}
\includegraphics[scale=0.32]{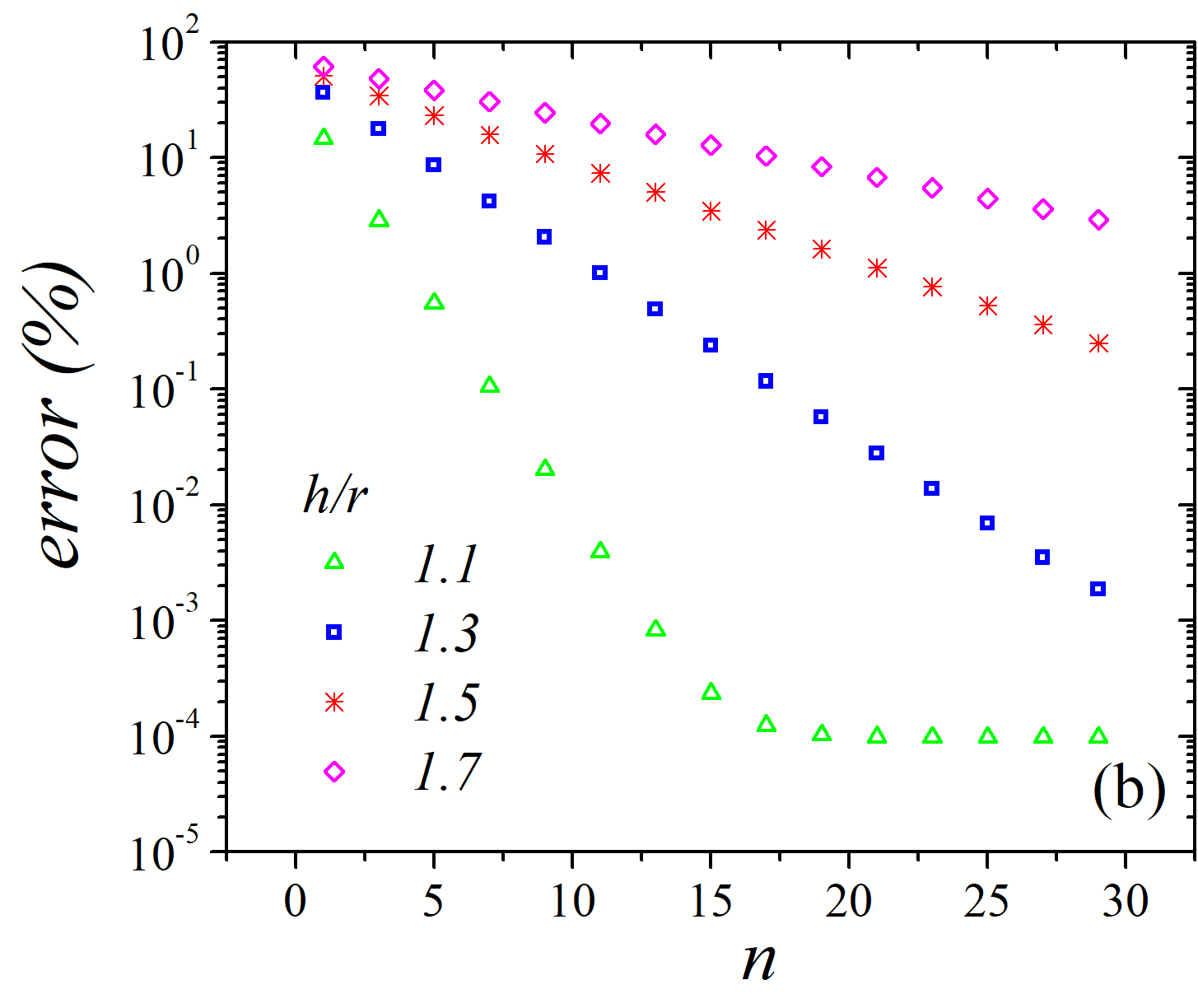}
\caption{(a) $\gamma_a^{(n)}$ vs truncation order $n$ for various aspect ratios. The horizontal dashed lines correspond to the analytical values of the apex FEF $\gamma_a$. (b) Error $\frac{\gamma_a - \gamma_a^{(n)}}{\gamma_a}\times 100(\%)$ vs $n$, for the same aspect ratios as shown in (a). The horizontal plateau observed for $\nu=1.1$ might be explained by the fact that the numerical integration of $A_{2n}$ (see Eq. (\ref{A2n_moment_integrals})) was made with $10^6$ trapezoidal elements.}
\end{figure}

\newpage
\subsection{Slow convergence of the FEF}
As Fig. (3.1) shows (and the values of $1/b$), that, a slow convergence of the FEF is observed as the aspect ratio $\nu = h/R$ increases. The reason for this, the hemi-ellipsoid concentrates its charge distribution on the top. Therefore, it can be approximated as a charge on the top, and its respective image charge (opposite in sign) below the plane. It was shown in (\ref{Ql_Al}) that the coefficients $\tilde A_l$ are proportional to the multipole coefficients $Q_l$ of the system.

For a single charge $q$ located in the $z$-axis, above the plane at a distance $a$, that is, $(0, 0, a)$, then, the axial multipole moments are known to be $Q_l = q a^l$. Because of the image charge below, the total multipole is:
\begin{equation}
Q_l = q a^l + (-q)\cdot (-a)^l = q a^l + (-1)^{l+1} q a^l
\end{equation}

In other words:
\begin{equation} \label{Ql_estimation_multipole}
\begin{split}
Q_{2l} &= 0 \\
Q_{2l+1} &= q a^{2l+1}
\end{split}
\end{equation}

It is clear that $Q_l$ grows with the power $l$. The potential for a multipole $Q_l$ is proportional to $1/r^{l+1}$, and the field as $1/r^{l+2}$. At the apex, $r=h$, and $a < h$ but $a\approx h$, therefore, for sake of illustration, using expression (\ref{FEF}), considering in a rough way that $Q_l\approx aq^l$ for all $l$, then:
\begin{equation} \label{FEF_multipole_approximation}
\gamma_a \approx 1 + \sum_{n=0}^\infty (l+1)\frac{qa^l}{h^{l+2}}
=1 + \frac{1}{h^2}\sum_{n=0}^\infty (l+1)q\left(\frac{a}{h}\right)^l
\end{equation}

Because $a/h\approx 1$ (and $a/h < 1$), the convergence is slow. Such result also predicts that, for some small fixed $n$, then, $\gamma_a^{(n)}\to 1$ as $h$ increases. In general, the larger $h$ is, the more orders are required to correctly approximate $\gamma_a$.

\newpage
\chapter{Hemisphere on a post model}
Having shown how the method works for shapes with known FEFs, now, the HCP model will be focused, analytically, using the former method. The HCP model consists of a conducting cylinder of radius $R$ and height $\ell$ on top of a conducting plane located at $z=0$. On top of the cylinder, there's a hemisphere of radius $R$. It would be required to integrate the hemisphere, the cylinder, and the plane (with a circular hole at the center), however, that won't be necessary: An equivalent problem is posed by eliminating the plane, and considering an mirror shape down below, as if the plane is a mirror. Thus, now there exists a cylinder of radius $R$, but with height $2\ell$, with $\ell$ of height on the positive z axis, and $\ell$ of height on the negative z-axis, and two hemispheres: one located on top of the cylinder, and the other at the bottom of the cylinder. These are easier to integrate, because their area is finite. This will define a HCP shape of height $h = \ell + R$, and height to radius ratio of $\nu = h/R$.

\section{Aspect ratio $\nu$}
The aspect ratio $\nu$ is defined as the ratio between the height and radius of the HCP shape. That is:
\begin{equation}
\nu = \frac{h}{R}
\end{equation}

It will be motivated in this section, that, it makes sense to have the FEF depending only on $\nu$.

Assume two HCP shapes denoted by $(h_1, R_1)$ and $(h_2, R_2)$, such that, both shapes have an equal aspect ratio $\nu$. This implies:
\begin{equation}
\nu_1 = \frac{h_1}{R_1} = \frac{h_2}{R_2} = \nu_2
\end{equation}

The only way to meet such criteria, is having: $h_2 = \alpha h_1$, and $R_2 = \alpha R_1$, for some $\alpha\in\mathbb{R}$ where $\alpha > 0$. Thus, this represents an scale by $\alpha$. Therefore, a way to transform from shape $(h_1, R_1)$ to shape $(h_2, R_2)$ where $\nu(h_1, R_1) = \nu(h_2, R_2)$, is by scaling the entire coordinate system $(x, y, z) \rightarrow (\alpha x, \alpha y, \alpha z)$.

One can verify what happens to the Laplacian operator applied over a generic function $\phi$  when such scale transformation is applied. By defining $\mathbf r' = \alpha\mathbf r$, then:
\begin{equation}
\frac{\partial\phi}{\partial x'}
= \frac{\partial x}{\partial x'}\frac{\partial\phi}{\partial x}
= \frac{1}{\alpha}\frac{\partial\phi}{\partial x}
\end{equation}

And:
\begin{equation}
\frac{\partial^2 \phi}{\partial x'^2}
= \frac{\partial}{\partial x'}\frac{1}{\alpha}\frac{\partial\phi}{\partial x}
= \frac{1}{\alpha}\frac{\partial x}{\partial x'}\frac{\partial^2\phi}{\partial x^2}
= \frac{1}{\alpha^2}\frac{\partial\phi}{\partial x^2}
\end{equation}

The same can be done for $y, z$, and, therefore, $\nabla^2\phi = \alpha^2\nabla'\phi$. Thus, if Laplace equation is satisfied:
\begin{equation}
\nabla^2\phi = 0
\quad\iff\quad
\nabla'^2\phi = 0
\end{equation}

Therefore, if $\phi(x)$ is a solution of Laplace equation, so is $\phi(\alpha x)$. If $\psi(\mathbf r)$ be the analytical solution of the HCP shape $(h, R)$, that is, $\psi$ satisfies Laplace equation, and boundary conditions $(h, R)$, then, $\psi(\alpha\mathbf r)$ is a solution of Laplace equation, and satisfies boundary conditions $(\alpha h, \alpha R)$.

If $\psi$ depends only on $\nu$ and the position vector $\mathbf r$, as opposed to $h$ and $R$ separately, then it is clear that the scaling conditions are immediately satisfied. This highlights the importance of the aspect ratio $\nu$. It is expected that all quantities ($G$-integrals, $I$-integrals, $A$-values, and the apex FEF) will depend explicitly on the aspect ratio $\nu$.

\section{Integrals}
\subsection{Cylindrical I-Integrals}
The integral to solve:
\begin{equation}
I_{ij} = \int_S r^{-i-j-2} P_i(\cos\theta) P_j(\cos\theta) dS
\end{equation}

Where $S$ is a cylinder with height $2\ell$ and radius $R$. Recall that $r$ is a radial vector, as in spherical coordinate system. For a cylinder, from Appendix A, it satisfies equation:
\begin{equation}
r = \frac{R}{\sin\theta},\quad\quad J = r^2
\end{equation}

Where $J$ is the area element, that is, $dS = Jd\theta d\phi$. The limits of integration are from $\theta_0$ (the top of the cylinder) to $\pi - \theta_0$, the bottom of the cylinder, where:
\begin{equation} \label{cylindrical_ac_as}
a_c = \cos\theta_0 =
\frac{\ell}{r_0},\quad\quad
a_s = \sin\theta_0 =
\frac{R}{r_0},\quad\quad
r_0 = \sqrt{R^2 + \ell^2}
\end{equation}

A detailed derivation is in Appendix A. The integral is thus:
\begin{equation}
\ddot I_{ij} = \int_S r^{-i-j-2} P_i(\cos\theta) P_j(\cos\theta) dS
= \int_0^{2\pi}\int_{\theta_0}^{\pi - \theta_0} r^{-i-j-2} P_i(\cos\theta) P_j(\cos\theta) r^2 d\theta d\phi
\end{equation}

Thus:
\begin{equation}
\ddot I_{ij} = \frac{2\pi}{R^{i+j}} \int_{\theta_0}^{\pi - \theta_0} P_i(\cos\theta) P_j(\cos\theta) \left[\sin\theta\right]^{i+j} d\theta
\end{equation}

Or, changing $x = \cos\theta$:
\begin{equation} \label{cylindrical_Iij}
\ddot I_{ij} = \frac{2\pi}{R^{i+j}} \int_{-a_c}^{a_c} P_i(x) P_j(x) \left[\sqrt{1 - x^2}\right]^{i+j-1} dx
\end{equation}

First, the function $\sqrt{1 - x^2}$ is always even. Second, if $i+j$ is even, then $P_i P_j$ is even. Also, if $i+j$ is odd, then $P_i P_j$ is odd. Because the limits of the integral are symmetrical, then an odd function must be evaluated to zero. Therefore, $I_{ij} = 0$ if $i+j$ is odd.

Now, if one wishes to calculate specific values:
\begin{equation}
\begin{split}
\ddot I_{00} &= 2\pi\int_{\theta_0}^{\pi-\theta_0} d\theta = 2\pi\left(\pi - 2\theta_0\right) \\
\ddot I_{11} &= \frac{2\pi}{R^2}\int_{\theta_0}^{\pi-\theta_0} \cos^2\theta\sin^2\theta d\theta
\end{split}
\end{equation}

The integrals in $\theta$ where used for $I_{00}$ and $I_{11}$, because they are easier to solve than the integrals in the variable $x$.
\begin{equation}
\int\left[\sin\theta\cos\theta\right]^2d\theta
=\frac{1}{4}\int\sin^2\left(2\theta\right)d\theta
=\frac{1}{8}\int\left[1 - \cos\left(4\theta\right)\right] d\theta
\end{equation}

Thus:
\begin{equation}
\begin{split}
\int\sin^2\theta\cos^2\theta d\theta
&= \frac{1}{8}\theta - \frac{1}{32}\sin\left(4\theta\right) \\
&= \frac{1}{8}\theta - \frac{1}{16}\sin\left(2\theta\right)\cos\left(2\theta\right) \\
&= \frac{1}{8}\theta - \frac{1}{8}\sin\theta\cos\theta\left(\cos^2\theta - \sin^2\theta\right)
\end{split}
\end{equation}

Or, defining $r_0^2 = \ell^2 + R^2$, then:
\begin{equation}
\ddot I_{11} = \frac{2\pi}{R^2}\left[
\frac{\pi}{8}
-\frac{1}{4}\arccos\left(\frac{\ell}{r_0}\right)
+\frac{1}{4}\frac{R\ell}{r_0^2}\left(\frac{\ell^2}{r_0^2} - \frac{R^2}{r_0^2}\right)
\right]
\end{equation}

Or, re-writing both:
\begin{equation}\label{cylindrical_I0011_exact}
\begin{split}
\ddot I_{00} &= 2\pi\left[\pi - 2\arccos\left(\frac{\ell}{r_0}\right)\right] \\
\ddot I_{11} &= \frac{2\pi}{R^2}\left[\frac{\pi}{8}
-\frac{1}{4}\arccos\left(\frac{\ell}{r_0}\right)
+\frac{1}{4}\frac{R\ell}{\ell^2 + R^2}\frac{\ell^2 - R^2}{\ell^2 + R^2}\right]
\end{split}
\end{equation}

If $\ell = 0$, then $\ddot I_{00} = \ddot I_{11} = 0$, because, $\arccos(0) = \pi/2$. Expansion of $\arccos$ is:
\begin{equation}
\arccos(x) = \frac{\pi}{2} - x - \frac{x^3}{6} - \frac{3x^5}{40} + O(x^7)
\end{equation}

Thus, inserting the expansion into (\ref{cylindrical_I0011_exact}), it becomes, for $\ell\ll R$:
\begin{equation}\label{cylindrical_I0011_ellllR}
\begin{split}
I_{00} &= \frac{4\pi\ell}{r_0} \\
\ddot I_{11} &=
\frac{\pi}{2R^2}\left[
\left(\frac{\ell}{r_0}\right)
+\frac{R\ell}{\ell^2 + R^2}\frac{\ell^2 - R^2}{\ell^2 + R^2}\right]
\end{split}
\end{equation}

\subsection{Cylindrical G-Integrals}
The integral to solve:
\begin{equation}
G_l = \int_S r^{-l} P_l(\cos\theta)\cos\theta dS,\quad\quad
\end{equation}

The same recipe will be followed, substituting $r$ and writing $dS = r^2 d\theta d\phi$. Thus:
\begin{equation}
\ddot G_l = \int_0^{2\pi} \int_{\theta_0}^{\pi - \theta_0} r^{-l} P_l(\cos\theta) \cos\theta r^2 d\theta d\phi
= 2\pi \int_{\theta_0}^{\pi - \theta_0} \left(\frac{R}{\sin\theta}\right)^{-l+2} P_l(\cos\theta) \cos\theta d\theta
\end{equation}

Therefore, finalizing:
\begin{equation}
\ddot G_l = 2\pi R^{-l+2} \int_{\theta_0}^{\pi - \theta_0} P_l(\cos\theta) \left[\sin\theta\right]^{l-2} \cos\theta d\theta
\end{equation}

As before, applying substitution $x=\cos\theta$:
\begin{equation} \label{cylindrical_Gl}
\ddot G_l = 2\pi R^{-l+2} \int_{-a_c}^{a_c} x\left[\sqrt{1 - x^2}\right]^{l-3} P_l(x) dx
\end{equation}

Function $\sqrt{1-x^2}$ is always even. If $l$ is odd, then $x P_l$ is even. If $l$ is even, then $x P_l$ is odd. Because the integral limits are symmetric, if the entire integrand is odd, then the integral must evaluate to zero, which happens when $l$ is even. Thus, $\ddot G_{2l} = 0$.

One can now calculate the value of $G_1$.
\begin{equation}
\ddot G_1 = 2\pi R\int_{-a_c}^{a_c}\frac{x^2}{1 - x^2}dx = 2\pi R\left[\arctanh(x) - x\right]_{-a_c}^{a_c}
\end{equation}

Where substitution $x \leftarrow \tanh x$ was done to solve the integral. Therefore:
\begin{equation} \label{cylindrical_G1}
\ddot G_1 = 4\pi R\left[
\arctanh\left(\frac{\ell}{r_0}\right) - \frac{\ell}{r_0}
\right]
\end{equation}

If it is the case that $\ell\ll R$, then $a_c\ll 1$, meaning, the Taylor expand $\arctanh$ would be good approximation.
\begin{equation}
\arctanh(x) = x + \frac{x^3}{3} + \frac{x^5}{5} + O(x^7)
\end{equation}

Therefore, for $\ell\ll R$, expression (\ref{cylindrical_G1}) can be written:
\begin{equation} \label{cylindrical_G1_ellllR}
\ddot G_1 = \frac{4}{3}\pi R\left[
\left(\frac{\ell}{r_0}\right)^3
+\frac{3}{5}\left(\frac{\ell}{r_0}\right)^5
+\cdots
\right]
\end{equation}

\subsection{Hemispherical area element}
Expression (\ref{hemispherical_J}), as derived at Appendix A, show the area element of the hemisphere:
\begin{equation} \label{sus_hemispherical_J}
J = r^2\sin\theta\sqrt{1 + \frac{\ell^2}{r^2}\sin^2\theta\left(1 + \frac{\ell\cos\theta}{\sqrt{R^2 - \ell^2\sin^2\theta}}\right)^2}
\end{equation}

Where $r$ is shown in expression (\ref{suspended_hemispherical_r_signed_l}) [also in Appendix A], written below:
\begin{equation} \label{sus_hemispherical_r}
r = \ell\cos\theta + \sqrt{R^2 - \ell^2\sin^2\theta}
\end{equation}

In above equation, the value of $\ell$ is signed: $\ell > 0$ for the hemisphere above the z-axis, and $\ell < 0$ for the hemisphere below the z-axis.

The area element $J$ can be approximated as follows: if $\ell\gg R$ (also $h\gg R$), as in (\ref{hemispherical_J_ellggR}), then:
\begin{equation}
J = r^2\sin\theta\sqrt{1 + \frac{\ell^2}{r^2}\frac{\ell^2\cos^2\theta\sin^2\theta}{R^2 - \ell^2\sin^2\theta}},
\quad\quad\frac{\ell}{R}\gg 1
\end{equation}

On the other hand, if $\ell\ll R$, as in (\ref{hemispherical_J_ellllR}), then:
\begin{equation}
J = r^2\sin\theta\sqrt{1 + \frac{\ell^2}{r^2}\sin^2\theta}
\quad\quad\frac{\ell}{R}\ll 1
\end{equation}

\subsection{Suspended hemispherical G-Integrals}
Recalling $dS = J d\theta d\phi$ where $(\ref{hemispherical_J_ellggR})$ shows the value equation for $J$. Recall the integral to solve:
$$
\mathring G_l = \int_S r^{-l} P_l(\cos\theta) \cos\theta dS
$$

Inserting the exact $J$ in expression above, and doing parity considerations, just like it was done with the prolate spheroid, the integral can be simplified as in equation (\ref{suspended_hemispherical_Gl_exact}) [located in Appendix C], written below:
\begin{equation} \label{hemispherical_Gl}
\begin{split}
\mathring G_{2l} &= 0 \\
\mathring G_{2l+1} &= 4\pi \int_{\ell/r_0}^1 r^{-2l+1} P_{2l+1}(x) x J_A dx
\end{split}
\end{equation}

Where:
\begin{equation} \label{sus_r0_JA}
r_0^2 = \ell^2 + R^2, \quad\quad
J = r^2\sin\theta J_A
\end{equation}

Therefore, just like the prolate spheroid, a suspended hemisphere also has $G_{l} = 0$ if $l$ is even. About solving for the odd values of $l$, above integral is still too complicated for a direct substitution of $r$ and $J$. If $\ell = 0$, the calculations reduces to the case of the sphere, which was solved exactly using this method, and found that a single dipole describes the entire field. It is expected, that, higher deviations from the spherical shape will cause more and more contributions of multipole elements. With that in mind, it makes sense to consider $\ell\ll R$: in this condition, the dipole is still the dominant coefficient, and higher multipole contributions will probably be small.

With that in mind, one can Taylor expand the integral $r^n P_l(x) x J_A$ around the center $\ell = 0$ until second order, and simplify above integral to the expression written at (\ref{suspended_hemispherical_Gl_hllR_integral_solution}) [detailed calculations at Appendix C] case, written below:
\begin{equation}
\begin{split}
G_l &= 4\pi R^{-l+2} \int_{\ell/r_0}^1  P_l(x) x dx \\
&+ 4\pi (-l+2) R^{-l+1}\ell \int_{\ell/r_0}^1  P_l(x) x^2 dx \\
&+ (l^2 - 4l + 2)\frac{4\pi}{R^l}\ell^2 \int_{\ell/r_0}^1 x^3 P_l(x) dx \\
&+ \frac{4l\pi}{R^l}\ell^2 \int_{\ell/r_0}^1 x P_l(x) dx \\
\end{split}
\end{equation}

These integrals are easier to solve, requiring only the calculations of moments of Legendre polynomials, and this can be solved exactly. For instance, expression (\ref{Legendre_first_moment}) [located at Appendix C] is the result of the first moment of Legendre polynomials $P_l$. A description of how to calculate higher moments can be found in Appendix C. Nevertheless, the final analytic solutions for a general $G_l$ are big and complicated, and, they do not provide intuitive grounds to base the analysis. Furthermore, solutions of moments of $P_i P_j$, required for $I_{ij}$, are not known. Thus, even if one writes complete expressions for $G_l$, the $I_{ij}$ would still be lacking, rendering the $G_l$ expressions useless.

With all of that in mind, one can find specific values of $l$, which will be done for $G_0$ and $G_1$. It is already known from expression (\ref{hemispherical_Gl}) that $G_0 = 0$. Solving (\ref{hemispherical_Gl}) for $l=1$ yields the expressions (\ref{suspended_hemispherical_G1_ellllR}) [located at Appendix C], written below:
\begin{equation} \label{hemispherical_G1_ellllR}
\begin{split}
G_1 &= \frac{4}{3}\pi R\left[1 - \left(\frac{\ell}{r_0}\right)^3\right]
+ \pi\ell\left[1 - \left(\frac{\ell}{r_0}\right)^4\right] \\
&- \frac{4\pi}{5R} \ell^2\left[1 - \left(\frac{\ell}{r_0}\right)^5\right]
+ \frac{4\pi}{3R} \ell^2\left[1 - \left(\frac{\ell}{r_0}\right)^3\right] \\
\end{split}
\end{equation}

Just like one could anticipate, if $\ell = 0$, the value of $G_1$ becomes the calculated value for the sphere, that is $4/3 \pi R$, as can be seen in equation (\ref{spherical_Gl_exact}). The other terms, are the Taylor corrections for low values of $\ell$.

For the case where $\ell\gg R$, a calculation of $G_1$ can also be found in Appendix C.

\subsection{Suspended hemispherical I-Integrals}
Recall the I-Integral at (\ref{IG_integrals}), that is:
\begin{equation}
\mathring I_{ij} = \int_S r^{-i-j-2} P_i(\cos\theta) P_j(\cos\theta) dS
\end{equation}

As before, substitute area element $dS = Jd\theta d\phi$, and thus:
\begin{equation}
\mathring I_{ij} = \int_0^{2\pi} \int_{0}^{\theta_0} r^{-i-j-2} P_i(\cos\theta) P_j(\cos\theta) J d\theta d\phi
\end{equation}

The same procedure that was done for the $G_l$ integrals can be done to calculate $I_{ij}$. It is shown in expression at Appendix C, (\ref{suspended_hemispherical_Iij_exact}), exact expression for the $I_{ij}$ integrals, shown below:
\begin{equation} \label{suspended_hemispherical_Iij_exact2}
\mathring I_{ij} =
4\pi\int_{\ell/r_0}^1 r^{-i-j} P_i(x) P_j(x) J_A dx,
\quad\quad i+j\in\{0, 2, 4, 6, 8, \dots\}
\end{equation}

And $I_{ij} = 0$ otherwise. Again, doing a Taylor expand around $\ell = 0$, where detailed derivation is at Appendix C, final result can be found in equation (\ref{suspended_hemispherical_Iij_ellllR}), written below:
\begin{equation} \label{suspended_hemispherical_Iij_ellllR2}
\begin{split}
\mathring I_{ij} &=
\frac{4\pi}{R^{i+j}}
\int_{\ell/r_0}^1 P_i(x) P_j(x) dx \\
&-(i+j)\frac{4\pi}{R^{i+j+1}}\ell
\int_{\ell/r_0}^1 x P_i(x) P_j(x) dx \\
&+\left[(i+j)^2 - 2\right]\frac{4\pi}{R^{i+j+2}}\ell^2
\int_{\ell/r_0}^1 x^2 P_i(x) P_j(x) dx \\
&+\left[(i+j) + 2\right]\frac{4\pi}{R^{i+j+2}}\ell^2
\int_{\ell/r_0}^1 P_i(x) P_j(x) dx \\
\end{split}
\end{equation}

Given $\mathring I_{01} = \mathring I_{10} = 0$ because $1+0$ is odd, expressions (\ref{suspended_hemispherical_I00_ellllR}) and (\ref{suspended_hemispherical_I11_ellllR}) [also at Appendix C] show the values of $\mathring I_{00}$ and $\mathring I_{11}$, written below:
\begin{equation} \label{hemispherical_I0011_ellllR}
\begin{split}
\mathring I_{00} &= 4\pi \cdot\left[1 - \frac{\ell}{r_0}\right]
-\frac{8\pi}{3R^2}\ell^2\left[1 - \left(\frac{\ell}{r_0}\right)^3\right]
+\frac{8\pi}{R^2}\ell^2\left[1 - \frac{\ell}{r_0}\right]
\\
\mathring I_{11} &= \frac{4\pi}{3R^2}\left[1 - \left(\frac{\ell}{r_0}\right)^3\right]
-\frac{2\pi}{R^3}\ell\left[1 - \left(\frac{\ell}{r_0}\right)^4\right]
+\frac{8\pi}{5R^4}\ell^2\left[1 - \left(\frac{\ell}{r_0}\right)^5\right]
+\frac{16\pi}{3R^4}\ell^2\left[1 - \left(\frac{\ell}{r_0}\right)^3\right]
\end{split}
\end{equation}

It is relevant to notice yet agin again, how they generalize the values found for the sphere ($\ell=0$) at equation (\ref{spherical_Iij_exact}), namely:
\begin{equation}
I_{00} = 4\pi,\quad\quad
I_{11} = \frac{4\pi}{3R^2}
\end{equation}

\section{HCP Integrals}
With all of that, one can come up with the HCP Integrals. The HCP shape is composed of a hemisphere of radius $R$ on top of a cylinder of length $\ell$ in a plate. It was already discussed that, it was much better to, instead, compute the equivalent problem of a hemisphere of radius $R$ on top of a cylinder of radius $2\ell$ centered on the z-axis, and another hemisphere of radius $R$ on the bottom of the cylinder. That was the case when it was calculated: $\mathring I_{ij}$ and $\mathring G_l$ are the integrals calculated on both hemispheres, above and below the z-axis, while $\ddot I_{ij}$ and $\ddot G_l$ are the integrals for a cylinder of length $2\ell$, $\ell$ above the z-axis, and $\ell$ below the z-axis. Therefore, for a complete HCP shape:
\begin{equation} \label{HCP_IG_sum}
\begin{split}
G_l &= \ddot G_l + \mathring G_l \\
I_{ij} &= \ddot I_{ij} + \mathring I_{ij}
\end{split}
\end{equation}

That is true, precisely because, for arbitrary function $f$ is valid:
\begin{equation}
\int_{\text{HCP}} fdS
=\int_{\text{Suspended Hemispheres}} fdS
+\int_{2\ell\text{-length Cylinder}} fdS
\end{equation}

\section{HCP Multipoles}
Notice that, it was proved for the exact case, that, $\ddot G_{2l} = \mathring G_{2l} = 0$. In addition, $\ddot I_{ij} = \mathring I_{ij} = 0$ for $i+j$ even. Thus, just like the case with the prolate spheroid, the HCP linear system is as described in equation (\ref{spheroidal_system}). It was shown in equation (\ref{spheroidal_linear_system_zero}) that, for such case, $A_{2l} = 0$ for every $l$, and, by equation (\ref{spheroidal_linear_system_nonzero}) only odd $l$ at $A_l$ contributes to the potential (\ref{potential}).

Hereby, using similar arguments of the prolate spheroidal case, one can conclude that \textbf{only odd multipoles will contribute to the potential of a HCP shape}, that is, dipole, octopole, etc.

Recalling the multipole moments expression (\ref{multipole_moment}), if $Q_{2l} = 0$, then, the induced charge density over the surface $\sigma$ has to be of a certain way.
\begin{equation}
Q_{2l} = \int_S \sigma(\mathbf r) r^{2l} P_{2l}(\cos\theta) dS(\mathbf r)
\end{equation}

Because the shape is axially symmetric, integration can be done in domains $\phi\in[0, 2\pi]$ and $\theta\in [0, \pi]$, where, the $\phi$ integral will yield $2\pi$, thus:
\begin{equation}
Q_{2l} = 2\pi\int_{0}^\pi \sigma(\theta) r(\theta)^{2l+2} P_{2l}(\cos\theta)\sin\theta J_A(\theta) d\theta
\end{equation}

Substituting $x=\cos\theta$, then:
\begin{equation}
0 = Q_{2l} = \int_{-1}^1 \sigma(x) r(x)^{2l+2} P_{2l}(x) J_A(x) dx
\end{equation}

It was shown for a HCP shape (both cylinder and suspended hemisphere), that $r(x)$, $J_A(x)$ were even functions. Furthermore, $P_{2l}$ is even. Notice then, that, if $\sigma(x)$ is odd, then the entire integrand is odd, then all $Q_{2l}$ is automatically zero. One way for that to happen, is, $\sigma(x)$ being an odd function.

\begin{theorem}
If the induced charge density $\sigma(x)$ is an odd function, then $Q_{2l} = 0$.
\end{theorem}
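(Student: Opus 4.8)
The plan is to establish the claim by a direct parity argument on the integrand of the multipole moment. I would begin from the expression already obtained in the excerpt, namely
\begin{equation}
Q_{2l} = \int_{-1}^1 \sigma(x)\, r(x)^{2l+2}\, P_{2l}(x)\, J_A(x)\, dx,
\end{equation}
which follows from the axial symmetry (the trivial $\phi$-integration supplying a factor of $2\pi$) and the substitution $x = \cos\theta$. The decisive structural feature is that the interval of integration, $[-1,1]$, is symmetric about the origin, so the integral is governed entirely by the parity of the integrand in $x$.

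The next step is to record the parity of each factor. By hypothesis $\sigma(x)$ is odd. The factor $P_{2l}(x)$ is even, since a Legendre polynomial of even degree contains only even powers of $x$. The factors $r(x)$ and $J_A(x)$ have already been shown to be even on both the cylinder and the suspended hemispheres, and hence on the whole HCP surface through the additive decomposition in (\ref{HCP_IG_sum}); therefore $r(x)^{2l+2}$ is even as well. Since the product of a single odd function with any number of even functions is odd, the entire integrand is an odd function of $x$.

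The conclusion is then immediate: an odd function integrated over an interval symmetric about the origin vanishes, so $Q_{2l} = 0$ for every $l$, which is exactly the assertion of the theorem.

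The honest assessment is that there is no serious obstacle in the parity bookkeeping itself, which is a one-line observation. The only point requiring care is that the argument presupposes the evenness of $r(x)$ and $J_A(x)$ across the \emph{full} HCP shape; one must be confident that combining the upper and lower hemispheres (with the signed-$\ell$ convention of (\ref{sus_hemispherical_r})) together with the symmetric $2\ell$-length cylinder genuinely produces functions even in $x$, rather than establishing evenness only piecewise and on incompatible intervals. Granting those previously derived facts, the proof reduces entirely to the statement that the integrand is odd over a symmetric interval.
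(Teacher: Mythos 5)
Your proof is correct and follows essentially the same route as the paper: write $Q_{2l}$ as an integral over $x\in[-1,1]$ using axial symmetry, invoke the previously established evenness of $r(x)$ and $J_A(x)$ together with the evenness of $P_{2l}(x)$, and conclude that the integrand is odd so the integral over the symmetric interval vanishes. Your closing caveat about evenness holding on the full capsule surface (not just piecewise) is exactly the point the paper settles with its signed-$\ell$ convention, so nothing further is needed.
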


Evidently, the goal is to prove the converse, that is, if $Q_{2l} = 0$, then $\sigma(x)$ is odd. That can be done, if one has an expression for $Q$:
\begin{equation}
Q_{l} = \int_{-1}^1 \sigma(x) r(x)^{l+2} P_{l}(x) J_A(x) dx
\end{equation}

Notice that, expression above shows that $Q_l$ are proportional to the coefficients of the Legendre expansion of the function $\sigma(x) r(x)^{l+2} J_A(x)$, that is:
\begin{equation}
r(x)^{l+2} J_A(x) \sigma(x) = \sum_{n=0}^\infty\frac{2Q_l}{2l+1} P_l(x)
\end{equation}

Therefore, if all $Q_{2l} = 0$, it means that only odd Legendre polynomials contribute to the summation, meaning $r(x)^{l+2} J_A(x) \sigma(x)$ must be an odd function. Because it is already known that $r(x)$ and $J_A(x)$ are even, then, $\sigma(x)$ must be odd. It was thus, proven the following theorem:

\begin{theorem} \label{HCP_odd_sigma_x}
$Q_{2l} = 0$ for all $l\in\mathbb{N}\cup\{0\}$, if and only if, $\sigma(x) + \sigma(-x) = 0$.
\end{theorem}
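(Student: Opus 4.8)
The statement is an equivalence, and one half is already in hand: the immediately preceding theorem shows that if $\sigma$ is odd then every even moment $Q_{2l}$ vanishes, because $\sigma(x)\, r(x)^{2l+2} P_{2l}(x) J_A(x)$ is then the product of one odd factor with three even factors and so integrates to zero over the symmetric interval $[-1,1]$. The plan is therefore to devote the whole argument to the converse: assuming $Q_{2l}=0$ for every $l\in\mathbb{N}\cup\{0\}$, deduce $\sigma(x)+\sigma(-x)=0$.

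First I would split $\sigma$ into its even and odd parts, $\sigma=\sigma_e+\sigma_o$ with $\sigma_e(x)=\tfrac12\bigl(\sigma(x)+\sigma(-x)\bigr)$, so that the target becomes $\sigma_e\equiv 0$. Since $r(x)$ and $J_A(x)$ were already established to be even for both the cylindrical and the suspended-hemisphere contributions, the odd part contributes nothing to an even moment: the integrand $\sigma_o(x)\, r(x)^{2l+2} P_{2l}(x) J_A(x)$ is odd and drops out. Hence
\[
Q_{2l}=\int_{-1}^1 \sigma_e(x)\, r(x)^{2l+2} P_{2l}(x)\, J_A(x)\, dx .
\]
Introducing the fixed, strictly positive, even weight $W(x)=r(x)^2 J_A(x)$ and the fixed even function $g(x)=\sigma_e(x) W(x)$, the hypothesis reads
\[
\int_{-1}^1 g(x)\, r(x)^{2l}\, P_{2l}(x)\, dx = 0 \qquad (l=0,1,2,\dots).
\]
Because $W>0$, proving $g\equiv 0$ is equivalent to $\sigma_e\equiv 0$, so the entire problem collapses to one completeness question: is the family $\{\, r(x)^{2l}P_{2l}(x)\,\}_{l\ge 0}$ complete in the even part of $L^2[-1,1]$?

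The natural route is the substitution $u=x^2\in[0,1]$, under which $P_{2l}(x)=q_l(u)$ with $q_l$ a polynomial of exact degree $l$, so that $\{q_l\}_{l\ge0}$ is a triangular, hence complete, polynomial family in $u$; the remaining factor becomes $r(x)^{2l}=s(u)^l$ with $s(u)=r(x)^2$ even and positive. After absorbing the Jacobian $(2\sqrt u)^{-1}$ into $\tilde g$, the hypothesis becomes
\[
\int_0^1 \tilde g(u)\, s(u)^{l}\, q_l(u)\, du = 0 \qquad (l=0,1,2,\dots).
\]
I expect this last step to be the main obstacle. Because the weight $s(u)^l$ depends on $l$, these are genuinely \emph{not} the even Legendre polynomials, and the shortcut of reading off the Legendre coefficients of a single fixed function — precisely what makes the pure sphere case ($r\equiv R$ constant, so $s^l$ is a harmless nonzero constant) immediate — does not apply. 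The difficulty is structural: at each level only the single new function $s^l q_l$ is adjoined, so the span grows by one dimension per level and completeness is a lacunary Müntz/Chebyshev-system question rather than a routine orthogonality statement. The route I would attempt is to exploit the strict monotonicity and positivity of $s$ (transparent for the cylinder, where $s(u)=R^2/(1-u)$, and to be verified for the hemisphere piece) together with a moment/Weierstrass density argument to force $\tilde g\equiv 0$; closing this density estimate rigorously, and controlling the square-root Jacobian near $x=0$, is the technical heart of the converse. Everything else — the parity bookkeeping of Step~2 and the final passage from $g\equiv0$ through $W>0$ to $\sigma(x)+\sigma(-x)=0$ — is straightforward once this completeness is secured.
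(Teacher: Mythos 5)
Your forward direction and the even/odd bookkeeping are correct and match the paper: the odd part of $\sigma$ drops out of every even moment, so the converse reduces to showing that $\int_{-1}^1 g(x)\,r(x)^{2l}P_{2l}(x)\,dx=0$ for all $l$ forces $g\equiv 0$, where $g=\sigma_e\,r^2 J_A$ and $r$, $J_A$ are even with $r^2J_A>0$. But that is where your proposal stops: you pose the completeness of the family $\{r(x)^{2l}P_{2l}(x)\}_{l\ge 0}$ in the even part of $L^2[-1,1]$ as the ``technical heart'' and leave it unproven. Since the converse is the entire content of this theorem beyond the preceding one, this is a genuine gap; as written, the statement is not proved.

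It is worth saying, though, that the spot where you got stuck is exactly the spot where the paper's own proof is unsound. The paper asserts that the $Q_l$ are, up to the factor $2/(2l+1)$, the Legendre coefficients of the function $\sigma(x)\,r(x)^{l+2}J_A(x)$, and concludes from $Q_{2l}=0$ that this function is odd. But that function \emph{depends on} $l$: each $Q_l$ is a moment of a different function $f_l=\sigma\,r^{l+2}J_A$, so the claimed expansion $\sigma\,r^{l+2}J_A=\sum_l \frac{2Q_l}{2l+1}P_l$ is meaningful only when $r$ is constant, i.e.\ for the sphere --- precisely the observation you make about the $l$-dependent weight $s(u)^l$. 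So your diagnosis of the structural difficulty is a legitimate criticism of the paper's argument, not a failure to see an easy step. For completeness: the theorem is in fact true, and the clean way to close it is potential theory rather than a M\"untz-type density estimate. The even multipoles of $\sigma$ coincide with \emph{all} multipoles of $\sigma_e$ (the odd moments of an even density vanish by parity); if all multipoles of $\sigma_e$ vanish, the potential it generates vanishes outside a ball containing $S$, hence throughout the connected exterior by real-analyticity of harmonic functions; continuity of the single-layer potential then forces the interior potential to vanish as well, and the jump relation for the normal derivative gives $\sigma_e\equiv 0$. Substituting that argument for your unproven density claim would complete your proof --- and would also repair the paper's.
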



Recall also, that, the HCP shape was considered to be one resembling a capsule, because of the convenience of excluding the plane from the integrals. Therefore, $\sigma(x), x>0$ is the distribution along the region of the hemisphere on a post, while $\sigma(x), x < 0$ is the distribution along the region on the plane (or, equivalently, on the mirror hemisphere on a post). This connects the charge distributions on the plane, and on the hemisphere and the post. More formally, the charge distribution over the plane can be calculated by means of the local electric field solution $\mathbf E(\mathbf r)$, normal to the plane.

More, if $l$ is even, then the $l$-multipole of the plane is equal in magnitude, but opposite in sign of the $l$-multipole of the hemisphere on a post (and thus, both of them sum to zero). And, if $l$ is odd, they are equal in magnitude and in sign.

\subsection{Line of charge model}
One can seek to explore even more the result of zero even multipoles, and apply it to other models. Let a line of charge along the z-axis, going from $-a < z < a$ for some $a > 0$. Such a line has a distribution of linear charge density $\lambda(z)$. The axial multipoles of such a system can be calculated:
\begin{equation} \label{multipole_moment_line_of_charge}
M_l = \int_{-a}^a \lambda(z) z^l dz
\end{equation}

If one is trying to approximate the potential of a HCP shape by using a line of charge, it is clear that the even multipoles must be zero. Also, from the expression above, if $\lambda(z)$ is an odd function, the multipole condition is automatically satisfied. That happens precisely because $z^l$ is even if $l$ is even. Thus, the following is valid:

\begin{theorem} \label{linear_odd_lambda_z}
If $\lambda(z)$ is the linear charge density of a line from $[-a, a]$ is an odd function, then the multipoles $M_{2l} = 0$.
\end{theorem}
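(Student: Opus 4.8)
The plan is to prove this by a direct parity argument on the defining integral, mirroring the reasoning already used to establish Theorem \ref{HCP_odd_sigma_x}. First I would write down the multipole moment for an even index, specializing $l \mapsto 2l$ in the definition (\ref{multipole_moment_line_of_charge}):
$$
M_{2l} = \int_{-a}^a \lambda(z)\, z^{2l}\, dz.
$$
The integration interval $[-a, a]$ is symmetric about the origin, which is the feature that will make the parity of the integrand decisive.

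Next I would classify the parity of each factor in the integrand. The monomial $z^{2l}$ is an even function for every $l \in \mathbb{N}\cup\{0\}$, since its exponent is even. By hypothesis $\lambda(z)$ is odd, that is $\lambda(-z) = -\lambda(z)$. The product of an odd function and an even function is odd, so the full integrand $\lambda(z)\, z^{2l}$ is an odd function of $z$. The final step is to invoke the standard fact that the integral of an odd function over an interval symmetric about the origin vanishes; substituting $z \mapsto -z$ in one half of the integral cancels it against the other half, giving $M_{2l} = 0$ for all $l$.

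I do not anticipate a genuine obstacle here, as the statement is an elementary consequence of symmetry rather than a deep fact: the only thing that must be verified carefully is that the parity assignments combine correctly (odd times even is odd, not even) and that the domain is indeed symmetric, both of which hold by construction. The content of the result is really in its \emph{application} rather than its proof: it shows that any trial line-charge density used to approximate the HCP potential must be taken odd if it is to reproduce the vanishing of all even multipoles established for the HCP shape, exactly paralleling the constraint $\sigma(x) + \sigma(-x) = 0$ derived in Theorem \ref{HCP_odd_sigma_x} for the surface charge density.
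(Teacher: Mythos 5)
Your proposal is correct and is essentially the paper's own argument: the paper justifies the theorem by noting that $z^{2l}$ is even, $\lambda(z)$ is odd, and the integral of the resulting odd integrand over the symmetric interval $[-a,a]$ vanishes. You have merely spelled out the same parity reasoning in slightly more detail, so there is nothing to add or correct.
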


Again, one would like to prove the converse, however, the lack of orthogonality of the monomials $z^l$ prevents a proof just like the surface charge density $\sigma(x)$ from a HCP shape.


Because of this, one is encouraged to choose an odd function when modeling with a line of charge. For instance, the linear charge density in work \cite{india} was chosen to be an odd function.



\section{Interacting HCPs}
\subsection{Two interacting HCPs}
With two HCP shapes, separated by a distance $c$, it was proven there's no image charge (because it is the zeroth multipole, and zero is even), thus, the next multipole contribution is extremely likely to be of a dipole. If $c$ is large enough as compared to the dimensions of a HCP shape, the dominant interaction will be the dipole. With the knowledge of the dipole moment $Q_1$ and the FEF of an isolated $\gamma_a$, it is possible to calculate how much the FEF will fall.

\begin{figure} \label{interacting_hcp}
\includegraphics[scale=0.20]{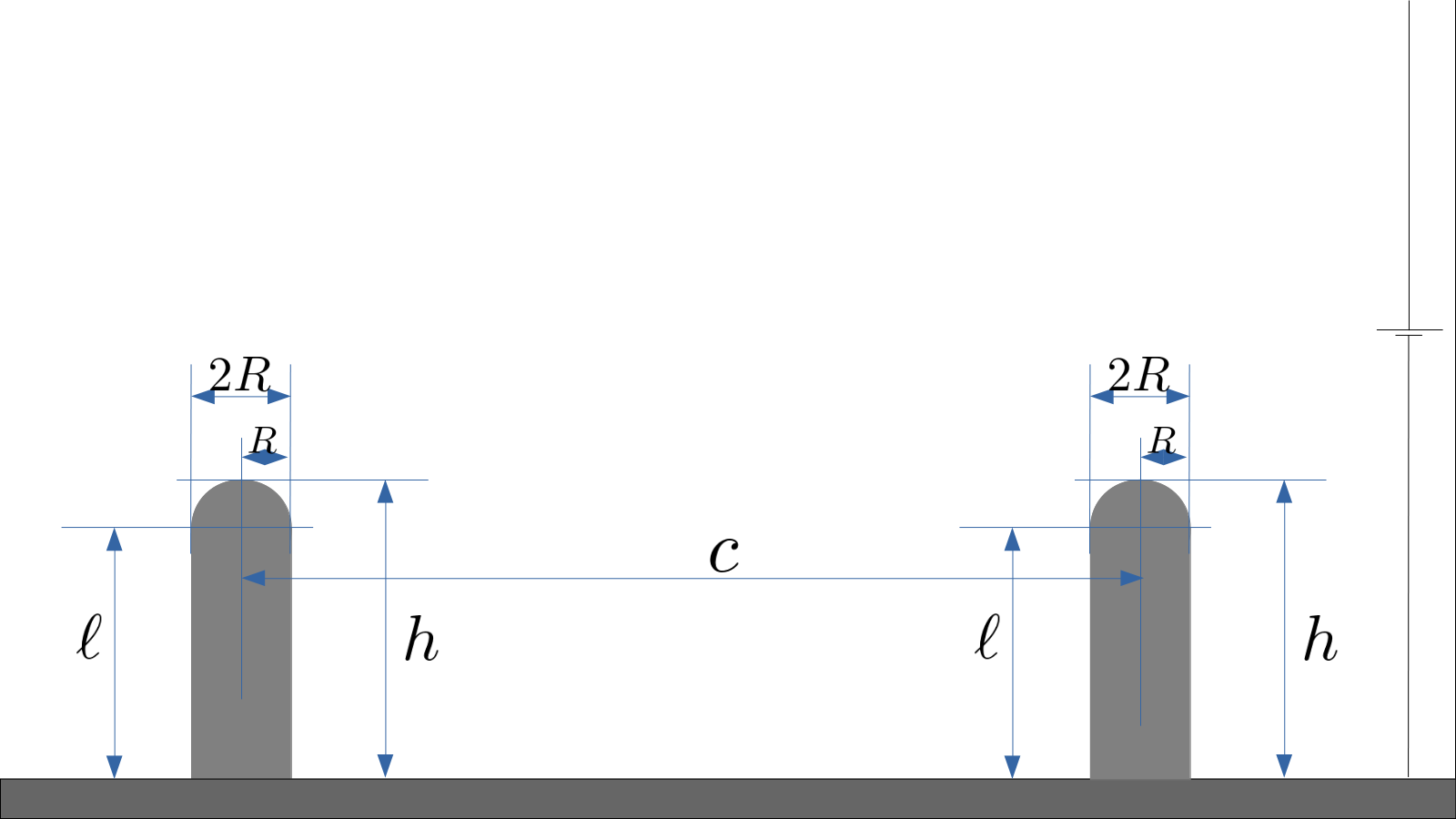}
\caption{Two interacting HCP shapes}
\end{figure}

The electric field of a dipole:
\begin{equation}
\mathbf E(\mathbf r) = \frac{3(\mathbf p\cdot\mathbf{\hat r})\mathbf{\hat r} - \mathbf p}{4\pi\epsilon_0 r^3}
\end{equation}

Therefore, considering the multipoles exactly at the $z=0$ at their respective positions, which, without loss of generality will be considered to be $(0, 0, 0)$ and $(c, 0, 0)$, the FEF is reduced to:
\begin{equation}
\gamma_a' = \gamma_a + \frac{\mathbf E(c, 0, h)\cdot\mathbf{\hat z}}{E_0}
\end{equation}

Where $E_0$ is the external field. Here, it was explicitly considered only the contribution in the $\mathbf{\hat z}$ direction, because, the electric field at the apex must parallel to the normal vector of the surface on the apex, which, for a HCP, happens to be $\mathbf{\hat z}$.

The distance from the dipole to the tip of the HCP is, $r = \sqrt{c^2 + h^2}$. Furthermore, it was shown earlier that all $A_l$ (and thus, all multipoles $Q_l$) scales linearly with the external field $E_0$. Therefore, the new $\gamma_a'$ because the presence of the other HCP shape, becomes:
\begin{equation}
\gamma_a' = \gamma_a + \frac{Q_1}{E_0}\frac{3\cos(\mathbf{\hat p}, \mathbf{\hat r})\mathbf{\hat r} - \mathbf{\hat p}}{4\pi\epsilon_0\sqrt{c^2 + h^2}^3}\cdot\mathbf{\hat z}
\end{equation}

Because $\mathbf p = Q_1\mathbf{\hat z}$, then:
\begin{equation}
\gamma_a' = \gamma_a + \frac{Q_1}{E_0}\frac{3\cos^2(\mathbf{\hat p}, \mathbf{\hat r}) - 1}{4\pi\epsilon_0\sqrt{c^2 + h^2}^3}
\end{equation}

Where:
\begin{equation}
\cos(\mathbf{\hat r}, \mathbf{\hat z}) = \frac{h}{r} = \frac{h}{\sqrt{h^2 + c^2}},\quad\quad \\
\sin(\mathbf{\hat r}, \mathbf{\hat z}) = \frac{c}{r} = \frac{c}{\sqrt{h^2 + c^2}} \\
\end{equation}

Thus:
\begin{equation}
\gamma_a' = \gamma_a + \frac{Q_1}{E_0}\frac{1}{4\pi\epsilon_0(c^2 + h^2)^{3/2}}
\left[
\frac{3h^2}{h^2 + c^2} - 1
\right]
\end{equation}

Or, equivalently, using equation (\ref{Ql_Al}) connecting $Q_1$ and $A_1$, instead, one can write in terms of $\tilde A_1$:
\begin{equation}
\gamma_a' = \gamma_a + \frac{\tilde A_1}{(c^2 + h^2)^{3/2}}
\left[
\frac{3h^2}{h^2 + c^2} - 1
\right]
\end{equation}

Which can also be written:
\begin{equation} \label{HCP_delta_third_power_law}
\delta =
\frac{\gamma_a' - \gamma_a}{\gamma_a}
=-\frac{1}{\gamma_a}
\frac{\tilde A_1}{(c^2 + h^2)^{3/2}}
\left[
1 -
\frac{3h^2}{h^2 + c^2}
\right]
\end{equation}

Therefore, $\delta$ decreases by the third power of $c$, until a certain value $c_k$, and it begins to increase again. Such value can be calculated from expression above:
\begin{equation}
\frac{3h^2}{h^2 + c_k^2} - 1 = 0
\quad\iff\quad
c_k = h\sqrt{2}
\end{equation}

That is:
\begin{equation}
\begin{split}
c > h\sqrt{2} &\quad\implies\quad\delta < 0 \\
c = h\sqrt 2 &\quad\implies\quad\delta = 0 \\
c < h\sqrt 2 &\quad\implies\quad\delta > 0
\end{split}
\end{equation}

That is, if $h$ is comparable to the distance between the posts $d$, then the variation in FEF begins to decrease, until zero is reached. It is important to notice this treatment considered the interaction of them to be identical to two independent dipoles.

It is not the case that at $h\sqrt{2}$ the field starts decreasing, because, at such short distances, octopole and higher multipole might contribute. However, what was proven is: if electrostatic interactions are negligible (if the two systems can be treated independently), then, the dipole moment contributes by decreasing the FEF until $h\sqrt{2}$. In fact, it is not even clear if the independent approximation is valid for $d\approx h$. Thus, what was done must be regarded as a good approximation only for large distances $c$.


As for the pre-factor $K$ in terms of the normalized distance $s = c/h$, one can get:
\begin{equation} \label{pre_factor_1}
\delta = -\frac{1}{\gamma_a}\frac{\tilde A_1}{h^3(1 + (c/h)^2)^{3/2}}
\approx -\frac{1}{\gamma_a}\frac{\tilde A_1}{h^3 s^3}
\quad\mbox{if}\quad\frac{c}{h}\gg 1
\end{equation}

From (\ref{FEF}), if $\gamma_a$ depends only on the aspect ratio term by term, then:
\begin{equation} \label{normalized_moment}
\frac{\tilde A_l}{h^{l+2}} = f_l(\nu)
\quad\implies\quad
\tilde A_l = h^{l+2} f_l(\nu)
\end{equation}

Where $f_l$ is some function. This implies $\tilde A_l$ (and hence all multipole moments $Q_l$, because of (\ref{Ql_tilde_Al})) do not depend on the aspect ratio, rather, they need to be normalized by $h^{l+2}$. Considering the special case $l=1$, the dipole term: $\tilde A_1 = h^3 f_1(\nu)$. Making the substitution into (\ref{pre_factor_1}), then:
\begin{equation} \label{pre_factor}
\delta \approx -\frac{1}{\gamma_a s^3}f_1(\nu)
\quad\implies\quad
K \approx \frac{\tilde A_1}{h^3}\frac{1}{\gamma_a} = \frac{f_1(\nu)}{\gamma_a}
\end{equation}

Therefore, if $\delta$ is put in term of the normalized distance $s = c/h$, then the pre-factor $K$ depends only on the aspect ratio, $K = K(\nu)$. Furthermore, $K\neq 1$, in contrast with some references \cite{Bonard2001,Jo2003,Harris2015} where the pre-factor was absent in the curve-fitting functions (that is, equivalently, $K=1$ was assumed).

\subsection{One-dimensional regular array of HCPs}
Let an infinite number of HCP shapes located at positions $(nc, 0, 0)$, where $n\in\mathbb{Z}$. That is, the HCPs are distributed in a single axis. The aim is to calculate the fractional change in the apex FEF, $\delta$. From (\ref{HCP_delta_third_power_law}), the result will be:
\begin{equation}
-\delta =
\frac{\gamma_a' - \gamma_a}{\gamma_a}
=\sum_{n\in\mathbb{Z^*}}\frac{1}{\gamma_a}
\frac{\tilde A_1}{(n^2 c^2 + h^2)^{3/2}}
\left[
1 -
\frac{3h^2}{h^2 + n^2 c^2}
\right]
\end{equation}

Where it was denoted $\mathbb{Z}^* = \mathbb{Z}-\{0\}$. A reasonable first step would be to simplify the expression in the summand. Notice that, as $n$ grows larger:
\begin{equation}
\frac{3h^2}{h^2 + n^2 c^2}\to 0,\quad\quad
\frac{\tilde A_1}{(n^2 c^2 + h^2)^{3/2}}\to\frac{\tilde A_1}{n^3 c^3}
\end{equation}

Therefore, the sum becomes:
\begin{equation}
-\delta
=2\sum_{n=1}^\infty\frac{1}{\gamma_a} \frac{\tilde A_1}{n^3 c^3}
=\frac{2\tilde A_1}{\gamma_a c^3}\sum_{n=1}^\infty\frac{1}{n^3}
=\frac{2\tilde A_1}{\gamma_a c^3}\zeta(3)
\end{equation}

Where $\zeta(n)$ is the Zeta-Riemann function, and $\zeta(3) \approx 1.2020569$ is known as Apéry's constant. Also, $E_0\gamma_a$ is the electric field on the top of the HCP shape. The dependence continues to be $-\delta = c^3$. In other words:
\begin{equation} \label{HCP_delta_1D_array}
-\delta=\frac{2\tilde A_1}{\gamma_a c^3}\zeta(3)
=\frac{2 Q_1\zeta(3)}{4\pi\epsilon_0 E_0 \gamma_a c^3}
\end{equation}

In order to investigate what happens in the borders, a possible model could be a semi-infinit e array: Let an infinite number of HCP shapes located at positions $(nc, 0, 0)$ where $n\in\mathbb{N}\cup\{0\}$. In that case:
\begin{equation}
\begin{split}
-\delta =
-\frac{\gamma_a' - \gamma_a}{\gamma_a}
&=\sum_{n\in\mathbb{N}}\frac{1}{\gamma_a}
\frac{\tilde A_1}{(n^2 c^2 + h^2)^{3/2}}
\left[
1 -
\frac{3h^2}{h^2 + n^2 c^2}
\right] \\
&\approx\sum_{n\in\mathbb{N}}\frac{1}{\gamma_a}
\frac{\tilde A_1}{(n^2 c^2 + h^2)^{3/2}} \\
&\approx\sum_{n\in\mathbb{N}}\frac{1}{\gamma_a}
\frac{\tilde A_1}{n^3 c^3} \\
&=\frac{\tilde A_1}{\gamma_a c^3}\sum_{n=1}^\infty\frac{1}{n^3} \\
&=\frac{\tilde A_1}{\gamma_a c^3}\zeta(3)
\end{split}
\end{equation}

Therefore, the fractional reduction in apex FEF, is:
\begin{equation}
-\delta =
-\frac{\gamma_a' - \gamma_a}{\gamma_a}
\approx\frac{\tilde A_1}{\gamma_a c^3}\zeta(3)
\end{equation}

This result shows that, $\delta$ in an infinite array is twice as more as the $\delta$ in a semi-infinite array, meaning, apex fields on the emitters of the borders are larger than the emitters located on the the middle of the linear array. This is consistent with \cite{Harris2016}, where it was reported that the emitters at the borders degrades faster.

\section{HCP FEF for $\ell\ll R$}
Hereby, for $\ell\ll R$, for $G_1$, equations (\ref{cylindrical_G1_ellllR}) and (\ref{hemispherical_G1_ellllR}) can be summed, written below:
\begin{equation}
\begin{split}
G_1 &= \frac{4}{3}\pi R\left[1 - \left(\frac{\ell}{r_0}\right)^3\right]
+ \pi\ell\left[1 - \left(\frac{\ell}{r_0}\right)^4\right] \quad\text{(hemisphere)} \\
&- \frac{4\pi}{5R} \ell^2\left[1 - \left(\frac{\ell}{r_0}\right)^5\right]
+ \frac{4\pi}{3R} \ell^2\left[1 - \left(\frac{\ell}{r_0}\right)^3\right]\quad\text{(hemisphere)} \\
&+\frac{4}{3}\pi R\left(\frac{\ell}{r_0}\right)^3\quad\text{(cylinder)}
\end{split}
\end{equation}

Thefore, one term of the suspended hemispherical $G$ and cylindrical $G$ integrals cancel out.
\begin{equation} \label{HCP_G1_ellllR}
\begin{split}
G_1 &= \frac{4}{3}\pi R
+ \pi\ell\left[1 - \left(\frac{\ell}{r_0}\right)^4\right] \\
&- \frac{4\pi}{5R} \ell^2\left[1 - \left(\frac{\ell}{r_0}\right)^5\right]
+ \frac{4\pi}{3R} \ell^2\left[1 - \left(\frac{\ell}{r_0}\right)^3\right] \\
\end{split}
\end{equation}

At first order of $\ell$, the correction of $G_1$ from the sphere case is merely $\pi\ell$.

The relevant value now, would be $I_{11}$. Thus, one can sum (\ref{cylindrical_I0011_ellllR}) with (\ref{hemispherical_I0011_ellllR}), then:
\begin{equation} \label{HCP_I11_ellllR}
\begin{split}
I_{11} &= \frac{4\pi}{3R^2}\left[1 - \left(\frac{\ell}{r_0}\right)^3\right]
-\frac{2\pi}{R^3}\ell\left[1 - \left(\frac{\ell}{r_0}\right)^4\right] \\
&+\frac{8\pi}{5R^4}\ell^2\left[1 - \left(\frac{\ell}{r_0}\right)^5\right]
+\frac{16\pi}{3R^4}\ell^2\left[1 - \left(\frac{\ell}{r_0}\right)^3\right] \\
&+\frac{\pi}{2R^2}\left(\frac{\ell}{r_0}\right)
+\frac{\pi}{2R^2}\frac{R\ell}{\ell^2 + R^2}\frac{\ell^2 - R^2}{\ell^2 + R^2}
\end{split}
\end{equation}

Truncating the linear system (\ref{spheroidal_linear_system_nonzero}) at $1\times 1$, then:
\begin{equation}
A_1 = \frac{G_1}{I_{11}}
\end{equation}

Where, all it is required to do, is to insert (\ref{HCP_I11_ellllR}) and (\ref{HCP_G1_ellllR}) at equation above. In order to simplify a little, terms of $\ell^2$ can be ignored. Therefore:
\begin{equation} \label{HCP_A1_value}
A_1^{(1)} \approx \frac{\frac{4}{3}\pi R + \pi\ell}{\frac{4\pi}{3R^2} - \frac{2\pi}{R^3}\ell +  \frac{\pi}{2R^2}\frac{\ell}{r_0}}
=R^2\frac{\frac{4}{3} R + \ell}{\frac{4}{3} - \frac{2\ell}{R} + \frac{1}{2}\frac{\ell}{r_0}}
\end{equation}

Using (\ref{FEF}), the FEF for becomes:
\begin{equation}
\gamma_a^{(1)} = 1 + \frac{2A_1^{(1)}}{(R + \ell)^3}
=1 + \frac{2R^2}{(R+\ell)^3}\cdot
\frac{\frac{4}{3} R + \ell}{\frac{4}{3} - \frac{2\ell}{R} + \frac{1}{2}\frac{\ell}{r_0}}
,\quad\quad\ell\ll R
\end{equation}

One can see, it depends on the ratios $h/R = \nu$, or $\ell/R = (h-R)/R = \nu-1$, therefore, it does depend explicitly only on the aspect ratio.

According to the formula above, as $h$ increases, $\gamma_a\to 1$ up to first order. The reason for this is exactly the same as explained in the case of the ellipsoid: the charge in general is concentrated on the hemisphere on top of the post, thus, the multipoles scales $Q_l\approx qa^l$ for some $a < h$, but $a\approx h$, thus the FEF converges slowly, as shown in equation (\ref{FEF_multipole_approximation}) which presents a rough estimation, written below:
\begin{equation}
\gamma_a \approx 1 + \frac{1}{h^2}\sum_{n=0}^\infty (l+1)q\left(\frac{a}{h}\right)^l
\end{equation}

\newpage
\section{FEF for HCP Model}
For a HCP shape of radius $R$ and total height $h = R + \ell$, where $R>0$ and $\ell > 0$,  the first step is to calculate the integrals (\ref{IG_integrals}), written below:
\begin{equation}
G_i = \int_S r^{-i} P_i(\cos\theta)\cos\theta dS,\quad\quad
I_{ij} = \int_S r^{-i-j-2} P_i(\cos\theta) P_j(\cos\theta) dS
\end{equation}

Above surface integrals simplifies to ordinary integrals as stated in equations (\ref{cylindrical_Gl}), (\ref{cylindrical_Iij}), (\ref{hemispherical_Gl}), (\ref{suspended_hemispherical_Iij_exact2}), (\ref{HCP_IG_sum}) with aid of (\ref{sus_hemispherical_J}), (\ref{sus_r0_JA}) (\ref{sus_hemispherical_r}) where integration limits can be found from (\ref{sus_r0_JA}) and (\ref{cylindrical_ac_as}). All required equations are summarized below:

\begin{equation}
\begin{split}
I_{ij} = \mathring I_{ij} + \ddot I_{ij} \\
G_l = \mathring G_l + \ddot G_l \\
\end{split}
\end{equation}

\begin{equation}
\ddot G_l = 2\pi R^{-l+2} \int_{-\ell/r_0}^{\ell/r_0} x\left[\sqrt{1 - x^2}\right]^{l-3} P_l(x) dx
\end{equation}

\begin{equation}
\ddot I_{ij} = \frac{2\pi}{R^{i+j}} \int_{-\ell/r_0}^{\ell/r_0} P_i(x) P_j(x) \left[\sqrt{1 - x^2}\right]^{i+j-1} dx
\end{equation}

\begin{equation}
\mathring G_{2l+1} = 4\pi \int_{\ell/r_0}^1 r^{-2l+1} P_{2l+1}(x) x J_A dx
\end{equation}

\begin{equation}
\mathring I_{ij} =
4\pi\int_{\ell/r_0}^1 r^{-i-j} P_i(x) P_j(x) J_A(x) dx, \quad i+j\text{ even}
\end{equation}

\begin{equation} \label{JA_HCP}
J_A(x) = \sqrt{1 + \frac{\ell^2}{r^2}(1-x^2)\left(1 + \frac{\ell x}{\sqrt{R^2 - \ell^2(1-x^2)}}\right)^2}
\end{equation}

\begin{equation} \label{r_HCP}
r(x) = \ell x + \sqrt{R^2 - \ell^2(1-x^2)}
\end{equation}

\begin{equation}
r_0 = \sqrt{R^2 + \ell^2}
\end{equation}

Where it was proved:
\begin{equation}
\begin{split}
\ddot G_{2l} &= \mathring G_{2l} = G_{2l} = 0, \quad\forall l\in\mathbb{N}\cup\{0\} \\
\ddot I_{ij} &= \mathring I_{ij} = I_{ij} = 0, \quad\forall i,j\text{ s.t. } i+j\in\{1, 3, 5, 7, 9, \dots\}
\end{split}
\end{equation}

Once $I_{ij}$ and $G_l$ are all known, solve the linear system (\ref{spheroidal_linear_system_nonzero}), written below:
\begin{equation}
\begin{bmatrix}
I_{11} & I_{13} & I_{15} & I_{17} & \cdots \\
I_{31} & I_{33} & I_{35} & I_{37} & \cdots \\
I_{51} & I_{53} & I_{55} & I_{57} & \cdots \\
I_{71} & I_{73} & I_{75} & I_{77} & \cdots \\
\vdots & \vdots & \vdots & \vdots & \ddots
\end{bmatrix}
\begin{bmatrix}
\tilde A_1 \\
\tilde A_3 \\
\tilde A_5 \\
\tilde A_7 \\
\vdots
\end{bmatrix}
=
\begin{bmatrix}
G_1 \\
G_3 \\
G_5 \\
G_7 \\
\vdots
\end{bmatrix} \\
\end{equation}

Once the values $\tilde A_l$ are solved, one can find the potential in the overall space by inserting the values at (\ref{potential}), written below:
\begin{equation}
V(r, \theta) = -E_0 r \cos\theta + E_0\sum_{l=0}^\infty\tilde A_l r^{-(l+1)} P_l(\cos\theta).
\end{equation}

Where, the FEF is expressed in equation (\ref{FEF}), where the apex is located at $r=h$, written below:
\begin{equation}
\gamma_a =
1 + \sum_{l=0}^\infty \tilde A_l\frac{l+1}{h^{l+2}}
\end{equation}

\section{Numerical Results}
Calculations have been done using the method, and truncating the linear system (\ref{linear_system}) until $n$, for aspect ratios $\nu = h/R = 1.5$ and $\nu=2$. For comparison, the apex FEF was obtained numerically [Thiago Albuquerque de Assis, private communication] by using the minimal domain size (MDS) method \cite{thiago}.

For $\nu=1.5$ and $\nu=2$, the apex FEF values were found to be $\gamma_{MDS} \approx 3.62527$ and $\gamma_{MDS} \approx 4.20577$, respectively. It was found that $\gamma_a^{(n)}$ from the method does converge to $\gamma_{MDS}$, as shown in the plots on Fig (4.1).


\begin{figure}
\includegraphics[scale=0.32]{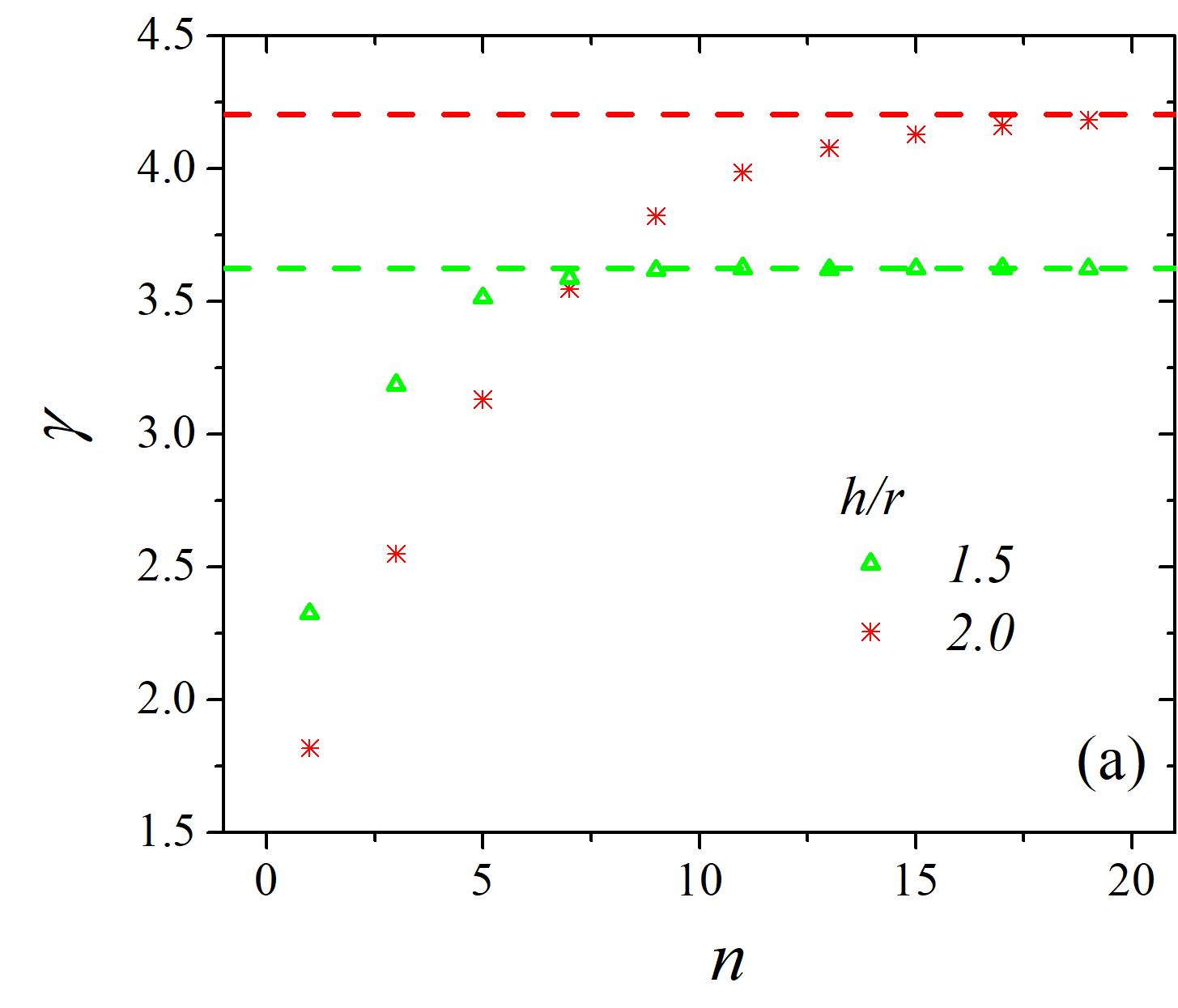}
\includegraphics[scale=0.32]{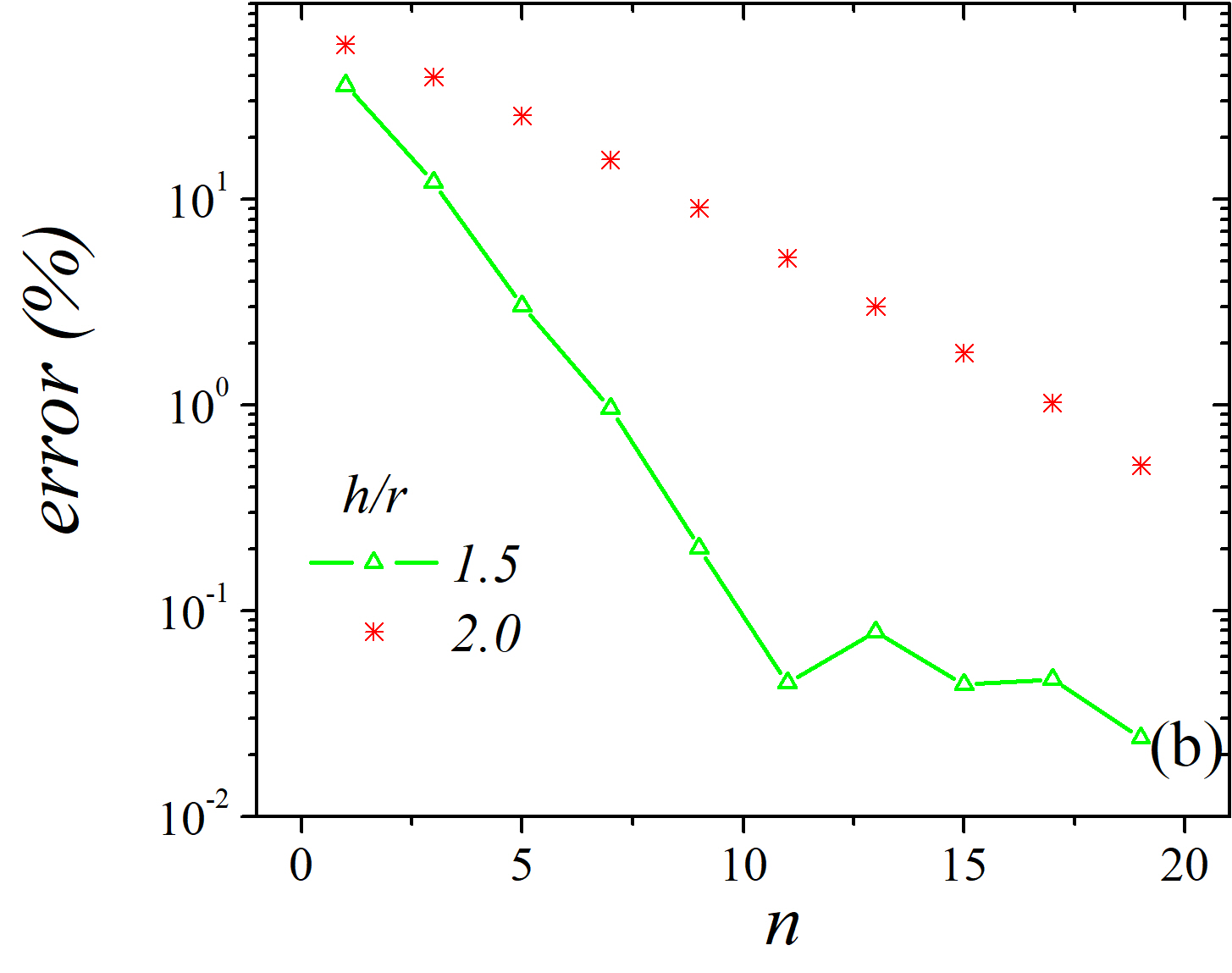}
\caption{(a) The FEF $\gamma_a^{(n)}$ is plotted against the truncation $n$. The dashed line correspond to $\gamma_{MDS}$. (b) Error $\frac{\gamma_a^{(n)} - \gamma_{MDS}}{\gamma_{MDS}}\times 100(\%)$ is plotted against $n$.}
\end{figure}

\newpage
\chapter{General shapes}
The fact that even multopole moments were zero for the hemisphere, hemi-ellipsoid on a plate, and the HCP shape, is no coincidence. It turns out, all shapes which are axially symmetric, and have mirror symmetry in the xy-plane, will have all even multipoles zero. In this chapter, this will be proved, and the consequences explored.

\section{Multipole coefficients}
\begin{theorem} \label{theorem_even_multipoles}
Let an emitter shape $S$ under a external electrostatic field $E_0$. Let $S$ be described in spherical coordinates by $r(\theta, \phi)$.
If a shape $S$ has axial symmetry, that is $r(\theta, \phi) = r(\theta)$, independent of $\phi$, and, if $S$ has mirror symmetry in the xy-plane, that is $r(\theta) = r(\pi - \theta)$, then, under an uniform applied field $E_0$, the combined system will have all even multipoles zero.
\end{theorem}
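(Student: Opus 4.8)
The plan is to reduce the whole statement to a parity argument in the variable $x=\cos\theta$, generalizing verbatim what already worked for the prolate spheroid and the HCP shape. Everything hinges on two elementary facts: that the mirror symmetry forces the radial profile $r$ to be an \emph{even} function of $x$, and that the area-element factor $J_A$ is then also even in $x$. Once these are secured, each $G$- and $I$-integral becomes the integral of a definite-parity integrand over the symmetric interval $[-1,1]$, and the even multipoles are annihilated by oddness.

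First I would establish the two parity lemmas. Since the reflection $z\mapsto -z$ corresponds to $\theta\mapsto\pi-\theta$, i.e. $\cos\theta\mapsto-\cos\theta$, the hypothesis $r(\theta)=r(\pi-\theta)$ is exactly $r(x)=r(-x)$, so $r$ is even in $x$. For the surface element, an axially symmetric surface $r=r(\theta)$ has $dS=r\sin\theta\sqrt{r^2+(dr/d\theta)^2}\,d\theta\,d\phi$, so that the factor defined by $J=J_A r^2\sin\theta$ is $J_A=\sqrt{1+(r'/r)^2}$ with $r'=dr/d\theta$. Writing $r'=-\sin\theta\,(dr/dx)$ gives
$$\left(\frac{r'}{r}\right)^2=(1-x^2)\frac{(dr/dx)^2}{r^2}.$$
Because $r(x)$ is even its derivative $dr/dx$ is odd, hence $(dr/dx)^2$, $(1-x^2)$ and $r^2$ are all even; therefore $(r'/r)^2$ and $J_A$ are even in $x$.

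Next I would do the parity bookkeeping. Performing the trivial $\phi$-integration and substituting $x=\cos\theta$ turns the definitions in (\ref{IG_integrals}) into
$$G_l=2\pi\int_{-1}^1 r(x)^{2-l}\,P_l(x)\,x\,J_A(x)\,dx,\qquad I_{ij}=2\pi\int_{-1}^1 r(x)^{-i-j}\,P_i(x)\,P_j(x)\,J_A(x)\,dx.$$
Since every power of $r$ is even and $J_A$ is even, while $P_l$ carries parity $(-1)^l$, the $G_l$-integrand has parity $(-1)^{l+1}$ and the $I_{ij}$-integrand has parity $(-1)^{i+j}$. Integrating an odd function over $[-1,1]$ yields zero, so $G_{2l}=0$ and $I_{ij}=0$ whenever $i+j$ is odd, precisely as in the spheroidal and HCP cases. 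Consequently the stationarity system (\ref{linear_system}) decouples: the equations indexed by even $l$ collapse to the homogeneous block $\sum_{k\,\mathrm{even}}I_{lk}A_k=0$ in the even coefficients alone, while the odd equations determine the odd coefficients.

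The main obstacle is the final inference $A_{2m}=0$; one cannot simply assert invertibility of the even $I$-submatrix, which the text itself leaves unproven for the spheroid. Instead I would close the argument by uniqueness: assigning zero to all even coefficients solves the homogeneous even block identically, and the odd block then fixes a complete coefficient set reproducing the physical potential. Since the Dirichlet problem for Laplace's equation has a unique solution, this candidate \emph{is} the solution, so every even coefficient must vanish. Finally $Q_{2m}=4\pi\epsilon_0 A_{2m}=0$ by (\ref{Ql_Al}), which is exactly the claim that all even multipoles are zero.
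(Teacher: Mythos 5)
Your proposal is correct and, in its core, is the paper's own proof: the same substitution $x=\cos\theta$, the same two parity lemmas ($r(x)$ even from mirror symmetry; $J_A(x)$ even because $dr/dx$ is odd when $r$ is even), the same conclusions $G_{2l}=0$ and $I_{ij}=0$ for $i+j$ odd, the same decoupling of (\ref{linear_system}) into even and odd blocks, and the same final link $Q_{2l}=4\pi\epsilon_0 A_{2l}=0$ via (\ref{Ql_Al}). (Your formulas also quietly fix two slips in the paper's version of this proof, namely the missing $1/r$ in $J_A=\sqrt{1+\left(\frac{1}{r}\frac{dr}{d\theta}\right)^2}$ and $r^{i+j}$ written where $r^{-i-j}$ is meant; neither affects the parity bookkeeping.) Where you genuinely diverge is the closing inference $A_{2l}=0$. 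The paper argues by exclusion: if the even block of the $I$-matrix were singular, the homogeneous system would have infinitely many solutions, ``violating the uniqueness theorem of the Laplace equation,'' so singularity is discarded (admittedly without proof) and invertibility then forces the even coefficients to vanish. You instead construct the candidate with zero even coefficients and the odd block solved, and identify it with the true potential by Dirichlet uniqueness; this buys you independence from any invertibility assumption on the even submatrix, which is the paper's weakest point. Be aware, however, that your phrase ``the odd block then fixes a complete coefficient set reproducing the physical potential'' is itself not automatic: a solution of the stationarity equations need not a priori satisfy $V=0$ on $S$, since critical points of $\Sigma$ are not, on their face, zeros of $\Sigma$. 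The step becomes rigorous once you observe that $\Sigma$ is a positive semi-definite quadratic functional (its quadratic part is $\int_S(\sum_l A_l r^{-(l+1)}P_l)^2\,dS\ge 0$), so every critical point is a global minimum, and the minimum value is zero because the coefficients of the true potential attain it; then $\Sigma=0$ at your candidate, and exterior Dirichlet uniqueness together with linear independence of the functions $r^{-(l+1)}P_l(\cos\theta)$ identifies its coefficients with the true ones. With that one observation added, your closing is arguably tighter than the paper's: the same positivity argument applied to the decomposition of $\Sigma$ shows directly that the even-index combination must vanish on $S$, forcing $A_{2l}=0$ with no reference to the matrix being non-singular at all.
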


\begin{proof}
To prove the stated theorem, it is suficient to sshow that $r(x)$ is even, $J_A(x)$ is even, and show the respective $I$ and $G$ integrals are zero. Doing conversion $x = \cos\theta$, then, $r(x) = r(-x)$, because $\cos(\pi - \theta) = -\cos\theta = -x$. Therefore, $r(x)$ is an even function.

A detailed derivation of $J$ can be found in the Appendix A. Defining $J = J_A r^2\sin\theta$, then:
\begin{equation}
J_A(\theta) = \sqrt{1 + \left(\frac{dr}{d\theta}\right)^2}
\end{equation}

Again, doing $x = \cos\theta$:
\begin{equation}
J_A(x) = \sqrt{1 + \left(\frac{dr}{dx}\frac{dx}{d\theta}\right)^2}
= \sqrt{1 + \left(-\frac{dr}{dx}\sqrt{1-x^2}\right)^2}
\end{equation}

That is:
\begin{equation}
J_A(x) = \sqrt{1 + (1-x^2)\left(\frac{dr}{dx}\right)^2}
\end{equation}

If $dr/dx$ is either odd or even, then $J_A(x)$ is even function. Because $r(x)$ is even, then $r'(x)$ has definite parity, namely:
\begin{equation}
r(x) = r(-x)
\quad\implies\quad
r'(x) = -r'(-x)
\end{equation}

Thus, if $r(x)$ is even, then $r'(x)$ is odd, then $r'(x)^2$ is even, then $J_A(x)$ is even.

About the $IG$ integrals, doing substitution $x=\cos\theta$, and recall $\theta\in[0, \pi]$ and $\phi\in[0, 2\pi]$. Because $S$ has axial symmetry, nothing does depend on $\phi$, thus the $\phi$ integral evaluates to $2\pi$. One is left with:
\begin{equation}
\begin{split}
G_l &= \int_S r^{-l} P_l(\cos\theta)\cos\theta dS = 2\pi\int_{-1}^1 r^{-l+2}(x) P_l(x) x J_A(x) dx\\
I_{ij} &= \int_S r^{-i-j-2} P_i(\cos\theta) P_j(\cos\theta) dS = 2\pi\int_{-1}^1 r^{i+j}(x) P_i(x) P_j(x) J_A(x) dx
\end{split}
\end{equation}

Because $r(x)$ is even, then $r^n(x)$ is even, for any $n\in\mathbb{N}$. Therefore, because the symmetrical intervals $[-1, 1]$, and because $P_n(x)$ has definite parity, the entire integrand on both integrals have definite parity. If integrand is odd, integral is zero. Then $G_{2l} = 0$. And, if $i+j$ is odd, then $I_{ij} = 0$. The linear system (\ref{linear_system}) does reduces to the same as equation (\ref{spheroidal_linear_system_zero}) and (\ref{spheroidal_linear_system_nonzero}), and therefore, $A_{2l} = 0$. Because $A_l$ is proportional with the axial multipoles as in equation (\ref{Ql_Al}), then, $Q_{2l} = 0$, finishing the proof.
\end{proof}

\begin{theorem} \label{theorem_axial_potential}
Let an emitter shape $S$ under an external electrostatic field $E_0$. Let $S$ be described in spherical coordinates by $r(\theta, \phi)$.
If a shape $S$ has axial symmetry, that is $r(\theta, \phi) = r(\theta)$, independent of $\phi$, then the local potential is axially symmetric, that is, $V(r, \theta, \phi) = V(r, \theta)$.
\end{theorem}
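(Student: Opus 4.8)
The plan is to prove this by a symmetry-plus-uniqueness argument rather than by manipulating the Legendre or spherical-harmonic expansion directly. The underlying physical setup is a boundary value problem: the potential $V$ is harmonic, $\nabla^2 V = 0$, in the region exterior to the conductor $S$; it satisfies the equipotential condition $V = 0$ on $S$; and it approaches the applied-field potential $-E_0 r\cos\theta$ as $r\to\infty$, as in (\ref{potential}). The key observation is that every ingredient of this problem is invariant under rotations about the $z$-axis precisely when $S$ is axially symmetric.

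First I would make the rotational invariance explicit. Let $R_\alpha$ denote the rotation by angle $\alpha$ about the $z$-axis, acting on points by $(r,\theta,\phi)\mapsto(r,\theta,\phi+\alpha)$. Since $r(\theta,\phi)=r(\theta)$ does not depend on $\phi$, the surface $S$, and hence its exterior region, is mapped onto itself by $R_\alpha$. The Laplacian is invariant under any rigid rotation, the boundary value $V=0$ is trivially invariant, and the asymptotic term $-E_0 r\cos\theta$ involves only $r$ and $\theta$, so it too is invariant. Thus the entire boundary value problem is unchanged under $R_\alpha$ for every $\alpha$.

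Next I would introduce the rotated potential $V_\alpha(\mathbf{r}) := V(R_\alpha^{-1}\mathbf{r})$ and verify that it solves exactly the same boundary value problem as $V$: it is harmonic, being the composition of a harmonic function with a rigid rotation; it vanishes on $S$, because $R_\alpha^{-1}$ maps $S$ to $S$; and it has the same asymptotics. Invoking the uniqueness theorem for this exterior problem then forces $V_\alpha = V$ for every $\alpha$. Since a function that is invariant under all rotations about the $z$-axis cannot depend on $\phi$, we conclude $V(r,\theta,\phi)=V(r,\theta)$.

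The step requiring the most care is the appeal to uniqueness, since $V$ does not decay at infinity. The clean way to handle this is to split $V = -E_0 r\cos\theta + U$, where the induced part $U$ is harmonic, decays as $r\to\infty$, and satisfies $U = E_0 r(\theta)\cos\theta$ on $S$, a datum that depends only on $\theta$. The standard uniqueness result for the exterior Dirichlet problem with prescribed decay at infinity applies to $U$, and because the applied term $-E_0 r\cos\theta$ is already manifestly axially symmetric, the whole argument reduces to showing that $U$ is rotation-invariant, which follows exactly as above. I expect the only remaining subtlety to be the bookkeeping of the boundary condition on $S$ as it is carried through the rotation, which is routine once the invariance of $S$ under $R_\alpha$ has been established.
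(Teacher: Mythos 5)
Your proof is correct, but it follows a genuinely different route from the paper's. The paper argues inside the spherical-harmonic expansion (\ref{harmonic_potential}): it imposes $V=0$ on the surface, averages the resulting identity over $\phi$ using $\int_0^{2\pi}e^{im\phi}\,d\phi = 2\pi\delta_{m0}$ to isolate the $m=0$ part, and then compares (\ref{hehe1}) with (\ref{hehe2}) to conclude that the coefficients with $m\neq 0$ vanish ``because $r(\theta)$ is an arbitrary function'' (the paper writes $N_{lm}=0$ where it means $B_{lm}=0$). You instead exploit the rotational invariance of the whole boundary value problem: the rotated potential $V_\alpha$ solves the same problem, and uniqueness for the exterior Dirichlet problem, applied to the decaying part $U = V + E_0 r\cos\theta$, forces $V_\alpha = V$ for every $\alpha$. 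Each approach buys something. The paper's computation stays within the multipole formalism used throughout the dissertation and yields the conclusion in precisely the form needed later, namely that (\ref{harmonic_potential}) collapses to (\ref{potential}). However, its final comparison step is logically the weakest point: vanishing of the $m\neq 0$ part of the boundary data on the particular surface $S$ does not by itself force each $B_{lm}=0$, since a nontrivial combination $\sum_l B_{lm}\, r(\theta)^{-(l+1)}P_{lm}(\cos\theta)$ could in principle vanish identically on that one surface; ruling this out requires exactly the maximum-principle/uniqueness ingredient that your proof makes explicit (a decaying harmonic function vanishing on $S$ vanishes identically, hence all its expansion coefficients are zero). So your argument is not only different but supplies the rigor the paper's proof glosses over, and it generalizes verbatim to any isometry fixing both $S$ and the applied field (for example the mirror symmetry used elsewhere in the paper). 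Its only costs are that it invokes the exterior uniqueness theorem as a black box, and that if one wants the coefficient statement itself (not just $\partial V/\partial\phi = 0$) one must add the one-line remark that an axially symmetric harmonic function has a Legendre-only expansion.
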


\begin{proof}
In general, the potential can be written:
\begin{equation} \label{harmonic_potential}
V(r, \theta, \phi) = E_0 r\cos\theta + \sum_{l=0}^\infty\sum_{m=-l}^l B_{lm}r^{-(l+1)} Y_{lm}(\theta, \phi)
\end{equation}

At the surface, $V=0$. Because the shape is axially symmetric, then:
\begin{equation}
V(r(\theta), \theta, \phi) = 0 = E_0 r\cos\theta + \sum_{l=0}^\infty\sum_{m=-l}^l B_{lm}r(\theta)^{-(l+1)} Y_{lm}(\theta, \phi)
\end{equation}

Because $Y_{lm}(\theta, \phi) = N_{lm} P_{lm}(\cos\theta) e^{im\phi}$, then:
\begin{equation} \label{hehe1}
-E_0 r\cos\theta = \sum_{l=0}^\infty r(\theta)^{-(l+1)} \sum_{m=-l}^l N_{lm} B_{lm} P_{lm}(\cos\theta) e^{im\phi},\quad\forall\theta\in [0, \pi]
\end{equation}

Therefore, if we integrate both sides on $\phi$:
\begin{equation}
-2\pi E_0 r\cos\theta = \sum_{l=0}^\infty r(\theta)^{-(l+1)} \sum_{m=-l}^l N_{lm} B_{lm} P_{lm}(\cos\theta) \int_0^{2\pi}e^{im\phi}d\phi
\end{equation}

Using $\int_0^{2\pi} e^{im\phi} d\phi = 2\pi\delta_{m0}$, then:
\begin{equation} \label{hehe2}
-E_0 r\cos\theta = \sum_{l=0}^\infty r(\theta)^{-(l+1)} N_{l0} B_{l0} P_{l0}(\cos\theta), \quad\forall\theta\in [0, \pi]
\end{equation}

Comparing (\ref{hehe1}) with (\ref{hehe2}), we conclude $N_{lm} = 0$ if $m\neq 0$, because $r(\theta)$ is an arbitrary function. This means, potential (\ref{harmonic_potential}) reduces to (\ref{potential}), meaning, $V(r, \theta, \phi) = V(r, \theta)$, thus, the entire system is axially symmetric. Such result also justifies the usage of (\ref{potential}).
\end{proof}

\begin{corollary} \label{theorem_axial_sigma}
Let an emitter shape $S$ under an external electrostatic field $E_0$. Let $S$ be described in spherical coordinates by $r(\theta, \phi)$.
If a shape $S$ has axial symmetry, that is $r(\theta, \phi) = r(\theta)$, independent of $\phi$, then the surface charge density $\sigma(\theta, \phi)$ distribution has axial symmetry $\sigma(\theta, \phi) = \sigma(\theta)$.
\end{corollary}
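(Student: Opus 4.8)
The plan is to exploit the electrostatic boundary condition on a conductor, which ties the surface charge density directly to the very potential that Theorem \ref{theorem_axial_potential} has already shown to be axially symmetric. For a conducting equipotential surface the induced charge density is $\sigma = -\epsilon_0\,\partial V/\partial n$, where $\partial/\partial n$ denotes the derivative along the outward unit normal $\hat{\mathbf n}$ to $S$ (this follows from a Gauss-pillbox argument, using that the field vanishes inside the conductor). Since Theorem \ref{theorem_axial_potential} guarantees $V(r,\theta,\phi) = V(r,\theta)$, the only remaining task is to verify that evaluating the normal derivative at the surface cannot reintroduce any $\phi$-dependence.

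First I would expand the normal derivative in spherical components on the surface $r = r(\theta)$:
\begin{equation}
\frac{\partial V}{\partial n} = \hat{\mathbf n}\cdot\nabla V = n_r\frac{\partial V}{\partial r} + \frac{n_\theta}{r}\frac{\partial V}{\partial\theta} + \frac{n_\phi}{r\sin\theta}\frac{\partial V}{\partial\phi}.
\end{equation}
Because $V$ is axially symmetric the last term vanishes identically, as $\partial V/\partial\phi = 0$, while the surviving derivatives $\partial V/\partial r$ and $\partial V/\partial\theta$ depend only on $(r,\theta)$.

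Next I would argue that the normal vector $\hat{\mathbf n}$ is itself axially symmetric. For a surface $r = r(\theta)$ invariant under rotation about the $z$-axis, the tangent plane at any point contains the azimuthal direction $\hat\phi$, so the normal lies in the meridional $r$-$\theta$ plane; hence $n_\phi = 0$ and the components $n_r$, $n_\theta$ are fixed functions of $\theta$ alone, determined by $r(\theta)$ and $r'(\theta)$ exactly as in the area-element computation used for $J_A$. Substituting $r = r(\theta)$ then renders the entire right-hand side a function of $\theta$ only, so that $\sigma(\theta,\phi) = -\epsilon_0\,\partial V/\partial n$ evaluated at $r = r(\theta)$ equals some $\sigma(\theta)$, completing the proof.

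I expect the only delicate point to be the clean justification that $\hat{\mathbf n}$ has no azimuthal component and is $\phi$-independent; everything else is an immediate consequence of Theorem \ref{theorem_axial_potential}. A shorter and perhaps cleaner alternative route would bypass the normal derivative entirely: since the three-dimensional potential is axially symmetric, its gradient $\nabla V$ is axially symmetric as a vector field, and therefore the field just outside the conductor, together with $\sigma = \epsilon_0 E_n$, inherits the same symmetry directly. Either way the result hinges on reducing the charge density to a quantity built solely from the already-established axially symmetric potential and the axially symmetric geometry of $S$.
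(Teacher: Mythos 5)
Your proposal is correct and follows essentially the same route as the paper: the paper's proof likewise invokes the axial symmetry of $V$ from the preceding theorem, writes $\sigma = \epsilon_0\,\mathbf E\cdot\mathbf n$ with $\mathbf E = -\nabla V$, and concludes that $\sigma$ inherits the symmetry. Your more detailed component-wise verification that $n_\phi = 0$ and that $n_r$, $n_\theta$ depend only on $\theta$ just makes explicit a step the paper leaves implicit, and your ``shorter alternative route'' at the end is precisely the paper's own argument.
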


\begin{proof}
If $V$ is axially symmetric, $\mathbf E = -\nabla V$ also is, and $\sigma = \epsilon_0 \mathbf E \cdot\mathbf n$, where $\mathbf n$ is the normal unit vector of S. Therefore, $\sigma$ must be axially symmetric.

\end{proof}

\begin{theorem} \label{theorem_odd_sigma}
Let an emitter shape $S$ under an external electrostatic field $E_0$. Let $S$ be described in spherical coordinates by $r(\theta, \phi)$.
If a shape $S$ has axial symmetry, that is $r(\theta, \phi) = r(\theta)$, independent of $\phi$, and, if $S$ has mirror symmetry in the xy-plane, that is $r(\theta) = r(\pi - \theta)$, then the surface charge density $\sigma(\theta, \phi)$ distribution both has axial symmetry $\sigma(\theta, \phi) = \sigma(\theta)$, and, opposite mirror symmetry: $\sigma(\theta) = -\sigma(\pi - \theta)$.
\end{theorem}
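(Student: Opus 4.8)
The axial symmetry of $\sigma$, namely $\sigma(\theta,\phi)=\sigma(\theta)$, is immediate: it is exactly Corollary \ref{theorem_axial_sigma}, which in turn rests on Theorem \ref{theorem_axial_potential}. So the genuine content is the \emph{opposite} mirror symmetry $\sigma(\theta)=-\sigma(\pi-\theta)$. The plan is to prove this in two stages. First I would establish that the potential itself is \emph{antisymmetric} under the reflection $z\mapsto -z$ through the $xy$-plane, that is $V(r,\pi-\theta)=-V(r,\theta)$. Then I would transfer this parity to the charge density through the boundary relation $\sigma=\epsilon_0\,\mathbf E\cdot\mathbf n=-\epsilon_0\,\nabla V\cdot\mathbf n$, using that the surface maps to itself and that its outward normal simply reflects.

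For the first stage, Theorem \ref{theorem_axial_potential} guarantees that $V$ is axially symmetric and hence given by the Legendre series (\ref{potential}). Since $S$ possesses both axial and $xy$-mirror symmetry, Theorem \ref{theorem_even_multipoles} applies and all even coefficients vanish, $A_{2l}=0$. Therefore only the applied-field term $-E_0 r\cos\theta$ and the odd-$l$ multipole terms survive. Writing $x=\cos\theta$, the reflection sends $x\mapsto -x$ at fixed radial coordinate; the applied term $-E_0 r x$ is manifestly odd, and for odd $l$ one has $P_l(-x)=(-1)^l P_l(x)=-P_l(x)$, so every surviving term is odd in $x$. Summing, $V(r,\pi-\theta)=-V(r,\theta)$, which is the antisymmetry wanted.

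For the second stage, let $T$ be the reflection with matrix $\mathrm{diag}(1,1,-1)$, so the result above reads $V\circ T=-V$. Differentiating this identity and using that $T$ is symmetric and orthogonal ($T^{\top}=T=T^{-1}$) gives $(\nabla V)(T\mathbf r)=-T\,\nabla V(\mathbf r)$. By the mirror symmetry $r(\theta)=r(\pi-\theta)$ the surface is invariant under $T$, and its outward normal obeys $\mathbf n(T\mathbf r)=T\,\mathbf n(\mathbf r)$. Substituting into $\sigma(T\mathbf r)=-\epsilon_0\,(\nabla V)(T\mathbf r)\cdot\mathbf n(T\mathbf r)$ and exploiting orthogonality, $(T\mathbf a)\cdot(T\mathbf b)=\mathbf a\cdot\mathbf b$, collapses everything to $\sigma(T\mathbf r)=-\epsilon_0\,\nabla V(\mathbf r)\cdot\mathbf n(\mathbf r)=-\sigma(\mathbf r)$, which in angular terms is precisely $\sigma(\pi-\theta)=-\sigma(\theta)$.

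The main obstacle I anticipate is the careful bookkeeping of how the gradient and the surface normal transform under $T$ — in particular justifying $\mathbf n(T\mathbf r)=T\,\mathbf n(\mathbf r)$ from the self-reflection of $S$, and keeping straight the signs coming from $\mathbf E=-\nabla V$. An alternative route would mimic Theorem \ref{HCP_odd_sigma_x}, combining $Q_{2l}=0$ (from Theorem \ref{theorem_even_multipoles}) with the multipole integral $Q_l=2\pi\int_{-1}^1 \sigma(x)\,r(x)^{l+2}P_l(x)J_A(x)\,dx$ to force the even part of $\sigma$ to vanish. There, however, the difficulty is real: the weight $r(x)^{l+2}$ depends on $l$, so one cannot simply invoke completeness of $\{P_{2l}\}$ against a single fixed measure, which makes the potential-parity argument above the safer and cleaner path.
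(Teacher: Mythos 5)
Your proof is correct, but it takes a genuinely different route from the paper's. The paper proves the mirror antisymmetry exactly as in Theorem \ref{HCP_odd_sigma_x}: it writes $Q_l = 2\pi\int_{-1}^1 \sigma(x)\,r(x)^{l+2}P_l(x)J_A(x)\,dx$, reads the $Q_l$ as the Legendre coefficients of the function $\sigma(x)\,r(x)^{l+2}J_A(x)$, and concludes from $Q_{2l}=0$ (Theorem \ref{theorem_even_multipoles}) that this function, hence $\sigma$, is odd --- that is, the paper follows precisely the ``alternative route'' you flagged and declined. Your objection to that route is well taken: since the weight $r(x)^{l+2}$ varies with $l$, there is no single fixed function whose Legendre coefficients the $Q_l$ are, so the completeness argument is literal only when $r$ is constant (the sphere); for a general even $r(x)$ one would need an injectivity statement for the family $\{r(x)^{2l+2}P_{2l}(x)\}$ acting on even functions, which the paper does not supply. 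Your potential-parity argument avoids this gap: with $A_{2l}=0$ every surviving term of (\ref{potential}) is odd under $x\mapsto-x$, and the transfer to $\sigma$ through $(\nabla V)(T\mathbf r)=-T\,\nabla V(\mathbf r)$, $\mathbf n(T\mathbf r)=T\,\mathbf n(\mathbf r)$, and orthogonality of $T$ is sound (your sign bookkeeping checks out). One further strengthening is available to you: the antisymmetry $V(T\mathbf r)=-V(\mathbf r)$ follows directly from uniqueness of the exterior Dirichlet problem, since $\mathbf r\mapsto -V(T\mathbf r)$ is harmonic, vanishes on the $T$-invariant surface $S$, and has the same asymptotic behavior $-E_0 z$; this bypasses any worry about convergence of the multipole series at the surface itself, where the expansion (\ref{potential}) is used only formally throughout the paper.
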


\begin{proof}
By theorem (\ref{theorem_axial_sigma}), because $S$ is axially symmetric, then $\sigma$ is also axially symmetric. This means, the multipole moments $Q_l$ can be integrated directly on $\phi$ yielding $2\pi$:

\begin{equation}
Q_l = \int_S \sigma(\theta) r^{l} P_l(\cos\theta) dS =
2\pi\int_{-1}^1 \sigma(x) r(x)^{l+2} P_{l}(x) J_A(x) dx
\end{equation}

In which substitution $x=\cos\theta$ was done. Notice that, expression above shows that $Q_l$ are proportional to the coefficients of the Legendre expansion of the function $\sigma(x) r(x)^{l+2} J_A(x)$, that is:
\begin{equation}
r(x)^{l+2} J_A(x) \sigma(x) = \sum_{n=0}^\infty\frac{2Q_l}{2l+1} P_l(x)
\end{equation}

By previous theorem, $Q_{2l} = 0$, it means that only odd Legendre polynomials contribute to the summation, meaning $r(x)^{l+2} J_A(x) \sigma(x)$ must be an odd function. Because it is already known that $r(x)$ and $J_A(x)$ are even (see proof of previous theorem), then, $\sigma(x)$ is odd, that is $\sigma(x) = -\sigma(-x)$. Because $\sigma(x)$ is odd, and $-x = -\cos\theta = \cos(\pi - \theta)$, then, $\sigma(\pi - \theta) = -\sigma(\theta)$.
\end{proof}

\begin{theorem} \label{theorem_line_of_charge}
The potential $V$ of an axially-symmetric mirror-symmetric shape $S$ may be approximated with arbitrary precision by a line of charge in the symmetry-axis along interval $[-a, a], a>0$, with linear charge density $\lambda(z)$.
\end{theorem}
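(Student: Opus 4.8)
The plan is to reduce the statement to a moment-matching problem and then pass to a limit. By (\ref{Ql_Al}) the exterior (induced) part $V_{\text{ind}}$ of $V$ (the sum in (\ref{potential}), the applied-field term being kept as a fixed background) is determined entirely by the axial multipole moments $Q_l$, while by the Legendre generating function a line charge on $[-a,a]$ produces, for $r>a$,
\begin{equation}
V_{\text{line}}(r,\theta) = \frac{1}{4\pi\epsilon_0}\sum_{l=0}^\infty \frac{M_l}{r^{l+1}}P_l(\cos\theta),\qquad M_l = \int_{-a}^a \lambda(z)\,z^l\,dz,
\end{equation}
with $M_l$ exactly the line moments of (\ref{multipole_moment_line_of_charge}). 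Two such axially-symmetric exterior expansions coincide if and only if their coefficients agree, so it suffices to realize $M_l=Q_l$ for every $l$. By Theorem \ref{theorem_even_multipoles} one has $Q_{2l}=0$, and by Theorem \ref{linear_odd_lambda_z} an odd density forces $M_{2l}=0$ automatically; hence I would restrict to odd $\lambda$ and only need to match the odd moments $M_{2m+1}=Q_{2m+1}$.

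First I would settle the finite-order problem. Fix $N$ and take the odd polynomial ansatz $\lambda_N(z)=\sum_{k=0}^{N-1} c_k z^{2k+1}$. The first $N$ odd moments are then linear in the coefficients,
\begin{equation}
M_{2m+1} = \sum_{k=0}^{N-1} c_k \int_{-a}^a z^{2(m+k+1)}\,dz = \sum_{k=0}^{N-1} G_{mk}\,c_k,\qquad G_{mk}=\left\langle z^{2m+1},z^{2k+1}\right\rangle_{L^2[-a,a]},
\end{equation}
and $G$ is precisely the Gram matrix of the linearly independent functions $\{z,z^3,\dots,z^{2N-1}\}$, hence symmetric positive definite and invertible. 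Thus for every $N$ there is a unique odd polynomial density whose first $N$ odd moments equal the prescribed $Q_1,Q_3,\dots,Q_{2N-1}$, and the associated $V_{\text{line},N}$ reproduces the first $N$ nonvanishing multipole harmonics of $V$ exactly.

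It remains to let $N\to\infty$, and this is where the main difficulty lies. On a shell $r=\rho$ with $\rho>\max(a,R_S)$ (where $R_S$ is the radius of a sphere enclosing $S$) the error is the tail
\begin{equation}
\left|V_{\text{line},N}-V_{\text{ind}}\right| \le \frac{1}{4\pi\epsilon_0}\sum_{m=N}^\infty \frac{\left|M^{(N)}_{2m+1}\right|+\left|Q_{2m+1}\right|}{\rho^{2m+2}}.
\end{equation}
The $Q$-tail vanishes because the multipole series of $V_{\text{ind}}$ converges for $\rho>R_S$, and, bounding $|M^{(N)}_{2m+1}|\le a^{2m+1}\|\lambda_N\|_{1}$, the remaining tail is $O\!\left((a/\rho)^{2N}\|\lambda_N\|_{1}\right)$. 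Everything therefore reduces to showing that the approximating densities do not blow up, i.e.\ that $\|\lambda_N\|_1$ grows slower than $(\rho/a)^{2N}$; equivalently, that the sequence $(Q_{2m+1})$ is the genuine moment sequence of a finite signed density supported on $[-a,a]$. This is a Hausdorff-type moment problem, and its solvability is exactly where the half-length $a$ enters: $a$ must be large enough that $Q_l$ grows no faster than $\sim a^l$, which physically means $[-a,a]$ must enclose the axial singularities of the potential continued analytically into the interior (for a spheroid, the foci). Granting such an $a$, the tail vanishes uniformly on $\{r\ge\rho\}$, and by uniqueness of the exterior problem the line charge approximates $V$ to arbitrary precision throughout the region outside $S$, completing the argument. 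The hard part, to be emphasized, is not the matching but this convergence/realizability step, since an uncontrolled choice of $a$ (too small to contain the continued singularities) makes $\|\lambda_N\|_1$ diverge and the approximation fail.
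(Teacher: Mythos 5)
Your proposal is correct and, at its core, takes the same route as the paper's proof: both reduce the theorem to choosing $\lambda$ so that the line moments $M_l$ of (\ref{multipole_moment_line_of_charge}) match the surface multipoles $Q_l$, both use a polynomial ansatz for the density, and both end up inverting a finite Hilbert-type linear system whose order is sent to infinity. The differences are ones of rigor, and they favor you. You exploit Theorems \ref{theorem_even_multipoles} and \ref{linear_odd_lambda_z} to restrict to odd densities and odd moments, where the paper expands in all powers on $[0,1]$; you actually prove the truncated systems are uniquely solvable by identifying the matrix as the Gram matrix of $\{z,z^3,\dots,z^{2N-1}\}$ in $L^2[-a,a]$, hence positive definite, whereas the paper merely exhibits the Hilbert matrix, notes it is ill-conditioned, and asserts that ``solving for larger and larger $n$ should yield more accurate $\Lambda$''; and, most importantly, you recognize that matching finitely many moments does not by itself give convergence of the potentials --- one must control $\|\lambda_N\|_1$, i.e.\ the sequence $(Q_{2m+1})$ must be realizable as a moment sequence of a finite signed density on $[-a,a]$ (a Hausdorff-type condition), which constrains $a$ to enclose the axial singularities of the analytically continued potential. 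That is exactly the step the paper skips silently, so your proof is honestly conditional where the paper's is simply incomplete; what the paper's version buys in exchange is only brevity and an explicit bridge to the Hilbert matrix.

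One caveat on your final sentence: your tail estimate gives uniform smallness only on $\{r\ge\rho\}$ with $\rho>\max(a,R_S)$. The claim that the approximation then holds ``throughout the region outside $S$'' does not follow, since the difference of the two potentials is not controlled on $\partial S$ and the Legendre/multipole series of $V$ need not converge inside the sphere $r=R_S$; the maximum principle on the exterior domain therefore cannot be closed. The theorem --- both in your reading and in the paper's --- should be understood as approximation in the exterior region where the axial multipole expansion converges.
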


\begin{proof}
The multipole moments of an axially symmetric surface, with a surface charge density $\sigma(\theta)$, can be calculated as:
\begin{equation}
Q_l = \int_S \sigma(\theta) r^{l} P_l(\cos\theta) dS =
2\pi\int_{-1}^1 \sigma(x) r(x)^{l+2} P_{l}(x) J_A(x) dx
\end{equation}

The multipole moments of a line of charge from $z\in [-a, a]$ with linear charge density $\lambda(z)$ can be calculated as:
\begin{equation}
M_l = \int_{-a}^a \lambda(z) z^l dz
= \int_{-1}^1 \lambda(ax) (ax)^l adx
= 2a^{l+1}\int_{0}^1 \lambda(ax) x^l dx
\end{equation}

Choosing $\lambda$ such that $M_l = Q_l$ for all $l$, means the response from the line of charge is identical to the response of the surface $S$.

Let $\lambda(ax) = \Lambda(x)$. Under a polynomial approximation:
\begin{equation}
\Lambda(z) = \sum_n a_n x^n
\quad\implies\quad
m_l = \int_0^1 \Lambda(x) x^l dx = \sum_{k=0}^{n-1}\frac{1}{k+l+1} a_k
\end{equation}

In other words:
\begin{equation}
\begin{bmatrix}
m_0 \\
m_1 \\
m_2 \\
\vdots \\
m_n \\
\end{bmatrix}
=
\begin{bmatrix}
1 & 1/2 & 1/3 & 1/4 & \cdots & 1/n \\
1/2 & 1/3 & 1/4 & 1/5 & \cdots & 1/(n+1) \\
1/3 & 1/4 & 1/5 & 1/6 & \cdots & 1/(n+2) \\
\vdots & \vdots & \vdots & \vdots & \ddots & \vdots\\
\frac{1}{n} & \frac{1}{n+1} & \frac{1}{n+2} & \frac{1}{n+3} & \cdots & \frac{1}{2n-1}\\
\end{bmatrix}
\begin{bmatrix}
a_0 \\
a_1 \\
a_2 \\
\vdots \\
a_n \\
\end{bmatrix}
\end{equation}

Such matrix, is known as the Hilbert Matrix, and it is known to be an ill-posed problem. Nevertheless, solving for larger and larger $n$ should yield more accurate $\Lambda$. Therefore, in theory, one can find $\Lambda$ with arbitrary precision.

\end{proof}

\section{Two interacting shapes}
Let two shapes $S_1$ and $S_2$, axially symmetric, mirror symmetric on the xy-plane. By Theorem (\ref{theorem_even_multipoles}), the monopole charge is zero, because monopole is the zeroth multipole, and zero is even. Therefore, the next contributing multipole factor, is the dipole. If these two shapes are in a distance big enough as compared with their own lengths, then, the leading multipole contribution is the dipole. By theorem (\ref{theorem_odd_sigma}), it was proven that $\sigma$ must have xy-symmetry. However, it is possible to choose $\sigma$ such that dipole is zero (thus, the next contributing term, would be the octopole). So far, nothing prevents a special shape from having zero (or close to zero) dipole moments, and high octopole moments.

Consider $S_1$ and $S_2$ emitter shapes with height $h'$ and $h''$, axial dipole moments $Q_1'$ and $Q_1''$, both nonzero, and FEFs $\gamma_a'$ and $\gamma_a''$. Therefore, if both shapes interact independently of each other, then, they will feel each other's electric dipole. It will be assumed shape $S_1$ is centered at position $(0, 0, 0)$ and shape $S_2$ is at position $(c, 0, 0)$.
\begin{equation}
\mathbf E(\mathbf r) = \frac{3(\mathbf p\cdot\mathbf{\hat r})\mathbf{\hat r} - \mathbf p}{4\pi\epsilon_0 r^3},
\quad\quad
\gamma_a^{(\text{new})} = \gamma_a + \frac{\mathbf E(d, 0, h)\cdot\mathbf{\hat z}}{E_0}
\end{equation}

Thus:
\begin{equation}
\begin{split}
\gamma_a'^{(\text{new})} &= \gamma_a' + \frac{Q_1''}{E_0}\frac{1}{4\pi\epsilon_0(c^2 + h'^2)^{3/2}}
\left[
\frac{3h'^2}{h'^2 + c^2} - 1
\right]
\\
\gamma_a''^{(\text{new})} &= \gamma_a'' + \frac{Q_1'}{E_0}\frac{1}{4\pi\epsilon_0(c^2 + h''^2)^{3/2}}
\left[
\frac{3h''^2}{h''^2 + c^2} - 1
\right]
\end{split}
\end{equation}

In other words:
\begin{equation} \label{delta_general_shapes}
\begin{split}
\delta' &=
\frac{\gamma_a'^{(\text{new})} - \gamma_a'}{\gamma_a'}
=-\frac{1}{\gamma_a'}
\frac{\tilde A_1''}{(c^2 + h'^2)^{3/2}}
\left[
1 -
\frac{3h'^2}{h'^2 + c^2}
\right]
\\
\delta'' &=
\frac{\gamma_a''^{(\text{new})} - \gamma_a''}{\gamma_a''}
=-\frac{1}{\gamma_a''}
\frac{\tilde A_1'}{(c^2 + h''^2)^{3/2}}
\left[
1 -
\frac{3h''^2}{h''^2 + c^2}
\right]
\end{split}
\end{equation}

Therefore, the fractional change in the apex FEF, $\delta$, falls as $-\delta \sim K c^{-3}$ for the general shapes. More, the pre-factor $K$ depends only on the geometry of the emitter shape, since $\gamma_a$ and $A_1$ depends only on geometry.

It is worth pointing out that, if $\gamma_a$ depends only on the aspect ratio term by term, then $A_1$ (in fact, all $A_l$) will not depend on the aspect ratio, rather, will depend on $h$ and $R$ in general. This comes from Eq. (\ref{FEF}), in which, by hypothesis, $A_1/h^3$ depends only on $\nu$.
\begin{equation}
\frac{\tilde A_1}{h^3} = f(\nu)
\quad\implies\quad
\tilde A_1 = h^3 f(\nu)
\end{equation}

This can be seen, for instance, in the spheroidal case in Eq. (\ref{spheroidal_A1_value}), and even in the HCP case in Eq. (\ref{HCP_A1_value}). Also, it can be seen from (\ref{Ql_estimation_multipole}), where $a$ is hypothesized to be much more sensitive on $h$ than the base radius $R$ of a given shape. This also shows that the multipole moments $Q_l$ are not expected to depend only on the aspect ratio $\nu$.

It was assumed that electrostatic interactions are negligible (independent shapes), that is, the interaction of both shapes is such, that, the surface charge distribution of them is not disturbed. A natural question that rises, is if such approximation is indeed valid, and when the charge distributions begin to change. That will be done in the next section.

\section{Potential Theory}
The only way to check electrostatic interactions, is by fully solving Laplace equation. That will be done here. The system to be solved:
\begin{equation}
\nabla^2 V = 0,\quad\quad
\begin{matrix}
& V(\mathbf r') = 0, &\forall\mathbf r'\in S,\\
& -\nabla V(\mathbf r') = \mathbf E_0, &\text{ if }|\mathbf r'|\to\infty
\end{matrix}
\end{equation}

This one is slightly harder to be solved analytically by the methods of potential theorem. However, notice above problem is equivalent to:
\begin{equation} \label{Laplace_problem}
\nabla^2 \tilde V = 0,\quad\quad
\tilde V(\mathbf r') = E_0 r\cos\theta, \quad\forall\mathbf r'\in S,\\
\end{equation}

Where solution is $V(r, \theta) = -E_0 r\cos\theta + \tilde V(r, \theta)$. This indeed solves boundary conditions because:
\begin{equation}
\left. V(r, \theta)\right|_S = -E_0 r\cos\theta + \left.\tilde V(r, \theta)\right|_{S} =
-E_0 r(\theta) \cos\theta + E_0 r(\theta) \cos\theta = 0
\end{equation}

The problem (\ref{Laplace_problem}) can be solved by a Fredholm integral equation of the second kind \cite{fredholm}, written below:
\begin{equation} \label{fredholm_integral_equation}
E_0 r\cos\theta = \frac{1}{2}h(\mathbf r) + \int_S h(\mathbf r')\frac{\partial\Phi}{\partial\nu'}\left(\mathbf r - \mathbf r'\right) dS(\mathbf r')
\end{equation}

Where the solution is given by:
\begin{equation} \label{double_layer_solution}
\psi(\mathbf r) = -\int_S h(\mathbf r')\frac{\partial\Phi}{\partial\nu'}\left(\mathbf r - \mathbf r'\right) dS(\mathbf r'),
\quad\quad
\Phi(\mathbf r) = \frac{1}{4\pi}\frac{1}{|\mathbf r|} = \frac{1}{4\pi r}
\end{equation}

If one solves (\ref{fredholm_integral_equation}) for $h(\mathbf r)$ (this function should not be confused with the height of a emitter, a mere number), one can plug $h(\mathbf r)$ in (\ref{double_layer_solution}), and the solution $\psi$ is known. The single and double layer kernels for three dimensional systems are:
\begin{equation} \label{newtonian_kernel}
\Phi(\mathbf r) = \frac{1}{4\pi}\frac{1}{|\mathbf r|} = \frac{1}{4\pi r},\quad\quad
\nabla\Phi = -\frac{1}{4\pi}\frac{\mathbf r}{|\mathbf r|^3}
\end{equation}

Integral Equation (\ref{fredholm_integral_equation}) can be solved by means of Liouville–Neumann series, which basically consists in noticing $h$ appears both outside and both inside the integral. Thus, isolating $h$:
\begin{equation}
\begin{split}
h(\mathbf r) &=
2E_0 r\cos\theta
-2\int_S h(\mathbf r')\frac{\partial\Phi}{\partial\nu'}\left(\mathbf r - \mathbf r'\right) dS(\mathbf r') \\
&=
2E_0 r\cos\theta
-2\int_S h(\mathbf r')\frac{\mathbf r - \mathbf r'}{|\mathbf r - \mathbf r'|^3}\cdot\mathbf n' dS(\mathbf r') \\
\end{split}
\end{equation}

We substitute above equation, in the $h$ inside the integral:
\begin{equation}
\begin{split}
h(\mathbf r) &=
2E_0 r\cos\theta \\
&-2\int_S
\left[
2E_0 r'\cos\theta'-2\int_S h(\mathbf r'')\frac{\partial\Phi}{\partial\nu''}\left(\mathbf r' - \mathbf r''\right) dS(\mathbf r'')
\right]
\frac{\partial\Phi}{\partial\nu'}\left(\mathbf r - \mathbf r'\right) dS(\mathbf r')
\end{split}
\end{equation}

Or:
\begin{equation}
\begin{split}
h(\mathbf r) &= 2E_0 r\cos\theta - 2\int_S 2E_0 r'\cos\theta'
\frac{\partial\Phi}{\partial\nu'}\left(\mathbf r - \mathbf r'\right) dS(\mathbf r') \\
&+ 4\int_S \int_{S'} h(\mathbf r'')
\frac{\partial\Phi}{\partial\nu''}\left(\mathbf r' - \mathbf r''\right)
\frac{\partial\Phi}{\partial\nu'}\left(\mathbf r - \mathbf r'\right) dS(\mathbf r'') dS(\mathbf r')
\end{split}
\end{equation}

Or:
\begin{equation}
\begin{split}
h(\mathbf r) &= 2E_0 r\cos\theta - 4\int_S E_0 r'\cos\theta'
\frac{\mathbf r - \mathbf r'}{|\mathbf r - \mathbf r'|^3}\cdot\mathbf n' dS(\mathbf r') \\
&+ 4\int_S \int_{S'} h(\mathbf r'')
\left[\frac{\mathbf r' - \mathbf r''}{|\mathbf r' - \mathbf r''|^3}\cdot\mathbf n''\right]
\left[\frac{\mathbf r - \mathbf r'}{|\mathbf r - \mathbf r'|^3}\cdot\mathbf n'\right]
dS'(\mathbf r'') dS(\mathbf r')
\end{split}
\end{equation}

Plugging $h$ again, we get:
\begin{equation}
\begin{split}
h(\mathbf r) &= 2E_0 r\cos\theta - \int_S 4E_0 r'\cos\theta'
\frac{\mathbf r - \mathbf r'}{|\mathbf r - \mathbf r'|^3}\cdot\mathbf n' dS(\mathbf r') \\
&+ 4\int_S \int_{S'} E_0 r\cos\theta
\left[\frac{\mathbf r' - \mathbf r''}{|\mathbf r' - \mathbf r''|^3}\cdot\mathbf n''\right]
\left[\frac{\mathbf r - \mathbf r'}{|\mathbf r - \mathbf r'|^3}\cdot\mathbf n'\right]
dS'(\mathbf r'') dS(\mathbf r') \\
&- 4\int_S \int_{S'} \int_{S''}  h(\mathbf r'')
\left[\frac{\mathbf r'' - \mathbf r'''}{|\mathbf r'' - \mathbf r'''|^3}\cdot\mathbf n'''\right]
\left[\frac{\mathbf r' - \mathbf r''}{|\mathbf r' - \mathbf r''|^3}\cdot\mathbf n''\right]
\left[\frac{\mathbf r - \mathbf r'}{|\mathbf r - \mathbf r'|^3}\cdot\mathbf n'\right]
dS'(\mathbf r''') dS(\mathbf r'') dS(\mathbf r')
\end{split}
\end{equation}

Plugging $h$ again, and doing so iteratively, one arrives at Liouville–Neumann series, and a complete solution for $h$.
\begin{equation} \label{liouville_neumann_series}
h(\mathbf r) =
E_0 r\cos\theta +
\sum_{n=1}^\infty (-2)^n
\int_{S^{(1)}}
\int_{S^{(2)}}
\cdots
\int_{S^{(n)}}
\left[
\quad\cdots\quad
\right]
\prod_{k=1}^n dS(\mathbf r^{(k)})
\end{equation}

It is important to say, no attempt have been made to prove that above series converges for some shape.

\subsection{Interacting electrostatic systems}
Now, it is easy to find the contribution of interacting systems, by simply considering $S = S_1\cup S_2$, where $S_1$ is a shape, and $S_2$ is another shape. Let $h_1$ be the resolvent kernel of the isolated shape $S_1$, and $h_2$ for the isolated shape $S_2$. Their combined resolvent kernel of $S$ will be, as written by a Liouville-Neumann series:
\begin{equation}
\begin{split}
h(\mathbf r) &= h_1(\mathbf r) + h_2(\mathbf r) \\
&+2\int_{S_1}\int_{S_2}
E_0 r'\cos\theta'
\left[\frac{\mathbf r' - \mathbf r''}{|\mathbf r' - \mathbf r''|^3}\cdot\mathbf n''\right]
\left[\frac{\mathbf r - \mathbf r'}{|\mathbf r - \mathbf r'|^3}\cdot\mathbf n'\right]
dS_1(\mathbf r'') dS_2(\mathbf r') \\
&+2\int_{S_2}\int_{S_1}
E_0 r'\cos\theta'
\left[\frac{\mathbf r' - \mathbf r''}{|\mathbf r' - \mathbf r''|^3}\cdot\mathbf n''\right]
\left[\frac{\mathbf r - \mathbf r'}{|\mathbf r - \mathbf r'|^3}\cdot\mathbf n'\right]
dS_2(\mathbf r'') dS_1(\mathbf r') \\
&+\cdots
\end{split}
\end{equation}

Notice that, if both shapes are separated by a distance $c$, then, about the first interacting integral $W_{12}$, calculated at $\mathbf r\in S_1$, because, ultimately, objective is the apex FEF.
\begin{equation}
\begin{split}
|W_{12}| &= \left|2\int_{S_1}\int_{S_2}
E_0 r'\cos\theta'
\left[\frac{\mathbf r' - \mathbf r''}{|\mathbf r' - \mathbf r''|^3}\cdot\mathbf n''\right]
\left[\frac{\mathbf r - \mathbf r'}{|\mathbf r - \mathbf r'|^3}\cdot\mathbf n'\right]
dS_1(\mathbf r'') dS_2(\mathbf r')\right| \\
&\le
2\int_{S_1}\int_{S_2}
E_0 z'
\left|\frac{\mathbf r' - \mathbf r''}{|\mathbf r' - \mathbf r''|^3}\cdot\mathbf n''\right|
\left|\frac{\mathbf r - \mathbf r'}{|\mathbf r - \mathbf r'|^3}\cdot\mathbf n'\right|
dS_1(\mathbf r'') dS_2(\mathbf r') \\
&\le
2\int_{S_1}\int_{S_2}
E_0 z'
\left|\frac{\mathbf r' - \mathbf r''}{|\mathbf r' - \mathbf r''|^3}\right|
\left|\frac{\mathbf r - \mathbf r'}{|\mathbf r - \mathbf r'|^3}\right|
dS_1(\mathbf r'') dS_2(\mathbf r') \\
&\le
2\int_{S_1}\int_{S_2}
E_0 z'
\sup\left\{
\left|\frac{1}{|\mathbf r' - \mathbf r''|^2}\right|, \forall\mathbf r'\in S_1, \forall\mathbf r''\in S_2\right\}^2
dS_1(\mathbf r'') dS_2(\mathbf r') \\
&=
2 E_0 z_2
\int_{S_1}\int_{S_2}
\left[
\frac{1}{s^2 + (c-t)^2}
\right]^2
dS_1(\mathbf r'') dS_2(\mathbf r') \\
&=
2 E_0 z_2
\frac{1}{(s^2 + (c-t)^2)^2}
\int_{S_1}\int_{S_2}
dS_1(\mathbf r'') dS_2(\mathbf r') \\
&=
\frac{2E_0 z_2 A_1 A_2}{(s^2 + (c-t)^2)^2}
\end{split}
\end{equation}

In this case, $A_1$ and $A_2$ (not to be confused with coefficients $A_l$ of the Legendre expansion of the potential) are the surface areas of the shapes $S_1$ and $S_2$ respectively. In addition, $z_2$ is the maximum $z$ distance of shape $S_2$. Also, it was considered that:
\begin{equation}
\frac{1}{s^2 + (c-t)^2} = \sup\left\{
\left|\frac{1}{|\mathbf r' - \mathbf r''|^2}\right|, \forall\mathbf r'\in S_1, \forall\mathbf r''\in S_2\right\}
\end{equation}

The same thing can be done for $W_{21}$:
\begin{equation}
\begin{split}
|W_{21}| &= \left|2\int_{S_2}\int_{S_1}
E_0 r'\cos\theta'
\left[\frac{\mathbf r' - \mathbf r''}{|\mathbf r' - \mathbf r''|^3}\cdot\mathbf n''\right]
\left[\frac{\mathbf r - \mathbf r'}{|\mathbf r - \mathbf r'|^3}\cdot\mathbf n'\right]
dS_2(\mathbf r'') dS_1(\mathbf r')\right| \\
&\le
2\int_{S_2}\int_{S_1}
E_0 z'
\left|\frac{\mathbf r' - \mathbf r''}{|\mathbf r' - \mathbf r''|^3}\cdot\mathbf n''\right|
\left|\frac{\mathbf r - \mathbf r'}{|\mathbf r - \mathbf r'|^3}\cdot\mathbf n'\right|
dS_2(\mathbf r'') dS_1(\mathbf r') \\
&\le
2\int_{S_2}\int_{S_1}
E_0 z'
\sup\left\{\left|\frac{1}{|\mathbf r' - \mathbf r''|^2}\right|\right\}
\left|\frac{\mathbf r - \mathbf r'}{|\mathbf r - \mathbf r'|^3}\cdot\mathbf n'\right|
dS_2(\mathbf r'') dS_1(\mathbf r') \\
&=
2 E_0 z_1
\left[
\frac{1}{s^2 + (c-t)^2}
\right]
\int_{S_2}\int_{S_1}
\left|\frac{\mathbf r - \mathbf r'}{|\mathbf r - \mathbf r'|^3}\cdot\mathbf n'\right|
dS_2(\mathbf r'') dS_1(\mathbf r') \\
&=
2E_0 z_1
\frac{1}{s^2 + (c-t)^2}
\int_{S_2}
|k| dS_2(\mathbf r') \\
&=
\frac{2|k|E_0 z_1 A_2}{s^2 + (c-t)^2}
\end{split}
\end{equation}

A useful result is Gauss' lemma, which states:
\begin{equation}
k' = \int_{S}
\left[\frac{\mathbf r - \mathbf r'}{|\mathbf r - \mathbf r'|^3}\cdot\mathbf n'\right]
dS(\mathbf r')
=
\int_S \frac{\partial\Phi}{\partial n}(\mathbf r - \mathbf r')dS(\mathbf r')
\end{equation}

Where, if $S = \partial V$, then:
\begin{equation}
\left\{
\begin{array}{ll}
k' = 0, &\quad\text{if } \mathbf r\in V \\
k' = -1/2, &\quad\text{if } \mathbf r\in\partial V \\
k' = -1, &\quad\text{if } \mathbf r\in V^c \\
\end{array}
\right.
\end{equation}

However, because the integrand is an absolute value, the sign of $(\mathbf r - \mathbf r')\cdot\mathbf n'$ is relevant. If there's no change in sign in the entire integration domain, then the integral will yield $|k| = |k'|$, as in Eq (5.3.18). If there's change in sign, then, nothing can be said.

If $\mathbf r$ is taken immediately above the protrusion, then $\mathbf r\in V^c$. However, there is a change in sign, as $(\mathbf r - \mathbf r')\cdot\mathbf n' > 0$ if $\mathbf r'$ is taken at the apex, and $(\mathbf r - \mathbf r')\cdot\mathbf n' < 0$ if $\mathbf r'$ is taken at the tip of the lower hemisphere. Because of that, one cannot use Gauss' lemma in this case, and show that $k = 0$. In this case, $|k|$ is non-zero.

If $\mathbf r$ is taken immediately below the protrusion (that is, $\mathbf r\in V$), or if $\mathbf r$ is exactly at the apex (that is, $\mathbf r\in\partial V$), intuitively there's no sign change, and Gauss' lemma can be used, yielding $|k|\neq 0$ as well.

Therefore, first order electrostatic interaction term is:
\begin{equation} \label{electrostatic_interacting_integrals}
|W| \le |W_{12}| + |W_{21}|
\le\frac{2E_0 z_2 A_1 A_2}{(s^2 + (c-t)^2)^2}
+\frac{2|k|E_0 z_1 A_2}{s^2 + (c-t)^2}
\sim\frac{1}{c^4} + \frac{|k|}{c^2}
\end{equation}

If $k=0$, the resolvent falls at least as $c^{-4}$. In this case, because electric dipole approximation is of $\delta\sim c^{-3}$, and electrostatic interactions become important with $c^{-3}$, thus, expressions (5.2.3) are expected to hold.

But, if $k\neq 0$, it was shown the resolvent falls at least as $c^{-2}$, therefore, nothing can be told about if independent dipole-dipole is a good approximation, precisely because all expressions obtained for $W$ are upper bounds.

\newpage
\chapter{Conclusion}
The electrostatic potential was assumed to be of the form of Eq. (\ref{potential}), which is true for a surface $S$ with axial symmetry (see theorem (\ref{theorem_axial_potential})). Then, an error function $\Sigma$ was defined in order to minimize the errors from boundary conditions. A necessary condition was found at Eq. (\ref{linear_system}), on which, the parameters $\tilde A_l$, $G_l$ and $I_{ij}$ were related, and only depended on the boundary $S$. Several analytical conclusions could be drawn from this set up.

The coefficients $A_l$ related to the multipole moments $Q_l$ of the system by means of equation (\ref{Ql_tilde_Al}), in which, was shown that for a general axially symmetric shape, all multipole moments scale linearly with the applied external electrostatic field $E_0$. That is, if one doubles the field, all multipole moments will double, as expected.

At the case of a hemisphere on a plate, the system (\ref{linear_system}) was solved exactly, giving the known potential (\ref{potential_spherical_case}) for a sphere, which a known apex FEF $\gamma_a = 3$ (\ref{FEF_spherical_case}).

Then, it was applied to the hemi-ellipsoid model on a plate. It was shown by equation (\ref{spheroidal_linear_system_zero}) that all even multipole contributions of the system are zero. It was possible to obtain an expression for the FEF for $h\ll R$ at equation (\ref{FEF_prolate_spheroidal_case}). Numerical calculations have been done, by truncating the linear system and attempting to approximate the values of $\tilde A_l$. Unfortunately, it was found that convergence is slow.

The method was then applied to the hemisphere on a cylindrical post (HCP) model. As in the case of the prolate spheroid, it was shown by the same way, that all even multipole contributions (image charge, quadrupole, etc) of a HCP shape over applied field are zero, thus, only odd multipoles contribute (dipole, octopole, etc). This had several consequences: it gave information of how the induced surface charge density must behave, in particular, it was shown in theorem (\ref{HCP_odd_sigma_x}) that $\sigma(x) + \sigma(-x) = 0$, that is, the charge density must be an odd function. Both multipole and $\sigma$ restrictions was generalized to shapes with axial symmetry, and mirror symmetry in the xy-plane, as given by theorems (\ref{theorem_even_multipoles}) and (\ref{theorem_odd_sigma}).

Another consequence is that other models attempting to approximate a HCP shape must comply the conditions over the multipoles, which imposes restrictions. For example, for a line of charge in the interval $z \in [-a, a]$ with linear charge density $\lambda(z)$, it was shown in theorem (\ref{linear_odd_lambda_z}) some encouragement to choose an odd function $\lambda(z)$ as well. By means of theorem (\ref{theorem_odd_sigma}), this holds also when modeling general shapes (as enunciated on the theorem) with a line of charge, as was done in \cite{india}.

Furthermore, theorem (\ref{theorem_line_of_charge}) was proven for a general emitter shape, showing that any axially symmetric shape can be approximated by the line of charge model, with arbitrary precision. A relation between the surface charge density $\sigma$ and the linear charge density $\lambda$ is shown in the theorem. This means, if one can find $\lambda$, one can also find $\sigma$, and vice versa.

In a HCP shape, because image charges are zero, the next non-zero contribution is of a dipole. It was therefore considered two HCP shapes, interacting via primitive dipole, and it was shown that the FEF decays by third power law with respect to the distances of the hemispheres (\ref{HCP_delta_third_power_law}), in agreement of numerical simulations \cite{Assis2018} and analytical models \cite{DallAgnol2018}. The same thing happens with one dimensional arrays as calculated at (\ref{HCP_delta_1D_array}). Such result was also generalized for two general axial symmetric and mirror symmetric shapes (not necessarily identical), where the fractional change in the apex FEF of both shapes was calculated at equation (\ref{delta_general_shapes}). Such equation shows that, $\delta \sim Kc^{-3}$, where $c$ is the distance between the emitters. It was also shown the pre-factor $K$ does depend on the geometry of the emitter shapes, confirming the tendency in recent analytical and numerical results \cite{Assis2018,DallAgnol2018}. The functional form of $K$ was shown in Eq. (\ref{pre_factor}).

It was shown by means of potential theory that the next contribution in the double layer resolvent kernel is bounded by equation (\ref{electrostatic_interacting_integrals}). This result shows the resolvent falls in at least $c^{-2}$, and therefore, nothing can be said about if the independent dipole-dipole approximation is a valid approximation, or the range where it is valid. However, other layers could have been chosen (say, simple layer, or, some other), and the analysis would be different. Furthermore, no proof was given that the resolvent converges, thus, such result should be used with caution.

As for perspectives from future work, the method could be extended for a general emitter (not necessarily axial symmetric). The potential would be written as in equation (\ref{harmonic_potential}). It is possible that some sort of $IG$ integrals could be found to minimize the potential on the surface. The entire treatment would be more complicated, but, it is possible that some interesting theorems can be proved. This could be investigated further. In addition, investigation about the role of potential theory in the context of field emission could be more explored.


\newpage
\appendix
\chapter{Shapes in spherical coordinates}
\section{Cylinder in spherical coordinates}
The equation of a cylinder is $x^2 + y^2 = R^2$, and we know that, the radial coordinate is $r^2 = x^2 + y^2 + z^2 = R^2 + z^2$. However, because $z = r\cos\theta$, we have: $r^2 = R^2 + r^2\cos^2\theta$. We can isolate $r$, and get: $r^2(1-\cos^2\theta) = R^2$. Therefore, for the cylinder:
$$
r = \frac{R}{\sin\theta}
$$

That could be done quite obviously from another way: considering a triangle, where hypothenus is a vector from origin to a point in the cylinder, in such a way that $\theta$ is the azimuthal angle from the spherical coordinate system. From it, we can extract:
$$
\cos\theta = \frac{z}{r},\quad\quad
\sin\theta = \frac{R}{r},\quad\quad
\tan\theta = \frac{R}{z}
$$

We have in spherical coordinates:
$$
x = r\cos\phi\sin\theta = \frac{R}{\sin\theta}\cos\phi\sin\theta = R\cos\phi
$$

The same can be done for $y$ and $z$. And thus, we'll have a vector:
$$
\mathbf x = (r\cos\phi\sin\theta, r\sin\phi\sin\theta, r\cos\theta) = \left(R\cos\phi, R\sin\phi, \frac{R}{\tan\theta}\right).
$$

With it, we can figure out the normal vector of the cylinder:
$$
\frac{\partial\mathbf x}{\partial\phi}\times\frac{\partial\mathbf x}{\partial\theta} =
\begin{vmatrix}
\mathbf{\hat x} & \mathbf{\hat y} & \mathbf{\hat z} \\
-R\sin\phi & R\cos\phi & 0 \\
0 & 0 & \frac{-R}{\sin^2\theta}
\end{vmatrix}
=
\begin{bmatrix}
-R^2 \frac{\cos\phi}{\sin^2\theta} \\
-R^2 \frac{\sin\phi}{\sin^2\theta} \\
0
\end{bmatrix}
=
-\frac{R^2}{\sin^2\theta}
\begin{bmatrix}
\cos\phi \\
\sin\phi \\
0
\end{bmatrix}
$$

With the normal vector, we can finally figure out the area element: $dS = Jd\phi d\theta$. That is:
$$
J = \left\lVert\frac{\partial\mathbf x}{\partial\phi}\times\frac{\partial\mathbf x}{\partial\theta}\right\rVert
= \frac{R^2}{\sin^2\theta} = r^2
$$

Now we have all data necessary to integrate. We now seek the limits of integration. At the top of the cylinder, $r^2 = \ell^2 + R^2$. Henceforth:
$$
\cos\theta_0 = \frac{\ell}{r} = \frac{\ell}{\sqrt{\ell^2 + R^2}} = \frac{1}{\sqrt{1 + \left(\frac{R}{\ell}\right)^2}} = a_c
$$
$$
\sin\theta_0 = \frac{R}{r} = \frac{R}{\sqrt{\ell^2 + R^2}} = \frac{1}{\sqrt{1 + \left(\frac{\ell}{R}\right)^2}} = a_s
$$

And, lastly:
$$
\tan\theta_0 = \frac{R}{\ell}, \quad\quad
\frac{1}{\tan\theta_0} = \frac{\ell}{R}
$$

\section{General symmetrical shape in spherical coordinates}
For a general element of area, we need the equation of the shape itself, that is, $r = r(\theta, \phi)$. Because the shape is axially-symmetric, $r$ doesn't depend on $\phi$, and therefore, $r = r(\theta)$.

We'll now calculate the element of area. For that, we need to calculate several derivatives. Let us begin:
\begin{align*}
\frac{\partial x}{\partial\theta} &= \frac{\partial r}{\partial\theta}\cos\phi\sin\theta + r\cos\phi\cos\theta &= \frac{x}{r}\frac{\partial r}{\partial\theta} + z\cos\phi \\
\frac{\partial y}{\partial\theta} &= \frac{\partial r}{\partial\theta}\sin\phi\sin\theta + r\sin\phi\cos\theta &= \frac{y}{r}\frac{\partial r}{\partial\theta} + z\sin\phi \\
\frac{\partial z}{\partial\theta} &= \frac{\partial r}{\partial\theta}\cos\theta - r\sin\theta &= \frac{z}{r}\frac{\partial r}{\partial\theta} - r\sin\theta \\
\end{align*}

And now, derivating with respect with $\phi$.
\begin{align*}
\frac{\partial x}{\partial\phi} &= -r\sin\phi\sin\theta &= -y \\
\frac{\partial y}{\partial\phi} &= r\cos\phi\sin\theta &= +x \\
\frac{\partial z}{\partial\phi} &= 0 &= 0 \\
\end{align*}

The normal vector:
$$
\frac{\partial\mathbf x}{\partial\phi}\times\frac{\partial\mathbf x}{\partial\theta} =
\begin{vmatrix}
\mathbf{\hat x} & \mathbf{\hat y} & \mathbf{\hat z} \\
-y & x & 0 \\
\frac{x}{r}\frac{\partial r}{\partial\theta} + z\cos\phi &
\frac{y}{r}\frac{\partial r}{\partial\theta} + z\sin\phi &
\frac{z}{r}\frac{\partial r}{\partial\theta} - r\sin\theta \\

\end{vmatrix}
$$

Thus:
\begin{equation}
\frac{\partial\mathbf x}{\partial\phi}\times\frac{\partial\mathbf x}{\partial\theta} =
\begin{bmatrix}
x\left(\frac{z}{r}\frac{\partial r}{\partial\theta} - r\sin\theta\right) \\
y\left(\frac{z}{r}\frac{\partial r}{\partial\theta} - r\sin\theta\right) \\
-y\left(\frac{y}{r}\frac{\partial r}{\partial\theta} + z\sin\phi\right)
-x\left(\frac{x}{r}\frac{\partial r}{\partial\theta} + z\cos\phi\right) \\
\end{bmatrix}
\end{equation}

And, therefore, finally, the element of area is:
\begin{equation}
\begin{split}
J^2 &= \left(x^2 + y^2\right)\left(\frac{z}{r}\frac{\partial r}{\partial\theta} - r\sin\theta\right)^2 +
\left[y\left(\frac{y}{r}\frac{\partial r}{\partial\theta} + z\sin\phi\right)
+x\left(\frac{x}{r}\frac{\partial r}{\partial\theta} + z\cos\phi\right)\right]^2 \\
&= \left(x^2 + y^2\right)\left(\frac{z}{r}\frac{\partial r}{\partial\theta} - r\sin\theta\right)^2 + y^2\left(\frac{y}{r}\frac{\partial r}{\partial\theta} + z\sin\phi\right)^2 \\
&+x^2\left(\frac{x}{r}\frac{\partial r}{\partial\theta} + z\cos\phi\right)^2
+ 2xy\left(\frac{y}{r}\frac{\partial r}{\partial\theta} + z\sin\phi\right)\left(\frac{x}{r}\frac{\partial r}{\partial\theta} + z\cos\phi\right)
\end{split}
\end{equation}

We can expand the terms even more, and then, group them in powers of $z$, and factor out what we can:
\begin{equation}
\begin{split}
J^2 &= z^2\left[(x^2 + y^2)\left(\frac{1}{r}\frac{\partial r}{\partial\theta}\right)^2 + \left(x\cos\phi + y\sin\phi\right)^2\right] \\
&+ z\left[-2(x^2 + y^2)\sin\theta\frac{\partial r}{\partial\theta} + \left(\frac{1}{r}\frac{\partial r}{\partial\theta}\right)
\left(\sin\phi\left(2y^3 + 2x^2y\right) + \cos\phi\left(2x^3 + 2xy^2\right)\right)\right] \\
&+ \left[(x^2 + y^2) r^2\sin^2\theta + \left(\frac{1}{r}\frac{\partial r}{\partial\theta}\right)^2(x^2 + y^2)^2\right]
\end{split}
\end{equation}

And then we can finally arrive at:
\begin{equation}
\begin{split}
J^2 &= z^2\left[(x^2 + y^2)\left(\frac{1}{r}\frac{\partial r}{\partial\theta}\right)^2 + \left(x\cos\phi + y\sin\phi\right)^2\right] \\
&+ 2z(x^2 + y^2)\left[-\left(\sin\theta\frac{\partial r}{\partial\theta}\right) + \left(\frac{1}{r}\frac{\partial r}{\partial\theta}\right)\left(x\cos\phi + y\sin\phi\right)\right] \\
&+ \left(x^2 + y^2\right)^2\left[1 + \left(\frac{1}{r}\frac{\partial r}{\partial\theta}\right)^2\right]
\end{split}
\end{equation}

Now, we'll substitute $x,y,z$ by its values: $x = r\cos\phi\sin\theta, y = r\sin\phi\sin\theta, z=r\cos\theta$. And for that, some quantities might be useful:
\begin{equation}
\begin{split}
x^2 + y^2 &= r^2\sin^2\theta\left(\cos^2\theta + \sin^2\theta\right) = r^2\sin^2\theta \\
x\cos\phi + y\sin\phi &= r\cos^2\phi\sin\theta + r\sin^2\phi\sin\theta = r\sin\theta \\
(x\cos\phi + y\sin\phi)^2 &= (r\sin\theta)^2 = r^2 \sin^2\theta
\end{split}
\end{equation}

Now, we substitute, and, doing some more calculations, we arrive at this expression:
\begin{equation}
J^2 = r^4\sin^2\theta\left[1 + \left(\frac{1}{r}\frac{\partial r}{\partial\theta}\right)^2\right]
\quad\implies\quad
J = r^2\sin\theta\sqrt{1 + \left(\frac{1}{r}\frac{\partial r}{\partial\theta}\right)^2}
\end{equation}

Or, because $r$ doesn't depend on $\phi$, that is, depends exclusively on $\theta$, then:
\begin{equation} \label{general_area_element}
J = r^2\sin\theta\sqrt{1 + \left(\frac{1}{r}\frac{dr}{d\theta}\right)^2}
\end{equation}

\section{Prolate spheroid in spherical coordinates}
For a prolate spheroid (elongated ellipsoid of revolution), with radius $R$ and height $h$, obeys the quadric equation:
\begin{equation}
\frac{x^2}{R^2} + \frac{y^2}{R^2} + \frac{z^2}{h^2} = 1
\end{equation}

Because $r^2 = x^2 + y^2 + z^2$, it becomes:
\begin{equation}
\begin{split}
x^2 + y^2 + \frac{R^2}{h^2}z^2 = R^2 \\
x^2 + y^2 + z^2 + \left(\frac{R^2}{h^2}-1\right) z^2 = R^2 \\
r^2 + \left(\frac{R^2}{h^2}-1\right) z^2 = R^2 \\
r^2\left[1 + \left(\frac{R^2}{h^2}-1\right) \cos^2\theta\right] = R^2 \\
\end{split}
\end{equation}

Where it was used $z = r\cos\theta$. The equation of the prolate spheroid becomes:
\begin{equation} \label{spheroidal_r}
r = \frac{R}{\sqrt{1 - \epsilon^2\cos^2\theta}},
\quad\quad
\epsilon^2 = 1 - \frac{R^2}{h^2}
\end{equation}

In here, $\epsilon$ is the eccentricity of the revolution ellipse. Acceptable values of $\epsilon$ are: $0\le\epsilon < 1$, where $\epsilon = 0$ iff $R = h$, that is, the problem of the sphere which was solved. The derivative with respect to theta, becomes:
\begin{equation}
\frac{dr}{d\theta} = -\frac{R\epsilon^2\cos\theta\sin\theta}{\left[1 - \epsilon^2\cos\theta\right]^{3/2}}
\end{equation}

Thus:
\begin{equation} \label{spheroidal_drdtheta}
\frac{dr}{d\theta} = -\frac{r^3}{R^2}\epsilon^2\cos\theta\sin\theta
\end{equation}

Using (\ref{general_area_element}), we find for the spheroid:
\begin{equation} \label{spheroidal_area_element}
J = r^2\sin\theta\sqrt{1 + \frac{r^4}{R^4}\epsilon^4\cos^2\theta\sin^2\theta}
\end{equation}

\section{Suspended hemisphere in spherical coordinates}
The suspended hemisphere is much more complicated than the cylinder. The hemisphere is suspended over the cylinder, thus its center is located at $(0, 0, \pm\ell)$. The equation of both hemispheres are: $x^2 + y^2 + (z\mp\ell)^2 = R^2$. Again, we recall: $x^2 + y^2 + z^2 = r^2$, in spherical coordinates. Then:
\begin{align*}
x^2 + y^2 + (z\mp\ell)^2  &= R^2   \\
x^2 + y^2 +z ^2 \mp 2z\ell + \ell^2  &= R^2   \\
r^2 \mp 2z\ell + \ell^2  &= R^2   \\
\end{align*}

Therefore, using $z=r\cos\theta$, we get: $r^2 = R^2 -\ell^2 \pm 2r\ell\cos\theta$. The same result can be reached by a triangle, except this time, unlike the cylinder case, our triangle is no longer a right triangle. Yet, by cosine law, we can get at the same result.

Such a triangle becomes a right triangle in the limits of integration: $\phi\in[0, 2\pi]$ and $\theta\in[0, \theta_0]$. At $\theta=0$ we have $r = \ell + R = h$. At $\theta = \theta_0$, we have $r^2 = \ell^2 + R^2$. And, the $\theta_0$ can be calculated using such right triangle:
\begin{equation} \label{trig_hemi_limit}
\sin\theta_0 = \frac{R}{r_0}, \quad\quad
\cos\theta_0 = \frac{\ell}{r_0},\quad\quad
\tan\theta_0 = \frac{R}{\ell}
\end{equation}

The equation we got, $r^2 = R^2 -\ell^2 \pm 2r\ell\cos\theta$, is a quadratic equation, and can be solved analytically:
\begin{equation}
r = \pm\ell\cos\theta \pm \sqrt{R^2 - \ell^2\sin^2\theta}
\end{equation}

Where, the first $\pm$ is due to the location of the sphere (either upwards, or downwards), while the second $\pm$ is due to the $\pm\sqrt\Delta$ of the quadratic equation. Therefore, these are four equations, where only two of them are correct. To find out, we look at extreme values of $\theta$, namely, $0$ and $2\pi$.
\begin{equation}
\begin{split}
\theta = 0\quad &\implies\quad r = \pm\ell\pm R \\
\theta = \pi\quad &\implies\quad r = \mp\ell\pm R \\
\end{split}
\end{equation}

We know that, in both cases, the correct expression should be $r = \ell + R$. Therefore, we pick $++$. That leads us to:
\begin{equation} \label{suspended_hemispherical_r_unsigned_l}
\begin{split}
r = \ell\cos\theta + \sqrt{R^2 - \ell^2\sin^2\theta},\quad &\theta\le\frac{\pi}{2},\quad\ell\ge 0, \quad\text{(Upper Hemisphere)} \\
r = -\ell\cos\theta + \sqrt{R^2 - \ell^2\sin^2\theta},\quad &\theta\ge\frac{\pi}{2},\quad\ell\ge 0,\quad\text{(Lower Hemisphere)}\\
\end{split}
\end{equation}

Another way to interpret such result, is to have one unique formula:
\begin{equation} \label{suspended_hemispherical_r_signed_l}
r = \ell\cos\theta + \sqrt{R^2 - \ell^2\sin^2\theta},\quad\theta\in[0, \pi],\quad
\begin{tabular}{cc}
$\ell\ge 0,$ & $\text{(Upper Hemisphere)}$ \\
$\ell\le 0,$ & $\text{(Lower Hemisphere)}$
\end{tabular}
\end{equation}

\subsection{Element of area}
We already have equation (\ref{general_area_element}). Now, all we are lacking to do, is to evaluate the derivative. We could choose either to deal with unsigned $\ell$ as in equation (\ref{suspended_hemispherical_r_unsigned_l}, or with signed $\ell$, as in equation (\ref{suspended_hemispherical_r_signed_l}). Notice that, dealing with unsigned $\ell$, two expressions would be required for the derivative, while with signed $\ell$ expression, only one equation would be required. Choosing the signed expression, and calculating the derivative, one can get:
\begin{equation}
\frac{\partial r}{\partial\theta} = -\ell\sin\theta\left[1 + \frac{\ell\cos\theta}{\sqrt{R^2 - \ell^2\sin^2\theta}}\right]
\end{equation}

All we have to do now, is to insert this expression into the $J$ we had, and, then, finally:
\begin{equation} \label{hemispherical_J}
J = r^2\sin\theta\sqrt{1 + \frac{\ell^2}{r^2}\sin^2\theta\left(1 + \frac{\ell\cos\theta}{\sqrt{R^2 - \ell^2\sin^2\theta}}\right)^2}
\end{equation}

Don't forget that, $r$ still depends on $\theta$ by (\ref{suspended_hemispherical_r_signed_l}), so, there is still one substitution left. But, we're going to leave it at that.

\subsection{Approximation: $\ell\gg R$}
As we have noticed, the expression for the suspended hemispherical area element is quite complicated, and we seek to simplify doing approximations of these kind. For that, we seek in understanding how our variables, $\ell, \sin\theta, \cos\theta\, r$ behaves. For that, we seek our attention to (\ref{suspended_hemispherical_r_signed_l}). Notice that:
\begin{equation}
r = \underbrace{\ell\cos\theta}_{\text{Grows as} O(\ell)} \quad+\quad \underbrace{\sqrt{R^2 - \ell^2\sin^2\theta}}_{\text{Grows as } O(1)}
\end{equation}

Furthermore, with aid of (\ref{trig_hemi_limit}), we can come up with a more general relationship, which is valid for all $\ell$ (not only $\ell\gg R$), which is:
\begin{equation} \label{trig_hemi}
0 \le \sin\theta \le \frac{R}{r_0}, \quad\quad
\frac{\ell}{r_0} \le \cos\theta \le 1
\end{equation}

This helps us identify that, under approximation $\ell\gg R$, sine will keep itself very close to zero all the time (because $r_0\approx h$), and cosine will keep itself very close to one all the time.

This can be identified geometrically, from the triangle. We can do the same thing with the square root in expression (\ref{hemispherical_J}), in which we can expand, in order to determine the assymptoptics.

\begin{equation} \label{assymptoptics_Jggl}
J = r^2\sin\theta\sqrt{
\underbrace{1}_{J\in O(\ell^2)}
+\underbrace{\frac{\ell^2}{r^2}\sin^2\theta}_{J\in O(1)}
+\underbrace{\frac{\ell^2}{r^2}\frac{\ell^2 \cos^2\theta\sin^2\theta}{R^2 - \ell^2\sin^2\theta}}_{J\in O(\ell^2)}
+\underbrace{\frac{\ell^2}{r^2}\frac{2\ell\cos\theta\sin^2\theta}{\sqrt{R^2 - \ell^2\sin^2\theta}}}_{J\in O(\ell)}
}
\end{equation}

Now, we pick the highest asymptoptics, that is, $O(\ell^2)$, and neglect all others. We, thus, have:
\begin{equation} \label{hemispherical_J_ellggR}
J = r^2\sin\theta\sqrt{1 + \frac{\ell^2}{r^2}\frac{\ell^2 \cos^2\theta \sin^2\theta}{R^2 - h^2\sin^2\theta}}, \quad\ell\gg R
\end{equation}

Now we have a much simpler area element $J$: There's no longer a square root inside a square root, and a few other benefits.

\subsection{Approximation: $\ell\ll R$}
We can't use the same asymptoptics as equation $(\ref{assymptoptics_Jggl})$, because it was assumed $\ell\approx r$, which is clearly not true anymore under this approximation regime. Now, we have $r\approx R$. In addition, from (\ref{trig_hemi}), we conclude that cosine and sine will vary freely as $\theta$ changes from $0$ to $\theta_0$, precisely because $\theta_0$ is an angle close to $\pi/2$.

We that in mind, we can re-write equation (\ref{assymptoptics_Jggl}) for our approximation case:
\begin{equation} \label{assymptoptics_Jlll}
J = r^2\sin\theta\sqrt{
\underbrace{1}_{J\in O(1)}
+\underbrace{\frac{\ell^2}{r^2}\sin^2\theta}_{J\in O(\ell)}
+\underbrace{\frac{\ell^2}{r^2}\frac{\ell^2 \cos^2\theta\sin^2\theta}{R^2 - \ell^2\sin^2\theta}}_{J\in O(\ell^2)}
+\underbrace{\frac{\ell^2}{r^2}\frac{2\ell\cos\theta\sin^2\theta}{\sqrt{R^2 - \ell^2\sin^2\theta}}}_{J\in O(\ell^{3/2})}
}
\end{equation}

Considering only the $O(1)$ term, we recover $J = r^2\sin\theta$, the same for an sphere, especially if one considers $r\approx R$. Henceforth, we'll also pick the term $O(\ell)$. Therefore:
\begin{equation} \label{hemispherical_J_ellllR}
J = r^2\sin\theta\sqrt{1 + \frac{\ell^2}{r^2}\sin\theta},\quad\ell\ll R
\end{equation}

\newpage
\chapter{Spheroidal Integrals}
\section{G-Integrals}
Recalling (\ref{IG_integrals}):
$$
G_l = \int_S r^{-l} P_l(\cos\theta) \cos\theta dS
$$

Having the area element in (\ref{spheroidal_area_element}), together with (\ref{spheroidal_r}), the G-integral can be written as
\begin{equation} \label{spheroidal_Gl_thetaphi}
\dot G_l = \int_0^{2\pi}\int_0^\pi r^{-l} P_l(\cos\theta) \cos\theta\cdot r^2\sin\theta \cdot \sqrt{1 + \frac{r^4}{R^4}\epsilon^4\cos^2\theta\sin^2\theta} d\theta d\phi
\end{equation}

Therefore, integrating at $\phi$:
\begin{equation} \label{spheroidal_Gl_theta}
\dot G_l = 2\pi \int_0^\pi r^{-l+2} P_l(\cos\theta) \cos\theta\sin\theta\sqrt{1 + \frac{r^4}{R^4}\epsilon^4\cos^2\theta\sin^2\theta} d\theta
\end{equation}

The substitution $x = \cos\theta$ yields: $dx = -\sin\theta$, where $\theta\in[0,\pi]\implies x\in[-1, 1]$. Therefore:
\begin{equation}
\dot G_l = 2\pi \int_{-1}^1 r^{-l+2} P_l(x) x\sqrt{1 + \frac{r^4}{R^4}\epsilon^4 x^2 (1-x^2)} dx
\end{equation}

It is important to notice something:
\begin{equation} \label{spheroidal_Gl_parity}
\dot G_l = 2\pi \int_{-1}^1 \underbrace{r^{-l+2}}_{\text{Even}}
\underbrace{P_l(x) x}_{\text{Even if } l\in\{1, 3, 5, \dots\}}
\underbrace{\sqrt{1 + \frac{r^4}{R^4}\epsilon^4 x^2 (1-x^2)}}_{\text{Even}}
 dx
\end{equation}

Here we used the fact, that, $r(x)$ and its derivative are even:
\begin{equation} \label{spheroidal_rdrdtheta_parity}
r(x) = r(-x),\quad\quad
\left(\frac{1}{r}\frac{dr}{d\theta}\right)^2(x) = \left(\frac{1}{r}\frac{dr}{d\theta}\right)^2(-x)
\end{equation}
as can be computed directly from (\ref{spheroidal_r}) and (\ref{spheroidal_drdtheta}). Because the integral domain is symmetric (from -1 to 1), an odd integrand will yield zero, which will happen with $l\in\{0, 2, 4, 6, ...\}$. When integrand is odd, we guarantee nonzero result, due to symmetric domain (unless the entire integrand is identically zero, which is not the case).
\begin{equation}
\begin{split}
\dot G_l = 0, \quad&\text{if}\quad l\in\{0, 2, 4, 6, 8, \dots\} \\
\dot G_l \neq 0, \quad&\text{if}\quad l\in\{1, 3, 5, 7, 9, \dots\}
\end{split}
\end{equation}

Using (\ref{spheroidal_r}), we get:
\begin{equation}
\dot G_l = \frac{2\pi}{R^{-l+2}}\int_{-1}^1\frac{xP_l(x)}{\left[1 - \epsilon^2 x^2\right]^{\frac{-l+2}{2}}}\sqrt{1 + \frac{\epsilon^4 x^2 (1-x^2)}{1 - \epsilon^2 x^2}} dx
\end{equation}

It only makes sense to calculate $G_{2l+1}$, for $l\in\{0, 1, 2, \dots\}$. With that in mind, consider:
\begin{equation}
\begin{split}
\dot G_{2l+1} = 2\pi\int_{-1}^1 r^{-(2l+1)+2} \cdot x P_{2l+1}(x)\cdot\sqrt{1 + \frac{r^4}{R^4}\epsilon^4 x^2 (1-x^2)} dx \\
\dot G_{2l+1} = 2\pi\int_{-1}^1 r^{-2l+1} \cdot x P_{2l+1}(x)\cdot\sqrt{1 + \frac{r^4}{R^4}\epsilon^4 x^2 (1-x^2)} dx
\end{split}
\end{equation}

Again using (\ref{spheroidal_r}), we get:
\begin{equation}
\dot G_{2l+1} = 2\pi R^{-2l+1}\int_{-1}^1 \frac{x P_{2l+1}(x)}{\left[1 - \epsilon^2 x^2\right]^{-l+1/2}}\cdot\sqrt{1 + \frac{\epsilon^4 x^2 (1-x^2)}{(1 - \epsilon^2 x^2)^2}} dx \\
\end{equation}

Which can also be written as:
\begin{equation} \label{spheroidal_G}
\dot G_{2l+1} = \frac{2\pi}{R^{2l-1}}\int_{-1}^1 x P_{2l+1}(x)\left[1 - \epsilon^2 x^2\right]^{l-1/2}\cdot\sqrt{1 + \frac{\epsilon^4 x^2 (1-x^2)}{(1 - \epsilon^2 x^2)^2}} dx
\end{equation}

As the case with the Legendre polynomials, we can write the power in terms of a binomial expansion, except this one will have infinite terms. The more $\epsilon$ approaches one, the more terms will be needed to approximate it.
\begin{equation} \label{expansion_roverR}
\left[1 - \epsilon^2 x^2\right]^{l-1/2} = \sum_{n=0}^\infty\binom{l-\frac{1}{2}}{n}\left(-\epsilon^2 x^2\right)^n
\end{equation}

Where we define the generalized binomial coefficients as:
\begin{equation} \label{generalized_binomial}
\binom{\alpha}{k} := \frac{\alpha (\alpha-1) (\alpha-2) \cdots (\alpha-k+1)}{k!}.
\end{equation}

Rewriting (\ref{spheroidal_G}), we get:
\begin{equation}
\dot G_{2l+1} = \frac{2\pi}{R^{2l-1}}\int_{-1}^1 x P_{2l+1}(x)\cdot\sum_{k=0}^\infty\binom{l-\frac{1}{2}}{n}(-1)^n\epsilon^{2n}x^{2n}\cdot\sqrt{1 + \frac{\epsilon^4 x^2 (1-x^2)}{(1 - \epsilon^2 x^2)^2}} dx
\end{equation}

Making the proper substitutions, and writing the Legendre polynomials as:
\begin{equation} \label{Legendre_polynomials}
P_l(x) = \sum_{k=0}^l a_{lk} x^k
\end{equation}

Then:
$$
P_{2l+1}(x) =
\sum_{k=0}^{2l+1} a_{2l+1,k} x^k =
\sum_{k=0}^{l} a_{2l+1,2k+1} x^{2k+1}
$$

Therefore:
\begin{equation}
\begin{split}
\dot G_{2l+1} = \frac{2\pi}{R^{2l-1}}\int_{-1}^1 x \sum_{k=0}^{l} a_{2l+1,2k+1} x^{2k+1}\cdot\sum_{n=0}^\infty\binom{l-\frac{1}{2}}{n}(-1)^n\epsilon^{2n}x^{2n}\cdot\sqrt{1 + \frac{\epsilon^4 x^2 (1-x^2)}{(1 - \epsilon^2 x^2)^2}} dx \\
\dot G_{2l+1} = \frac{2\pi}{R^{2l-1}}\sum_{k=0}^{l} a_{2l+1,2k+1} \sum_{n=0}^\infty (-1)^n \binom{l-\frac{1}{2}}{n}\epsilon^{2n} \int_{-1}^1 x^{2k+2} x^{2n}\cdot\sqrt{1 + \frac{\epsilon^4 x^2 (1-x^2)}{(1 - \epsilon^2 x^2)^2}} dx \\
\dot G_{2l+1} = \frac{2\pi}{R^{2l-1}}\sum_{n=0}^\infty \sum_{k=0}^{l} (-1)^n a_{2l+1,2k+1}\binom{l-\frac{1}{2}}{n}\epsilon^{2n} \int_{-1}^1 x^{2(k+n+1)}\cdot\sqrt{1 + \frac{\epsilon^4 x^2 (1-x^2)}{(1 - \epsilon^2 x^2)^2}} dx \\
\end{split}
\end{equation}

Finally:
\begin{equation} \label{exact_spheroidal_Gl}
\dot G_{2l+1} = \frac{2\pi}{R^{2l-1}}\sum_{n=0}^\infty\sum_{k=0}^{l} (-1)^n a_{2l+1,2k+1}\binom{l-\frac{1}{2}}{n}\epsilon^{2n} A_{2(k+n+1)}
\end{equation}

Where $A_n$ is the nth moment of $f$ in the $[-1, 1]$, that is:
\begin{equation}
A_n = \int_{-1}^1 x^n f(x) dx
= \int_{-1}^1 x^n \sqrt{1 + \frac{\epsilon^4 x^2 (1-x^2)}{(1 - \epsilon^2 x^2)^2}} dx
\end{equation}

It is possible to write $A_n$ exactly in terms of hypergeometric functions, but, seeking simplicity, we'll content ourselves with the second order Taylor expansion, which we already computed. There are a few important properties: $A_{2n+1} = 0$ by parity. Furthermore, because $f(x) > 0,\forall x\in[-1, 1]$, one gets:
\begin{equation}
\begin{split}
x^2 < 1,&\quad\forall x\in [-1, 1] \\
x^{2n+2} < x^{2n},&\quad\forall x\in [-1, 1] \\
x^{2n+2} f(x) < x^{2n} f(x),&\quad\forall x\in [-1, 1] \\
A_{2n+2} < A_{2n},&\quad\forall x\in [-1, 1]
\end{split}
\end{equation}

Rewriting (\ref{exact_spheroidal_Gl}) as $\dot G_{2l+2} = \frac{2\pi}{R^{2l-1}}\sum_{n=0}^\infty a_n$, one gets:
\begin{equation}
\begin{split}
a_n = \sum_{k=0}^{l} (-1)^n a_{2l+1,2k+1}\binom{l-\frac{1}{2}}{n}\epsilon^{2n} A_{2(k+n+1)} \\
a_n = (-1)^n \binom{l-\frac{1}{2}}{n}\epsilon^{2n} \sum_{k=0}^{l} a_{2l+1,2k+1} A_{2(k+n+1)} \\
a_{n+1} = (-1)^{n+1} \binom{l-\frac{1}{2}}{n}\epsilon^{2n} \sum_{k=0}^{l} a_{2l+1,2k+1} A_{2(k+n+1)}
\end{split}
\end{equation}

Therefore:
\begin{equation}
\frac{a_{n+1}}{a_n} = \frac{
(-1)^{n+1} \binom{l-\frac{1}{2}}{n}\epsilon^{2n+2} \sum_{k=0}^{l} a_{2l+1,2k+1} A_{2(k+n+2)}
}{
(-1)^n \binom{l-\frac{1}{2}}{n}\epsilon^{2n} \sum_{k=0}^{l} a_{2l+1,2k+1} A_{2(k+n+1)} \\
}
\end{equation}

Thus:
\begin{equation}
\frac{a_{n+1}}{a_n} =
\left[
\frac{\frac{1}{2}-l}{n+1} +
\frac{n}{n+1}
\right]
\epsilon^2
\frac{
\sum_{k=0}^{l} a_{2l+1,2k+1} A_{2(k+n+2)}
}{
\sum_{k=0}^{l} a_{2l+1,2k+1} A_{2(k+n+1)} \\
}
\end{equation}

Thus, if $n + 1/2 > l$, then $a_{n+1} < a_n$. More, if both factors are close to one, then, we get $a_{n+1}\approx \epsilon^2 a_n$. This is particularly relevant, if one is interested using the summation for numerical computation.

What is left, is to Taylor expand $f(x)$.
\begin{equation}
f(x) = \sqrt{1 + g(x)},\quad\quad
g(x) = \frac{h(x)}{q(x)},\quad\quad
\begin{array}{ll}
h(x) = \epsilon^4 x^2 (1 - x^2) \\
q(x) = (1 - \epsilon^2 x^2)^2
\end{array}
\end{equation}

Then:
\begin{equation}
\begin{split}
f'(x) = \frac{g'(x)}{2f(x)},\quad\quad
\begin{array}{ll}
h'(x) = 2 \epsilon^4 x (1 - 2x^2) \\
q'(x) = -2\epsilon^2 x (1 - \epsilon^2 x^2)
\end{array} \\
f''(x) = \frac{g''(x)}{2f(x)} - \frac{g'(x) f'(x)}{2 f(x)^2},\quad\quad
\begin{array}{ll}
h''(x) = 2 \epsilon^4 (1 - 6x^2) \\
q''(x) = -2\epsilon^2 (1 - 2\epsilon^2 x)
\end{array} \\
g'(x) = \frac{h'(x) q(x) - h(x) q'(x)}{q(x)^2} \\
q(x)^2 g''(x) = h''(x) q(x) - h(x) q''(x) - 2h'(x) q'(x) + 2h(x)\frac{q'(x)^2}{q(x)} \\
\end{split}
\end{equation}

With that, we find:
$$
f(0) = 1,\quad\quad
f(\pm 1) = 1,\quad\quad
f'(0) = 0,\quad\quad
f''(0) = \epsilon^4
$$

Because we have $f(0) = f(\pm 1) = 1$ and $f'(0) = 0$ and $f''(0) > 0$, there must necessarily exist at least two points of maximum inside $[-1, 1]$, that is, there exists $x_m$ and $-x_m$, where $-1 < x_m < 1$, such that $f'(x_m) = 0$ and $f''(x_m) < 0$. Enforcing $f'(x) = 0$ yields a polynomial equation of 7th degree, in which all the roots can be found analytically. The value $\pm x_m$ is important, because, if we have $f(x_m)$, we'd have an upper bound for $f$ in the $[-1, 1]$ domain, enabling us to estimate the error we're committing by integrating a Taylor expansion, and, by allowing us to come up with better trying functions. It turns out, not much error is done by Taylor expanding, and, in general, the greater $\epsilon$, the greater the error. Therefore, we'll consider:
\begin{equation}
f(x) \approx 1 + \frac{\epsilon^4}{2}x^2
\end{equation}

Therefore, integrating, $A_n$ is found:
\begin{equation} \label{spheroidal_A2n_approx}
A_{2n}\approx\int_{-1}^1 x^{2n}\left[1 + \frac{\epsilon^4}{2}x^2\right] dx =
\frac{1}{n+\frac{1}{2}} + \frac{\epsilon^4}{2}\frac{1}{n+\frac{3}{2}}
\end{equation}

Therefore, while (\ref{exact_spheroidal_Gl}) is the exact expression, a reasonable approximation on top of it can be:
\begin{equation} \label{approx_spheroidal_Gl}
\dot G_{2l+1} \approx \frac{2\pi}{R^{2l-1}}\sum_{n=0}^\infty \sum_{k=0}^{l} (-1)^n a_{2l+1,2k+1}\binom{l-\frac{1}{2}}{n}\epsilon^{2n}
\left[\frac{1}{n+k+1+\frac{1}{2}} + \frac{\epsilon^4}{2}\frac{1}{n+k+1+\frac{3}{2}}\right]
\end{equation}

\section{I-Integrals}
Recalling (\ref{IG_integrals}):
$$
I_{ij} = \int_S r^{-(i+j+2)} P_i(\cos\theta) P_j(\cos\theta) dS
$$

Inserting (\ref{spheroidal_area_element}) at equation above:
$$
\dot I_{ij} = \int_0^\pi \int_0^{2\pi} r^{-(i+j+2)} P_i(\cos\theta) P_j(\cos\theta) \cdot r^2\sin\theta\sqrt{1 + \frac{r^4}{R^4}\epsilon^4\cos^2\theta\sin^2\theta} \cdot d\phi d\theta
$$

Integrating in $\phi$:
\begin{equation} \label{spheroidal_Iij_theta}
\dot I_{ij} = 2\pi \int_0^\pi r^{-i-j} P_i(\cos\theta) P_j(\cos\theta)\sin\theta\sqrt{1 + \frac{r^4}{R^4}\epsilon^4\cos^2\theta\sin^2\theta} \cdot d\theta
\end{equation}

Making the substitution $x=\cos\theta$:
$$
\dot I_{ij} = 2\pi \int_{-1}^1 r^{-i-j} P_i(x) P_j(x)\sqrt{1 + \frac{r^4}{R^4}\epsilon^4 x^2(1-x^2)} \cdot dx
$$

By the same parity arguments as shown in equation (\ref{spheroidal_Gl_parity}) and (\ref{spheroidal_rdrdtheta_parity}), we conclude: $\dot I_{ij} = 0$ iff $P_i P_j$ is odd iff $i+j$ is odd. And, thus, $\dot I_{ij}\neq 0$ iff $i+j$ is even. Therefore, one only needs $\dot I_{ij}$ iff $i+j = 2n$ for some integer $n$. Let $j = 2n-i$, and thus, $I_{i,2n-i}\neq 0$.

Substituing $x=\cos\theta$ into the radial equation (\ref{spheroidal_r}), one gets $r(x)$, and inserting at equation above, one gets:
$$
I_{ij} = 2\pi \int_{-1}^1 \left[\frac{R}{\sqrt{1 - \epsilon^2 x^2}}\right]^{-i-j} P_i(x) P_j(x)\sqrt{1 + \frac{\epsilon^4 x^2(1-x^2)}{(1 - \epsilon^2 x^2)^2}} dx
$$

Thus:
\begin{equation}
I_{ij} = \frac{2\pi}{R^{i+j}} \int_{-1}^1 P_i(x) P_j(x) \left[\sqrt{1 - \epsilon^2 x^2}\right]^{i+j} \sqrt{1 + \frac{\epsilon^4 x^2(1-x^2)}{(1 - \epsilon^2 x^2)^2}} dx
\end{equation}

Because $i+j$ must be even for $I_{ij}\neq 0$, then, $\frac{i+j}{2}$ is an integer. Meaning, the square root in the middle of above expression vanishes, as indicated by expression below:
\begin{equation} \label{spheroidal_Iij_x}
I_{ij} = \frac{2\pi}{R^{i+j}} \int_{-1}^1 P_i(x) P_j(x) \left[1 - \epsilon^2 x^2\right]^{\frac{i+j}{2}} \sqrt{1 + \frac{\epsilon^4 x^2(1-x^2)}{(1 - \epsilon^2 x^2)^2}} dx
\end{equation}

As it was done in the case of $G$-integrals, using (\ref{expansion_roverR}) and (\ref{Legendre_polynomials}), to find:
\begin{equation}
I_{ij} = \frac{2\pi}{R^{i+j}} \int_{-1}^1
\sum_{u=0}^i a_{u,i} x^u
\sum_{v=0}^j a_{v,j} x^v
\sum_{n=0}^{\frac{i+j}{2}} \binom{\frac{i+j}{2}}{n}(-\epsilon^2 x^2)^n
\sqrt{1 + \frac{\epsilon^4 x^2(1-x^2)}{(1 - \epsilon^2 x^2)^2}} dx,
\end{equation}

Unlike $G$-integrals, the expansion by the binomial theorem doesn't yield an infinite series, precisely because the exponent is an integer.
\begin{equation}
I_{ij} = \frac{2\pi}{R^{i+j}}
\sum_{u=0}^i \sum_{v=0}^j a_{u,i} a_{v,j}
\sum_{n=0}^{\frac{i+j}{2}} \binom{\frac{i+j}{2}}{n}(-\epsilon^2)^n
\int_{-1}^1 x^{u+v+2n}
\sqrt{1 + \frac{\epsilon^4 x^2(1-x^2)}{(1 - \epsilon^2 x^2)^2}} dx
\end{equation}

Therefore, for $i+j$ even, the exact expression is:
\begin{equation} \label{exact_spheroidal_Iij}
I_{ij} = \frac{2\pi}{R^{i+j}}
\sum_{u=0}^i \sum_{v=0}^j \sum_{n=0}^{\frac{i+j}{2}} (-1)^n a_{u,i} a_{v,j} \binom{\frac{i+j}{2}}{n}\epsilon^{2n}
A_{u+v+2n}
\end{equation}

We can approximate, using (\ref{spheroidal_A2n_approx}), for $i+j$ even, and get:
\begin{equation} \label{approx_spheroidal_Iij}
I_{ij} = \frac{2\pi}{R^{i+j}}
\sum_{u=0}^i \sum_{v=0}^j \sum_{n=0}^{\frac{i+j}{2}} (-1)^n a_{u,i} a_{v,j} \binom{\frac{i+j}{2}}{n}\epsilon^{2n}
\left[\frac{1}{n + \frac{u+v}{2} + \frac{1}{2}} +
\frac{\epsilon^4}{2}\frac{1}{n + \frac{u+v}{2} + \frac{3}{2}}
\right]
\end{equation}

For $i+j$ odd, we know $I_{ij} = 0$.

\newpage
\chapter{Suspended Hemispherical Integrals}
\section{Exact case}
From (\ref{suspended_hemispherical_r_signed_l}), recall:
$$
r = \ell\cos\theta + \sqrt{R^2 - \ell^2\sin^2\theta}
$$

From (\ref{hemispherical_J}), recall the element of area:
$$
J = r^2\sin\theta\sqrt{1 + \frac{\ell^2}{r^2}\sin^2\theta\left(1 + \frac{\ell\cos\theta}{\sqrt{R^2 - \ell^2\sin^2\theta}}\right)^2}
$$

Both equations are valid for signed $\ell$, depending if one is referring to the upper hemisphere or lower hemisphere. With substitution $x = \cos\theta$, just like it was done with the prolate spheroid, we'll get:
\begin{equation}
\begin{split}
r = \ell x + \sqrt{R^2 - \ell^2 (1 - x^2)} \\
r^2 = \ell^2 x^2 + R^2 - \ell^2 (1 - x^2) + 2\ell x \sqrt{R^2 - \ell^2 (1 - x^2)} \\
r^2 = (R^2 - \ell^2) + 2\ell^2 x^2 + 2\ell x \sqrt{R^2 - \ell^2 (1 - x^2)}
\end{split}
\end{equation}

Like the prolate spheroid, $r(x)$ is an even function, because, if $x>0$, then $\cos\theta > 0$, then we are dealing with the upper hemisphere, then $\ell>0$. For $x<0$ we get $\ell < 0$. Therefore, $r(x) = r(-x)$.

The same substitution can be done with $J$.

$$
J = r^2\sin\theta\sqrt{1 + \frac{\ell^2}{r^2}(1-x^2)\left(1 + \frac{\ell x}{\sqrt{R^2 - \ell^2 (1 - x^2)}}\right)^2}
$$

The $\sin\theta$ was left on purpose precisely because $dx = -\sin\theta d\theta$, thus, it will vanish when a change of variables is done in the integral. Thus, making the substitution into the $r^2$ inside the square root, we get:
\begin{equation}
J = r^2\sin\theta\sqrt{
1 +
\frac{\ell^2(1-x^2)}{
\left[\ell x + \sqrt{R^2 - \ell^2 (1 - x^2)}\right]^2}
\left(1 + \frac{\ell x}{\sqrt{R^2 - \ell^2 (1 - x^2)}}\right)^2}
\end{equation}

For the substitution of $r^2$, we motivate the following definition:
\begin{equation} \label{suspended_hemispherical_JA_exact}
J_A = \sqrt{
1 +
\frac{\ell^2(1-x^2)}{
\left[\ell x + \sqrt{R^2 - \ell^2 (1 - x^2)}\right]^2}
\left(1 + \frac{\ell x}{\sqrt{R^2 - \ell^2 (1 - x^2)}}\right)^2}
\end{equation}

Notice $J_A$ is also an even function, as $J_A(x) = J_A(-x)$, recalling that, if $x>0$ then $\ell > 0$, and, if $x<0$ then $\ell < 0$.

Then, one can write the element of area in the following way:
\begin{equation}
J = r^2 J_A\sin\theta =
J_A\sin\theta\left[
\ell x + \sqrt{R^2 - \ell^2 (1 - x^2)} \\
\right]^2
\end{equation}

Notice that, as $\ell\to 0$ then $J_A\to 1$, meaning, $J\to r^2\sin\theta$, the sphere case.

\subsection{G-Integrals}
Recalling (\ref{IG_integrals}):
$$
G_l = \int_S r^{-l} P_l(\cos\theta) \cos\theta dS
$$

Just like it was done previously with other shapes, it is done here. Unlike the spherical or spheroidal case, integration limits happens at the upper hemisphere is $\theta\in[0, \theta_0]$. For the upper hemisphere, we have:
\begin{equation}
G_l = \int_0^{\theta_0} \int_0^{2\pi} r^{-l} P_l(\cos\theta) \cos\theta r^2\sin\theta J_A d\theta d\phi
\end{equation}

Therefore, integrating in $\phi$, we get:
\begin{equation}
G_l = 2\pi \int_0^{\theta_0} r^{-l+2} P_l(\cos\theta) \cos\theta \sin\theta J_A d\theta
\end{equation}

One shouldn't forget there also exists the bottom hemisphere. The total integration limits are $\theta\in[0, \theta_0]\cup[\pi, \pi-\theta_0]$.
\begin{equation}
G_l = 2\pi \left(\int_0^{\theta_0} + \int_{\pi - \theta_0}^{\pi}\right) r^{-l+2} P_l(\cos\theta) \cos\theta \sin\theta J_A d\theta
\end{equation}

Now, substitution $x=\cos\theta$ can be done. Then, $dx = -\sin\theta d\theta$. Also, recall integration limits from (\ref{trig_hemi_limit}), thus we get:
\begin{equation} \label{suspended_hemispherical_Gl_separated}
G_l = 2\pi \int_{\ell/r_0}^1 r^{-l+2} P_l(x) x J_A dx
+ 2\pi \int_{-1}^{-\ell/r_0} r^{-l+2} P_l(x) x J_A dx
\end{equation}

Recalling $P_n(-x) = (-1)^n P_n(x)$, we can do a substitution $x = -y$ in the second integral of expression (\ref{suspended_hemispherical_Gl_separated}). Thus:
\begin{equation}
G_l = 2\pi \int_{\ell/r_0}^1 r^{-l+2} P_l(x) x J_A(x) dx
+ 2\pi \int_{\ell/r_0}^{1} (-1)^l r^{-l+2} P_l(y) (-y) J_A(-y) dy
\end{equation}

Therefore, using $J_A(x) = J_A(-x)$:
\begin{equation}
G_l = 2\pi \int_{\ell/r_0}^1 r^{-l+2} P_l(x) x J_A(x) dx
+ 2\pi(-1)^{l+1} \int_{\ell/r_0}^{1} r^{-l+2} P_l(x) x J_A(x) dx
\end{equation}

Thus:
\begin{equation} \label{suspended_hemispherical_Gl_exact}
\begin{split}
G_{2l} &= 0 \\
G_{2l+1} &= 4\pi \int_{\ell/r_0}^1 r^{-2l+1} P_{2l+1}(x) x J_A dx
\end{split}
\end{equation}

\subsection{I-Integrals}
Having found an integral expression for $G_l$, we seek attention to $I_{ij}$. Again, looking at (\ref{IG_integrals}):
$$
I_{ij} = \int_S r^{-i-j-2} P_i(\cos\theta) P_j(\cos\theta) dS
$$

Which becomes:
\begin{equation}
I_{ij} = 2\pi\left(\int_0^{\theta_0} + \int_{\pi-\theta}^\pi\right) r^{-i-j} P_i(\cos\theta) P_j(\cos\theta) \sin\theta J_A d\theta
\end{equation}

Substitution $x=\cos\theta$ yields:
\begin{equation}
I_{ij} =
2\pi\int_{\ell/r_0}^1 r^{-i-j} P_i(x) P_j(x) J_A dx
+2\pi\int_{-1}^{-\ell/r_0} r^{-i-j} P_i(x) P_j(x) J_A dx
\end{equation}

Making transformation $x \leftarrow -x$ in the second integral:
\begin{equation}
I_{ij} =
2\pi\int_{\ell/r_0}^1 r^{-i-j} P_i(x) P_j(x) J_A dx
+2\pi\int_{\ell/r_0}^{1} r^{-i-j} (-1)^i P_i(x) (-1)^j P_j(x) J_A dx
\end{equation}

Or:
\begin{equation}
I_{ij} =
2\pi\int_{\ell/r_0}^1 r^{-i-j} P_i(x) P_j(x) J_A dx
+2\pi(-1)^{i+j}\int_{\ell/r_0}^{1} r^{-i-j} P_i(x) P_j(x) J_A dx
\end{equation}

Which simplifies to:
\begin{equation} \label{suspended_hemispherical_Iij_exact}
I_{ij} =
4\pi\int_{\ell/r_0}^1 r^{-i-j} P_i(x) P_j(x) J_A dx,
\quad\quad i+j\in\{0, 2, 4, 6, 8, \dots\}
\end{equation}

And $I_{ij} = 0$ otherwise.

\section{Approximation $\ell\ll R$}
It would only be a matter of substituting $J_A$ from equation (\ref{suspended_hemispherical_JA_exact}) into (\ref{suspended_hemispherical_Gl_exact}) or (\ref{suspended_hemispherical_Iij_exact}) However, $J_A$ form is too complicated to yield a direct integration, thus, we seek to approximate it.

\subsection{Approximating $r^2$}
We need to find an approximation for $r^2$ and for $J$. One possibility, is to Taylor expand with respect to $\ell$, centered at $\ell=0$. Doing that for $r^2$, one can get:
\begin{equation}
\begin{split}
r^2 &= (R^2 - \ell^2) + 2\ell^2 x^2 + 2\ell x \sqrt{R^2 - \ell^2 (1 - x^2)} \\
\frac{dr^2}{d\ell} &= -2\ell + 4\ell x^2 + 2x\sqrt{R^2 - \ell^2(1 - x^2)} + \frac{1}{2}\frac{2x\ell\cdot (-2)\ell(1-x^2)}{\sqrt{R^2 - \ell^2 (1 - x^2)}} \\
\frac{dr^2}{d\ell} &= -2\ell + 4\ell x^2 + 2x\sqrt{R^2 - \ell^2(1 - x^2)} - \frac{2x\ell^2(1-x^2)}{\sqrt{R^2 - \ell^2 (1 - x^2)}} \\
\left.\frac{dr^2}{d\ell}\right|_{\ell=0} &= 2x\sqrt{R^2} = 2xR \\
\end{split}
\end{equation}

Now we find the second derivative:
\begin{equation}
\begin{split}
\frac{d^2 r^2}{d\ell^2} &= \frac{d}{d\ell}\left[ 2\ell + 4\ell x^2 + 2x\sqrt{R^2 - \ell^2(1 - x^2)} - \frac{2x\ell^2(1-x^2)}{\sqrt{R^2 - \ell^2 (1 - x^2)}} \right] \\
\frac{d^2 r^2}{d\ell^2} &= -2 + 4 x^2
+ \frac{1}{2}\frac{2x\cdot (-2)\ell(1 - x^2)}{\sqrt{R^2 - \ell^2(1 - x^2)}}
- \frac{4x\ell(1-x^2)}{\sqrt{R^2 - \ell^2 (1 - x^2)}}
- \frac{2x\ell^2(1-x^2)\cdot \frac{-1}{2} (-2)\ell(1 - x^2)}{\sqrt{R^2 - \ell^2 (1 - x^2)}^3}  \\
\frac{d^2 r^2}{d\ell^2} &= -2 + 4x^2
- \frac{2x\ell(1 - x^2)}{\sqrt{R^2 - \ell^2(1 - x^2)}}
- \frac{4x\ell(1-x^2)}{\sqrt{R^2 - \ell^2 (1 - x^2)}}
- \frac{2x\ell^3(1-x^2)^2}{\sqrt{R^2 - \ell^2 (1 - x^2)}^3}  \\
\frac{d^2 r^2}{d\ell^2} &= -2 + 4x^2
- \frac{6x\ell(1 - x^2)}{\sqrt{R^2 - \ell^2(1 - x^2)}}
- \frac{2x\ell^3(1-x^2)^2}{\sqrt{R^2 - \ell^2 (1 - x^2)}^3}  \\
\end{split}
\end{equation}

Therefore, we conclude:
\begin{equation}
\left. r^2(x)\right|_{\ell = 0} = R^2, \quad\quad
\left.\frac{dr^2}{d\ell}\right|_{\ell=0} = 2xR,\quad\quad
\left.\frac{d^2 r^2}{d\ell^2}\right|_{\ell=0} = -2 + 4x^2
\end{equation}

And thus, the Taylor approximation for $r(x)$ is:
\begin{equation}
r^2(x) = R^2 + 2xR\ell + \left(4x^2 - 2\right)\ell^2 + O(\ell^3)
\end{equation}

\subsection{Approximating $r^n$}
Recall:
\begin{equation}
r = \ell x + \sqrt{R^2 - \ell^2 (1 - x^2)}
\end{equation}

The first derivative and second derivative:
\begin{equation}
\begin{split}
\frac{dr^n}{d\ell} &= nr^{n-1}\frac{dr}{d\ell} \\
\frac{d^2 r^n}{d\ell^2} &= n(n-1)r^{n-2}\left(\frac{dr}{d\ell}\right)^2 + nr^{n-1}\frac{d^2 r}{d\ell^2} \\
\end{split}
\end{equation}

Thus, the following derivatives are required:
\begin{equation}
\begin{split}
\frac{dr}{d\ell} &= x - \frac{\ell(1-x^2)}{\sqrt{R^2 - \ell^2 (1-x^2)}} \\
\frac{d^2r}{d\ell^2} &= -\frac{1-x^2}{\sqrt{R^2 - \ell^2 (1-x^2)}} - \frac{\ell^2(1 - x^2)^2}{\left[R^2 - \ell^2(1-x^2)\right]^{3/2}}
\end{split}
\end{equation}

Therefore, evaluating them at $\ell=0$, yields:
\begin{equation}
\left.\frac{dr}{d\ell}\right|_{\ell=0} = x,\quad\quad
\left.\frac{d^2 r}{d\ell^2}\right|_{\ell=0} = -\frac{1-x^2}{R}
\end{equation}

Thus:
\begin{equation}
\left.\frac{dr^n}{d\ell}\right|_{\ell=0} = nxR^{n-1},\quad\quad
\left.\frac{d^2 r^n}{d\ell^2}\right|_{\ell=0} = n(n-1)R^{n-2}x^2 - nR^{n-1}\frac{1-x^2}{R}
\end{equation}

Re-writing:
\begin{equation} \label{suspended_hemispherical_rn_Taylor_second}
\begin{split}
\left.\frac{dr^n}{d\ell}\right|_{\ell=0} &= nxR^{n-1} \\
\left.\frac{d^2 r^n}{d\ell^2}\right|_{\ell=0} &= nR^{n-2}\left(nx^2 - 1\right)
\end{split}
\end{equation}

\subsection{Approximating $J_A$}
If a generic function $f: \mathbb{R}\to\mathbb{R}$ is written as: $f(x) = \sqrt{1 + g(x)}$, then:
$$
f'(x) = \frac{1}{2}\frac{g'(x)}{\sqrt{1 + g(x)}} = \frac{1}{2}\frac{g'(x)}{f(x)}
$$

Looking at $J_A$ from (\ref{suspended_hemispherical_JA_exact}), we notice $J_A$ as a function of $\ell$ has similar behavior as the above function. We define:
\begin{equation}
\begin{split}
c(\ell) &= \sqrt{R^2 - \ell^2(1 - x^2)} \\
a(\ell) &= \frac{\ell^2(1-x^2)}{\left[\ell x + \sqrt{R^2 - \ell^2(1 - x^2)}\right]^2} = \frac{\ell^2(1 - x^2)}{\left[\ell x +  c(\ell)\right]^2} \\
b(\ell) &= \frac{\ell x}{\sqrt{R^2 - \ell^2 (1 - x^2)}} = \frac{\ell x}{c(\ell)} \\
J_A &= \sqrt{1 + a(\ell)\left(1 + b(\ell)\right)^2}
\end{split}
\end{equation}

Therefore, the first derivative:
\begin{equation}
\begin{split}
\frac{d J_A}{d \ell} &= \frac{1}{2J_A}\left[
\frac{da}{d\ell}\left(1 + b(\ell)\right)^2
+ 2a(\ell) \frac{db}{d\ell} \left(1 + b(\ell)\right)
\right] \\
\frac{d J_A}{d \ell} &= \frac{1}{2J_A}\left(1 + b(\ell)\right)\left[
\frac{da}{d\ell}\left(1 + b(\ell)\right)
+ 2a(\ell) \frac{db}{d\ell}
\right]
\end{split}
\end{equation}

The second derivative:
\begin{equation}
\begin{split}
\frac{d^2 J_A}{d \ell^2} = -&\frac{1}{2J_A^2}\frac{dJ_A}{d\ell}
\left(1 + b(\ell)\right)\left[
\frac{da}{d\ell}\left(1 + b(\ell)\right)
+ 2a(\ell) \frac{db}{d\ell}
\right] \\
+& \frac{1}{2J_A}\frac{db}{d\ell}\left[
\frac{da}{d\ell}\left(1 + b(\ell)\right)
+ 2a(\ell) \frac{db}{d\ell}
\right] \\
+& \frac{1}{2J_A}\left(1 + b(\ell)\right)\left[
\frac{d^2 a}{d\ell^2}\left(1 + b(\ell)\right)
+ \frac{da}{d\ell}\frac{db}{d\ell}
+ 2\frac{da}{d\ell} \frac{db}{d\ell}
+ 2a(\ell) \frac{d^2b}{d\ell^2}
\right] \\
\end{split}
\end{equation}

Now, we find the derivatives of $a$, $b$, and $c$, starting with $c$:
\begin{equation}
\begin{split}
\frac{dc}{d\ell} &= \frac{1}{2}\frac{(-2)\ell(1 - x^2)}{\sqrt{R^2 - \ell^2(1-x^2)}} = -\frac{\ell(1-x^2)}{c(\ell)} \\
\frac{d^2 c}{d\ell^2} &= \frac{d}{d\ell}\left[-\frac{\ell(1-x^2)}{c(\ell)}\right] = -\frac{1 - x^2}{c} + \frac{\ell(1-x^2)}{c^2}\frac{dc}{d\ell} \\
\frac{d^2 c}{d\ell^2} &= -\frac{1 - x^2}{c} + \frac{\ell(1-x^2)}{c^2}\frac{\ell(1 - x^2)}{c} = -\frac{1 - x^2}{c} + \frac{\ell^2(1 - x^2)^2}{c^3} \\
\end{split}
\end{equation}

Then, derivative of $a$:
\begin{equation}
\begin{split}
\frac{da}{d\ell} &= \frac{2\ell(1-x^2)}{\left[\ell x + c(\ell)\right]^2}
-\frac{2\ell^2(1-x^2)}{\left[\ell x + c(\ell)\right]^3}\left(x - \frac{\ell(1 - x^2)}{c(\ell)}\right) \\
\frac{da}{d\ell} &= \frac{2\ell(1-x^2)}{\left[\ell x + c(\ell)\right]^2}
-\frac{2\ell^2(1-x^2)}{\left[\ell x + c(\ell)\right]^3}\left(x - \frac{\ell(1 - x^2)}{c(\ell)}\right) \\
\end{split}
\end{equation}

Then, derivative of $b$:
\begin{equation}
\begin{split}
\frac{db}{d\ell} &= \frac{d}{d\ell}\left[\frac{\ell x}{c}\right] = \frac{x}{c} - \frac{\ell x}{c^2}\cdot\frac{\ell(1 - x^2)}{c} \\
\frac{db}{d\ell} &= \frac{x}{c} - \frac{\ell^2 x(1-x^2)}{c^3},\\
\frac{d^2b}{d\ell^2} &= -\frac{x}{c^2}\frac{dc}{d\ell} + \frac{2\ell x(1-x^2)}{c^3} + \frac{4\ell^2 x(1-x^2)}{c^4}\frac{dc}{d\ell} \\
\end{split}
\end{equation}

Considering that $c(0) = R$, then:
\begin{equation}
\begin{split}
\left.\frac{dc}{d\ell}\right|_{\ell=0} = 0,\quad\quad
&\left.\frac{d^2 c}{d\ell^2}\right|_{\ell=0} = -\frac{1-x^2}{c} = -\frac{1-x^2}{R}, \\
\left.\frac{da}{d\ell}\right|_{\ell=0} = 0,\quad\quad
&\left.\frac{d^2 a}{d\ell^2}\right|_{\ell=0} = \frac{2(1-x^2)}{c^2} = \frac{2(1 - x^2)}{R^2}, \\
\left.\frac{db}{d\ell}\right|_{\ell=0} = \frac{x}{c} = \frac{x}{R},\quad\quad
&\left.\frac{d^2 b}{d\ell^2}\right|_{\ell=0} = 0 \\
\end{split}
\end{equation}

Therefore, evaluating $a(0) = b(0) = 0$, one can get:
\begin{equation} \label{suspended_hemispherical_JA_Taylor}
\left.\frac{dJ_A}{d\ell}\right|_{\ell=0} = 0,\quad\quad
\left.\frac{d^2 J_A}{d\ell^2}\right|_{\ell = 0} =
\left.\frac{d^2 a}{d\ell^2}\right|_{\ell = 0} = \frac{2(1-x^2)}{R^2}
\end{equation}

Thus, the Taylor expansion for $J_A$ is:
\begin{equation}
J_A = 1 + \frac{2(1 - x^2)}{R^2}\ell^2 + O(\ell^3)
\end{equation}

\subsection{Approximating $J$}
Recall we have defined: $J = r^2\sin\theta J_A$. Therefore:
\begin{equation}
\frac{dJ}{d\ell} = \frac{dr^2}{d\ell}\sin\theta J_A + r^2\sin\theta\frac{dJ_A}{d\ell} \\
\end{equation}

Therefore:
\begin{equation}
\begin{split}
\left.\frac{dJ}{d\ell}\right|_{\ell=0} &= 2xR\sin\theta J_A + R^2\sin\theta \cdot 0 \\
\left.\frac{dJ}{d\ell}\right|_{\ell=0} &= 2xR\sin\theta J_A
\end{split}
\end{equation}

Doing the same for second derivative:
\begin{equation}
\frac{dJ}{d\ell} =
\frac{d^2 r^2}{d\ell^2}\sin\theta J_A +
\frac{dr^2}{d\ell}\frac{d J_A}{d\ell}\sin\theta +
\frac{dr^2}{d\ell}\frac{dJ_A}{d\ell}\sin\theta +
r^2\sin\theta\frac{d^2 J_A}{d\ell^2}
\end{equation}

Therefore:
\begin{equation}
\begin{split}
\left.\frac{d^2 J}{d\ell^2 }\right|_{\ell=0} &= 2J_A\sin\theta + 2xR\cdot 0\cdot\sin\theta + 2xR\cdot 0\cdot\sin\theta + R^2\sin\theta\cdot\frac{2(1-x^2)}{R^2} \\
\left.\frac{d^2 J}{d\ell^2 }\right|_{\ell=0} &= 2J_A\sin\theta + R^2\sin\theta\cdot\frac{2(1-x^2)}{R^2}
\end{split}
\end{equation}

We now write $J$ up to second order:
\begin{equation}
J = \left[R^2 + 2xR\ell + 2(1-x^2)\ell^2 + O(\ell^3)\right]\sin\theta
\end{equation}

\subsection{Approximating $r^n J_A$}
This approximation is fairly important, because it is precisely the integrand of the $G_l$ and $I_{ij}$ integrals. We proceed by the same way:
\begin{equation}
\frac{d}{d\ell}\left[r^n J_A\right]_{\ell=0} =
\left[\frac{dr^n}{d\ell}J_A + r^n\frac{dJ_A}{d\ell}\right]_{\ell=0}
\end{equation}

First, notice $r^n|_{\ell=0} = R^n$, precisely because $r|_{\ell=0} = R$. That can also be seen by direct substitution of $\ell=0$ in the expression for $r^n$. Now, we seek to calculate the first derivative. We've already calculated the derivatives, which can be seen from equations (\ref{suspended_hemispherical_JA_Taylor}) and (\ref{suspended_hemispherical_rn_Taylor_second}).
\begin{equation}
\frac{d}{d\ell}\left[r^n J_A\right]_{\ell=0} = nx R^{n-1}
\end{equation}

Now, to second order:
\begin{equation}
\frac{d}{d\ell}\left[r^n J_A\right]_{\ell=0} =
\left[\frac{d^2r^n}{d\ell^2}J_A + 2\frac{dr^n}{d\ell}\frac{J_A}{d\ell} + r^n\frac{d^2J_A}{d\ell^2}\right]_{\ell=0}
\end{equation}

Again, using equations (\ref{suspended_hemispherical_JA_Taylor}) and (\ref{suspended_hemispherical_rn_Taylor_second}), we arrive at:
\begin{equation}
\begin{split}
\frac{d^2}{d\ell^2}\left[r^n J_A\right]_{\ell=0} &=
nR^{n-2}\left(nx^2 - 1\right) + R^n\frac{2(1 - x^2)}{R^2} \\
\frac{d^2}{d\ell^2}\left[r^n J_A\right]_{\ell=0} &=
R^{n-2}\left[n\left(nx^2 - 1\right) + 2(1 - x^2)\right] \\
\frac{d^2}{d\ell^2}\left[r^n J_A\right]_{\ell=0} &=
R^{n-2}\left[n^2 x^2 - n + 2 - 2x^2\right] \\
\frac{d^2}{d\ell^2}\left[r^n J_A\right]_{\ell=0} &=
R^{n-2}\left[\left(n^2 - 2\right) x^2 - \left(n-2\right)\right] \\
\end{split}
\end{equation}

Therefore, the final answer:
\begin{equation} \label{suspended_hemispherical_rnJA_second_order}
r^n J_A = R^n + nx R^{n-1}\ell +
\left[\left(n^2 - 2\right) x^2 - \left(n-2\right)\right] R^{n-2} \ell^2
+ O(\ell^3)
\end{equation}

\subsection{G-Integrals}
To solve the integrals, recall equation (\ref{suspended_hemispherical_Gl_exact}), written below:
\begin{equation}
G_{2l+1} = 4\pi \int_{\ell/r_0}^1 r^{-2l+1} P_{2l+1}(x) x J_A dx
\end{equation}

Our goal would be to Taylor expand the integrand $r^{-2l+1} P_{2l+1}(x) x J_A(x)$. Because the expansion is around $\ell = 0$, the terms $xP_l(x)$ won't contribute, because they do not depend explicitly on $\ell$. The remaining terms $r^{-l+2} J_A$ is a case of (\ref{suspended_hemispherical_rnJA_second_order}), for $n=-l+2$. Therefore, making the substitutions, one can get:
\begin{equation}
G_l = 4\pi \int_{\ell/r_0}^1  P_l(x) x
\left\{R^n + nx R^{n-1}\ell +
\left[\left(n^2 - 2\right) x^2 - \left(n-2\right)\right] R^{n-2} \ell^2\right\}
dx
\end{equation}

Therefore, one is left with the following integrals to compute:
\begin{equation} \label{suspended_hemispherical_Gl_hllR_integral_solution}
\begin{split}
G_l &= 4\pi R^{-l+2} \int_{\ell/r_0}^1  P_l(x) x dx \\
&+ 4\pi (-l+2) R^{-l+1}\ell \int_{\ell/r_0}^1  P_l(x) x^2 dx \\
&+ (l^2 - 4l + 2)\frac{4\pi}{R^l}\ell^2 \int_{\ell/r_0}^1 x^3 P_l(x) dx \\
&+ \frac{4l\pi}{R^l}\ell^2 \int_{\ell/r_0}^1 x P_l(x) dx \\
\end{split}
\end{equation}

\subsubsection{Moments of Legendre Polynomials}
Making reference of (\ref{suspended_hemispherical_Gl_hllR_integral_solution}), one requires to calculate first, second and third moments of the legencre polynomials, that is:
\begin{equation}
\int x^m P_n(x) dx, \quad m\in\{1, 2, 3\}
\end{equation}

A recurrency relation can be useful in computing integrals:
\begin{equation}
(2n+1) P_n(x) = \frac{d}{dx}\left[P_{n+1}(x) - P_{n-1}(x)\right]
\quad\implies\quad
\int P_n(x) dx = \frac{P_{n+1}(x) - P_{n-1}(x)}{2n+1} + C
\end{equation}

To calculate first moments of the Legendre polynomials, all one has to do, is to integrate by parts:
\begin{equation}
\begin{split} \label{Legendre_first_moment}
\int xP_n(x) dx &= x \int P_n(x)dx - \int
\frac{P_{n+1}(x) - P_{n-1}(x)}{2n+1}dx \\
\int xP_n(x) dx &= x\frac{P_{n+1}(x) - P_{n-1}(x)}{2n+1} - \frac{1}{2n+1}\left[\frac{P_{n+2}(x) - P_n(x)}{2(n+1)+1} - \frac{P_n(x) - P_{n-2}(x)}{2(n-1)+1}\right] + C \\
\int xP_n(x) dx &= x\frac{P_{n+1}(x) - P_{n-1}(x)}{2n+1} - \frac{1}{2n+1}\left[\frac{P_{n+2}(x) - P_n(x)}{2n+3} - \frac{P_n(x) - P_{n-2}(x)}{2n-1}\right] + C \\
\end{split}
\end{equation}

Once the first moment is calculated, it can be used to calculate the second moment, again using integration by parts:
\begin{equation}
\begin{split}
\int x^2 P_n(x) dx &= x^2 \int P_n(x)dx - \int 2x\frac{P_{n+1}(x) - P_{n-1}(x)}{2n+1}dx \\
\int x^2 P_n(x) dx &= x^2 \int P_n(x)dx - \frac{2}{2n+1}\left[\int x P_{n+1}(x)dx - \int xP_{n-1}(x) dx\right] \\
\end{split}
\end{equation}

The third moment can be found by the same procedure. An analytical exact solution for any moment can be found, though, the expressions start to grow in size pretty quickly.

\subsubsection{Value of $G_1$}
Given a general expression will be needlessly complicated, we'll focus on $l=1$, that is, $G_1$. The Legendre polynomials for these orders are $P_1(x) = x$. We'll proceed by calculating moments of $P_1$.
\begin{equation}
\begin{split}
\int x P_1(x)dx &= \int x\cdot x dx = \frac{x^3}{3} + C \\
\int x^2 P_1(x)dx &= \int x^2\cdot x dx = \frac{x^4}{4} + C \\
\int x^3 P_1(x)dx &= \int x^3\cdot x dx = \frac{x^5}{5} + C \\
\end{split}
\end{equation}

Using expression (\ref{suspended_hemispherical_Gl_hllR_integral_solution}), we find for $G_1$:
\begin{equation}
\begin{split}
G_1 &= 4\pi R^1 \left.\frac{x^3}{3}\right|_{\ell/r_0}^1
+4\pi \ell \left.\frac{x^4}{4}\right|_{\ell/r_0}^1
-\frac{4\pi}{R}\ell^2 \left.\frac{x^5}{5}\right|_{\ell/r_0}^1
-3\frac{4\pi}{R}\ell^2 \left.\frac{x^4}{4}\right|_{\ell/r_0}^1
\\
G_1 &= 4\pi R \frac{1}{3}\left[1 - \left(\frac{\ell}{r_0}\right)^3\right]
+ 4\pi \ell \frac{1}{4}\left[1 - \left(\frac{\ell}{r_0}\right)^4\right]\\
&- \frac{4\pi}{R} \ell^2 \frac{1}{5}\left[1 - \left(\frac{\ell}{r_0}\right)^5\right]
+ \frac{4\pi}{R} \ell^2 \frac{1}{3}\left[1 - \left(\frac{\ell}{r_0}\right)^3\right] \\
\end{split}
\end{equation}

Or, rewriting:
\begin{equation} \label{suspended_hemispherical_G1_ellllR}
\begin{split}
G_1 &= \frac{4}{3}\pi R\left[1 - \left(\frac{\ell}{r_0}\right)^3\right]
+ \pi\ell\left[1 - \left(\frac{\ell}{r_0}\right)^4\right] \\
&- \frac{4\pi}{5R} \ell^2\left[1 - \left(\frac{\ell}{r_0}\right)^5\right]
+ \frac{4\pi}{3R} \ell^2\left[1 - \left(\frac{\ell}{r_0}\right)^3\right] \\
\end{split}
\end{equation}

\subsection{I-Integrals}
Substituting $r^{-i-j} J_A$ as in equation (\ref{suspended_hemispherical_rnJA_second_order}) into (\ref{suspended_hemispherical_Iij_exact}) for $n=-i-j$, one finds the expression for $I_{ij}$, valid for $i+j$ even, is:
\begin{equation} \label{suspended_hemispherical_Iij_ellllR}
\begin{split}
I_{ij} &=
\frac{4\pi}{R^{i+j}}
\int_{\ell/r_0}^1 P_i(x) P_j(x) dx \\
&-(i+j)\frac{4\pi}{R^{i+j+1}}\ell
\int_{\ell/r_0}^1 x P_i(x) P_j(x) dx \\
&+\left[(i+j)^2 - 2\right]\frac{4\pi}{R^{i+j+2}}\ell^2
\int_{\ell/r_0}^1 x^2 P_i(x) P_j(x) dx \\
&+\left[(i+j) + 2\right]\frac{4\pi}{R^{i+j+2}}\ell^2
\int_{\ell/r_0}^1 P_i(x) P_j(x) dx \\
\end{split}
\end{equation}

The required Legendre moments are more complicated to calculate explicitly:
\begin{equation}
\int x^m P_i(x) P_j(x) dx
\end{equation}

Thus, we'll directly evaluate $I_{00}$ and $I_{11}$. We already know that $I_{01} = I_{10} = 0$, because $1+0$ is odd. Moments as before can be calculated easily:
\begin{equation}
\begin{split}
\int x P_0(x) P_0(x) dx &= \int x dx = \frac{x^2}{2} + C \\
\int x^2 P_0(x) P_0(x) dx &= \int x^2 dx = \frac{x^3}{3} + C \\
\int x P_1(x) P_1(x) dx &= \int x\cdot x^2 dx = \frac{x^4}{4} + C \\
\int x^2 P_1(x) P_1(x) dx &= \int x^2\cdot x^2 dx = \frac{x^5}{5} + C \\
\end{split}
\end{equation}

Using expression (\ref{suspended_hemispherical_Iij_ellllR}) to calculate $I_{00}$, one gets:
\begin{equation}
I_{00} = 4\pi \cdot\left. 1\right|_{\ell/r_0}^1
-0
-2\cdot \frac{4\pi}{R^2}\ell^2\left.\frac{x^3}{3}\right|_{\ell/r_0}^1
+2\cdot \frac{4\pi}{R^2}\ell^2\cdot\left. 1\right|_{\ell/r_0}^1
\end{equation}

Which evaluates into:
\begin{equation} \label{suspended_hemispherical_I00_ellllR}
I_{00} = 4\pi \cdot\left[1 - \frac{\ell}{r_0}\right]
-\frac{8\pi}{3R^2}\ell^2\left[1 - \left(\frac{\ell}{r_0}\right)^3\right]
+\frac{8\pi}{R^2}\ell^2\left[1 - \frac{\ell}{r_0}\right]
\end{equation}

The same thing can be done for $I_{11}$:
\begin{equation}
I_{11} = \frac{4\pi}{R^2}\left. \frac{x^3}{3}\right|_{\ell/r_0}^1
-2\cdot \frac{4\pi}{R^3}\ell\left. \frac{x^4}{4}\right|_{\ell/r_0}^1
+2\cdot \frac{4\pi}{R^4}\ell^2\left.\frac{x^5}{5}\right|_{\ell/r_0}^1
+4\cdot \frac{4\pi}{R^4}\ell^2\left.\frac{x^3}{3}\right|_{\ell/r_0}^1
\end{equation}

Which evaluates into:
\begin{equation} \label{suspended_hemispherical_I11_ellllR}
I_{11} = \frac{4\pi}{3R^2}\left[1 - \left(\frac{\ell}{r_0}\right)^3\right]
-\frac{2\pi}{R^3}\ell\left[1 - \left(\frac{\ell}{r_0}\right)^4\right]
+\frac{8\pi}{5R^4}\ell^2\left[1 - \left(\frac{\ell}{r_0}\right)^5\right]
+\frac{16\pi}{3R^4}\ell^2\left[1 - \left(\frac{\ell}{r_0}\right)^3\right]
\end{equation}

\end{document}